\date{}
\newtheorem{theorem}{Theorem}
\newtheorem{lemma}[theorem]{Lemma}
\newtheorem{corollary}[theorem]{Corollary}
\newtheorem{definition}[theorem]{Definition}
\newtheorem{example}[theorem]{Example}
\newtheorem{remark}[theorem]{Remark}
\DeclareMathOperator{\e}{e}
\DeclareMathOperator{\w}{w}
\definecolor{darkblue}{rgb}{0, .07, .5}
\definecolor{darkred}{rgb}{0.5,0,0}
 \definecolor{mahogany}{rgb}{0.65, 0., 0.5}
\newcommand{\bxo}{\boldsymbol{x_1}}
\newcommand{\bxoo}{\boldsymbol{x_2}}
\newcommand{\bmu}{\boldsymbol{\mu}}
\newcommand{\blambda}{\boldsymbol{\lambda}}
\newcommand{\RN}[1]{%
  \textup{\uppercase\expandafter{\romannumeral#1}}%
}
\DeclareMathOperator*{\argmin}{arg\,min} 
\title{Transmission of a Bit over a Discrete Poisson Channel with Memory} 
\author{Niloufar~Ahmadypour and Amin~Gohari,~\IEEEmembership{Senior Member,~IEEE}
\thanks{This work was supported in part by INSF grant 96015883 and INSF grant on ``Nanonetwork Communications".

Niloufar Ahmadypour is with the Department of Electrical Engineering, Sharif University of Technology, Tehran, Iran (e-mail: ahmadypour\_n@ee.sharif.edu).
 
Amin Gohari is with the Tehran Institute for Advanced Studies (TeIAS),
Tehran, Iran (email: a.gohari@teias.institute).

A short version of this paper was presented at the 2020 IEEE Information Theory Workshop (ITW 2020).
}
}
\begin{document}
\maketitle
\begin{abstract}

A coding scheme for transmission of a bit  maps a given bit to a sequence of channel inputs (called the codeword associated with the transmitted bit). In this paper, we study the problem of designing the best code for a discrete Poisson channel with memory (under peak-power and  total-power constraints). The outputs of a discrete Poisson channel with memory are Poisson distributed random variables with a mean comprising of a fixed additive noise and a linear combination of past input symbols. Assuming a maximum-likelihood (ML) decoder, we search for a codebook that has the smallest possible error probability. This problem is challenging because  error probability of a code does not have a closed-form analytical expression. For the case of having only a total-power constraint,  the optimal code structure is obtained, provided that the blocklength is greater than the memory length of the channel. For the case of having only a peak-power constraint, the optimal code is derived for arbitrary memory and blocklength in the high-power regime. For the case of having both the peak-power and total-power constraints, the optimal code is derived for memoryless Poisson channels when both the total-power and the peak-power bounds are large.
\end{abstract}
\begin{IEEEkeywords}
Maximum-likelihood decoding, discrete Poisson channel, minimum
average error probability, optimal codes, molecular communication
\end{IEEEkeywords}
\section{Introduction}

Discrete Poisson channels are widely used to model molecular and optical communication channels \cite{ref27, ref28, ref30, amini,rev22,rev21, lapidoth2009,lapidoth2011, cao,cao2}.  In particular, the Poisson distribution is used to model one of the main types of noise in molecular communication, namely the counting noise. Moreover, discrete Poisson channels with memory are also used to model molecular transmitters \cite[Sec. 3.6]{Bressloff}\cite{arjmandi1, mosayebi}.  Previous works in optical and molecular communication investigate the capacity of discrete Poisson channels (e.g. \cite{rev22,rev21, lapidoth2009,lapidoth2011, cao,cao2,amini}).  Also various modulation techniques are proposed for this channel \cite{jamali,mmm1,mmm2,mmm3}. For instance,  \cite{jamali} designs codes without any access to the channel state information.

Three key parameters in the problem of communicating a message over  a discrete Poisson channel are (i) the length of the message, (ii) the number of times the channel is used, and (iii) the error probability of the transmission. On the one hand, the classical notion of capacity assumes transmission of a long message of length $NR$ bits over $N$ uses of the channel with a vanishing probability of error. The receiver waits to receive the entire $N$ output symbols in order to decode the $NR$ message bits. On the other hand, practical  modulation schemes usually code a few bits of information over a few channel uses with a given probability of error in each transmission block; decoding  is  performed  over  smaller sequences of output symbols (the receiver can decode each transmission block separately). Unlike capacity-achieving codes, the error probability of a modulation scheme cannot converge to zero when the number of channel uses is limited.  In this paper, we consider the latter setting and study the  problem of finding an \emph{optimal}  coding  scheme to send a message of length one bit over a block of length $N$  with the lowest possible probability of error. Since we consider the problem of transmission of a bit $B\in\{1,2\}$, the receiver has to distinguish between two possibilities based on its received output sequence. 

Once the two codewords are designed and fixed, the problem at the receiver reduces to that of a hypothesis testing problem, the optimal decoder is an ML decoder, and the probability of error can be computed. We are interested to find the best choice of codewords that minimize the error probability of the ML decoder

The problem of communicating a bit using a fixed number of channel uses is of relevance for the following reason: firstly, it ensures that the transmitted bit is decoded after receiving a fixed number of output symbols; in order words, if we repeatedly use the coding  scheme to send multiple bits over consecutive blocks, the receiver can decode the transmitted bits one by one, with limited decoding delay and complexity.  Secondly, some applications in molecular communication require only the transmission of a few bits of information (low rate communication). In diffusion-based molecular communication, carriers of information are  molecules that physically travel from transmitters to receivers. The diffusion process is usually slow and low rate communication is of relevance in this context. Several applications of data transmission at low rates are reviewed in \cite{felice}. For instance, in targeted drug delivery applications there is a control node that orders another center to release the drug or stop it (a one-bit  message) \cite{tcplike}. Moreover, as argued in the literature, molecular transmitters  and receivers may be resource-limited  devices, and utilizing sophisticated coding schemes with long blocklengths may not be practically feasible in molecular communication. Therefore, considering codes with small blocklengths  (and therefore low rates) are of particular relevance to molecular communication. Very short blocklengths are also of relevance in certain wireless applications (see \cite{4ref4, ultrasmall} for a list of applications). 

The authors in \cite{taherzade} study the transmission of a single symbol over a memoryless additive white Gaussian noise (AWGN) channel where the decoding delay is assumed to be zero. The authors in \cite{ultrasmall,flip}, consider the transmission of up to two bits of information over memoryless BEC, BSC and Z-channels and derive the optimal codes of arbitrary blocklength (the problem is open for more than two bits of information). The key difficulty in finding the optimal codewords is that the exact error probability does not have a nice expression. It is shown in \cite{edc} that taking ``the minimum distance" of a code as a proxy for its error probability  can be misleading. Thus, one has to consider the exact structure of the codes, and cannot simply work with certain code parameters (as commonly adopted in coding theory). Similarly, inequalities on the error probability such as the one  by Gallager on the error probability \cite[Ex. 5.19]{gallager} are not immediately helpful.

In this paper, we consider the problem of communicating a bit over discrete Poisson channels. In other words, we consider the same problem as considered in \cite{ultrasmall} for discrete Poisson channels. However, our setup is different from that of  \cite{ultrasmall} as we also consider channels with memory. A challenge in the study of the Poisson channels is that we are using the optimal maximum-likelihood decoding at the receiver. It turns out that to follow the ML decoding rule for a Poisson channel with memory, the decoder needs to take a threshold and compare a weighted linear sum of its received sequence with that threshold. In other words, because of the memory of the channel, a transmission at a certain time slot will affect multiple received symbols at the receiver. This complicates the expression of the exact error probability which will also be in terms of Poisson tails that do not have explicit analytical closed forms. Since we are interested in the optimal codewords, an approximation of the Poisson tail with a Gaussian tail can be suboptimal. The problem is further complicated by the fact that the first derivative and second derivative conditions are complicated-looking expressions and are not easily amenable to analysis.  Furthermore, we show that the problem of finding the optimal code is a nonconvex optimization problem. Thus, to circumvent local minima, in one of the proofs we relax the optimization problem of finding the best code in such a way that the local minima are eliminated while the global minimum is preserved.

The main results of this paper are as follows:
\begin{enumerate}
\item 
For blocklengths larger than channel memory, we provide an optimal code under the total-power  constraint. A partial result is provided for the case of blocklengths less than the channel memory. 
\item 
For any arbitrary blocklength $N$ the optimal code is derived under the peak-power  constraint in the  high-power  regime. In this case, the strategy of setting input at its maximum possible value for one input message, and setting the input to zero for the other input message (on/off keying strategy) is shown to be optimal.

\item Under both  peak-power  and total-power  constraints, an optimal code is derived in the high-power  regime for Poisson channels without memory. 
\end{enumerate}

This paper is organized as follows.  In Section~\ref{three} we introduce the system model and formally state our problem. Section~\ref{four} presents our main results. In Section~\ref{five} we provide the proofs of our theorems and lemmas that help to prove theorems. Some of the lemmas and proofs are moved to appendices.

\section{Problem Formulation}
\label{three}
In this section, we describe the system model in detail. The section begins by giving the mathematical formulation of a discrete-time Poisson channel.   While we consider memoryless  channels as well as channels with memory, we do not assume output feedback from the receiver to the transmitter. Next, the transmitter and receiver models are defined. 

\textbf{Channel Model:} We begin by defining a Poisson channel without memory. A memoryless Poisson channel  takes an input $X\in\left[0,\infty\right)$ and outputs a symbol $Y\in\mathbb{N}\cup\{0\}$ where $Y\sim~\mathsf{Poisson}\left(X+d\right)$. Here, $d\in\left[0,\infty\right)$ is the dark background noise. In other words, the conditional distribution of output given input is as follows:
\begin{align*}
W\left(y \vert x\right)=e^{-\left(x+d\right)}\frac{{\left(x+d\right)}^y}{y!}.
\end{align*}
If a memoryless channel is used $N$ times, the input sequence $\boldsymbol{X}=\left[X_0, X_1, \ldots , X_{N-1}\right]$ is mapped to an output sequence $\boldsymbol{Y}=\left[Y_0, Y_1, \ldots , Y_{N-1}\right]$ where $$W\left(\boldsymbol{y}|\boldsymbol{x}\right)=\prod_{i=0}^{N-1} W\left(y_i|x_i\right).$$ A Poisson channel with memory is defined as follows (see \cite{amini, gohari2016information}): if the channel has memory of order $K$ , we associate the channel with a sequence  $$\boldsymbol{\pi}=\left[\pi_0, \pi_1, \ldots, \pi_{K-1}\right]$$ of length $K$. We call this sequence the channel coefficients.  To clarify the physical meaning of the channel coefficients $\pi_j$, consider a transmitter which is able to  release molecules into the environment every $T_s$ seconds. The transmitter can choose the number of released molecule at the beginning of each time-slot. The released molecules move randomly and diffuse into the environment, until they hit the receiver upon which they are absorbed. The channel coefficient $\pi_j$ indicates the probability that a  molecule released by the transmitter at the beginning of time slot $i$ hits the receiver during  the $i+j$-th time slot. { The sequence $\pi_i$ may sum up to a number less than one (as some released molecules may not hit the receiver surface at all). } We refer the readers to \cite[p. 9, Sec. II.D.2]{gohari2016information} for a more detailed explanation.  Assume that the channel is used $N$ times. Then, the input to the channel is a sequence $\left[X_0, X_1, \ldots , X_{N-1}\right]$. We assume that $X_i=0$ for $i<0$ or $i>N-1$ (meaning that transmission occurs only from time instance $0\leq i\leq N-1$). The output at time instance is given by \begin{equation}Y_i\sim \mathsf{Poisson}\left(d+\sum_{j=0}^{K-1}\pi_j X_{i-j}\right).\label{eqnA1}\end{equation}
 The physical meaning of the above equation can be found at 
\cite[p. 9, Sec. II.D.2]{gohari2016information}. { In Fig.\ref{channel}, for $N=3$ and $\boldsymbol{\pi}=(0.5,0.5)$, we sketch a Poisson channel with memory. Each output symbol follows a Poisson distribution whose mean is depicted in this figure.} Briefly speaking, $ \mathsf{Poisson}(X_{i-j})$ molecules are transmitted at the beginning of time slot $i-j$; a fraction of these molecules hit the receiver during the $i$-th time slot. This fraction of molecules is distributed according to $ \mathsf{Poisson}(\pi_jX_{i-j})$. It can be shown that the total number of received molecules equals $\mathsf{Poisson}\left(\sum_{j=0}^{K-1}\pi_j X_{i-j}\right) $ plus  the background noise distributed according to $\mathsf{Poisson}(d)$.

Observe that $Y_i\sim\mathsf{Poisson}\left(d\right)$ if $i<0$ or $i>N+K-2$. The output sequence $\boldsymbol{Y}=\left[Y_0, Y_1, \ldots , Y_{N+K-2}\right]$ for times $0\leq i\leq N+K-2$ has the following conditional distribution given the input sequence:
\begin{align*}
W&\left(\boldsymbol{y}|\boldsymbol{x}\right)\\
&=
\prod_{i=0}^{N+K-2}e^{-\left(d+ \sum_{j=0}^{K-1}\pi_j x_{i-j}\right)}\frac{{\left(d+ \sum_{j=0}^{K-1}\pi_j x_{i-j}\right)}^{y_i}}{y_i!}. 
\end{align*}
Observe that when $\boldsymbol{\pi}=\left[\pi_0, \pi_1, \ldots, \pi_{K-1}\right]=\left[\pi_0,0,0, \ldots , 0\right]$, the channel with memory   reduces to a memoryless channel.

\begin{figure}
 \includegraphics[width=.45\textwidth]{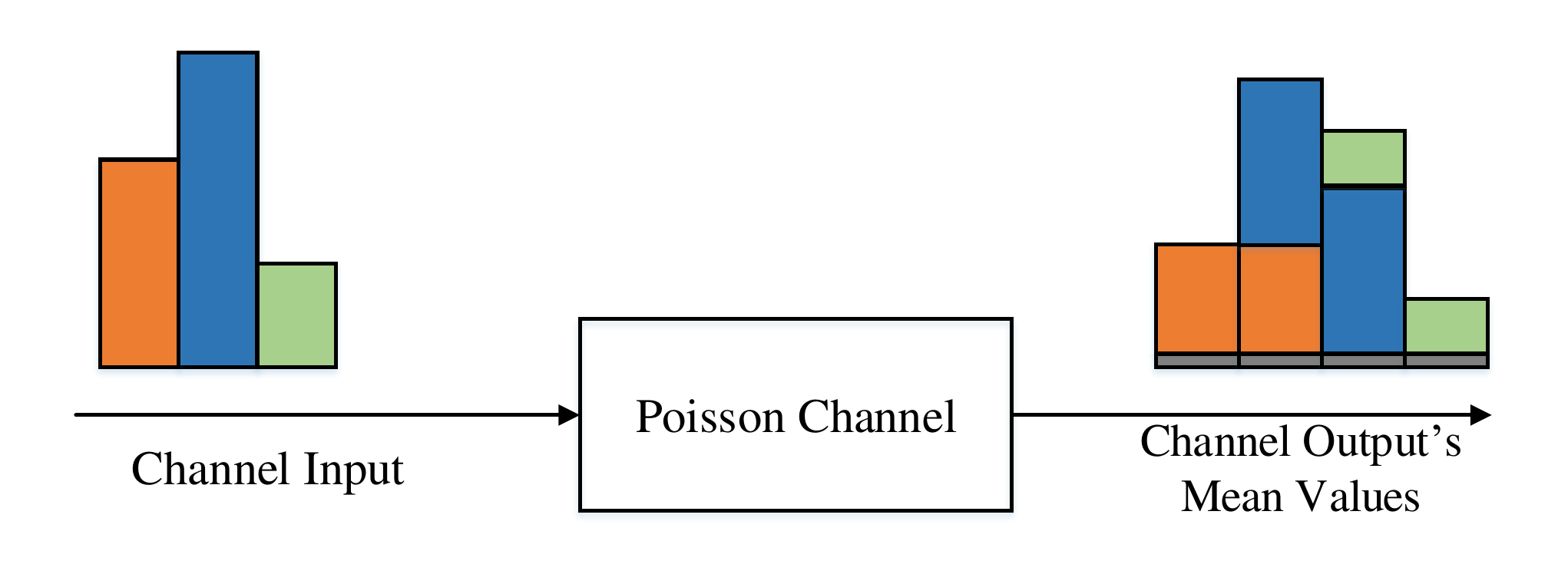}
 \centering
 \caption{A  Poisson channel with memory $\boldsymbol{\pi}=(0.5,0.5)$. Given an  input ($\boldsymbol{x}=(x_{0},x_1,x_2)$) of three non-negative real numbers, the outputs is a sequence of non-negative integers ($(Y_0, Y_1, Y_2, Y_3)$) where $Y_i$ has a Poisson distribution whose mean value is depicted. For instance, $Y_0$ has a Poisson distribution with mean $0.5x_0+d$, while $Y_1$ follows a Poisson distribution with mean $0.5x_0+0.5x_1+d$. The dark noise $d$ is illustrated in gray in this figure.  }
 \label{channel}
 \end{figure}

\textbf{Transmitter Model:} The transmitter has a uniform bit $B\in\{1,2\}$ that it wishes to communicate to the receiver.  For a blocklength $N$, the transmitter sends the codeword
$$\boldsymbol{x_1}=\left[x_{10}, x_{11}, \ldots , x_{1\left(N-1\right)}\right]$$
if $B=1$, or
$$\boldsymbol{x_2}=\left[x_{20}, x_{21}, \ldots , x_{2\left(N-1\right)}\right]$$
if $B=2$. We say that the codewords satisfy the peak-power  constraint $A$ if $x_{ji}\leq  A$ for $j\in\{1,2\}, i\in\{0,1, \ldots , N-1\}$. The codewords satisfy the  total-power  constraint $P$ if
\begin{align}
\label{jadid1}
 \sum_{i=0}^{N-1} x_{ji} \leq P, \qquad j=1,2.
\end{align}
In the context of molecular communication, the peak-power corresponds to the maximum number of molecules that could be potentially produced and released by the transmitter during each transmission time slot. However, the transmitter might not be able to maintain this maximum production during the entire transmission period. The total-power constraint corresponds to the total number of molecules that could be produced during the entire transmission period.

The following notation is used throughout the paper: we set
\begin{align} \lambda_i=\sum_{j=0}^{K-1}\pi_j x_{1\left(i-j\right)}=\left[\boldsymbol{x_1}* \boldsymbol{\pi}\right]\left(i\right), \qquad 0\leq i\leq N+K-2 \label{eqnA2}
\end{align}
and
\begin{align} \mu_i=\sum_{j=0}^{K-1}\pi_j x_{2\left(i-j\right)}=\left[\boldsymbol{x_2}* \boldsymbol{\pi}\right]\left(i\right), \qquad 0\leq i\leq N+K-2 \label{eqnA3}
\end{align}
where $*$ denotes the convolution operator. We also assume that $\lambda_i=\mu_i=0$ for $i<0$ or $i>N+K-2$. From \eqref{eqnA1}, for $B=1$ we have $Y_i\sim \mathsf{Poisson}\left(\lambda_i+d\right)$, and for $B=2$, we have $Y_i\sim \mathsf{Poisson}\left(\mu_i+d\right)$.

\textbf{Receiver Model:} Throughout the paper, we assume that the receiver uses the \emph{maximum-likelihood decoding} on the received sequence $\left[y_0, y_1,  \ldots , y_{N+K-2}\right]$ to produce an estimate of  the input bit $\hat B$. 
Given that each codewords has an equal a priori probability, the optimal receiver with minimum error probability is the ML receiver, and the decision rule (DR) is derived as follows: { using the values of $\lambda_i$ and $\mu_i$ from \eqref{eqnA2} and \eqref{eqnA3}, the probability of observing $(Y_0, Y_1, \cdots, Y_{N+K-2})$ when $B=1$ equals
$$\prod_{i=0}^{N+K-2}e^{-\left(d+ \lambda_i\right)}\frac{{\left(d+ \lambda_i\right)}^{Y_i}}{Y_i!},$$
while the probability of observing $(Y_0, Y_1, \cdots, Y_{N+K-2})$ when $B=2$ equals
$$\prod_{i=0}^{N+K-2}e^{-\left(d+ \mu_i\right)}\frac{{\left(d+ \mu_i\right)}^{Y_i}}{Y_i!}.$$
}
Therefore, we decode $\hat{B}=1$ if 
\begin{equation}
  \sum_{i=0}^{N+K-2} a_i\ Y_i \, \geq  \, b.\label{eqn:LMu}
\end{equation}
where $a_i=\log \frac{\lambda_i+d}{\mu_i+d}$ and $b={\sum_{i=0}^{N+K-2}} \left(\lambda_i-\mu_i\right)$.
We decode $\hat{B}=2$ if the right-hand side is greater than the left-hand side. The average  error probability is the probability that $\hat{B}$ is not equal to $B$. 
We are interested in optimal codewords $\boldsymbol{x_1}$ and $\boldsymbol{x_2}$ that minimize the  average  error probability under given total-power and/or  peak-power constraints  on the codewords. 

 \section{Main Results}
 \label{four}

In this section, we present our main results. This section is divided into three subsections which classify the results based on the assumptions made about total-power and peak-power constraints. 
 
\begin{figure*}[!t]
\centering
\subfloat[ $K=2$, $\boldsymbol{\pi}=(1/2,1/2)$.]{\includegraphics[width=3.25 in]{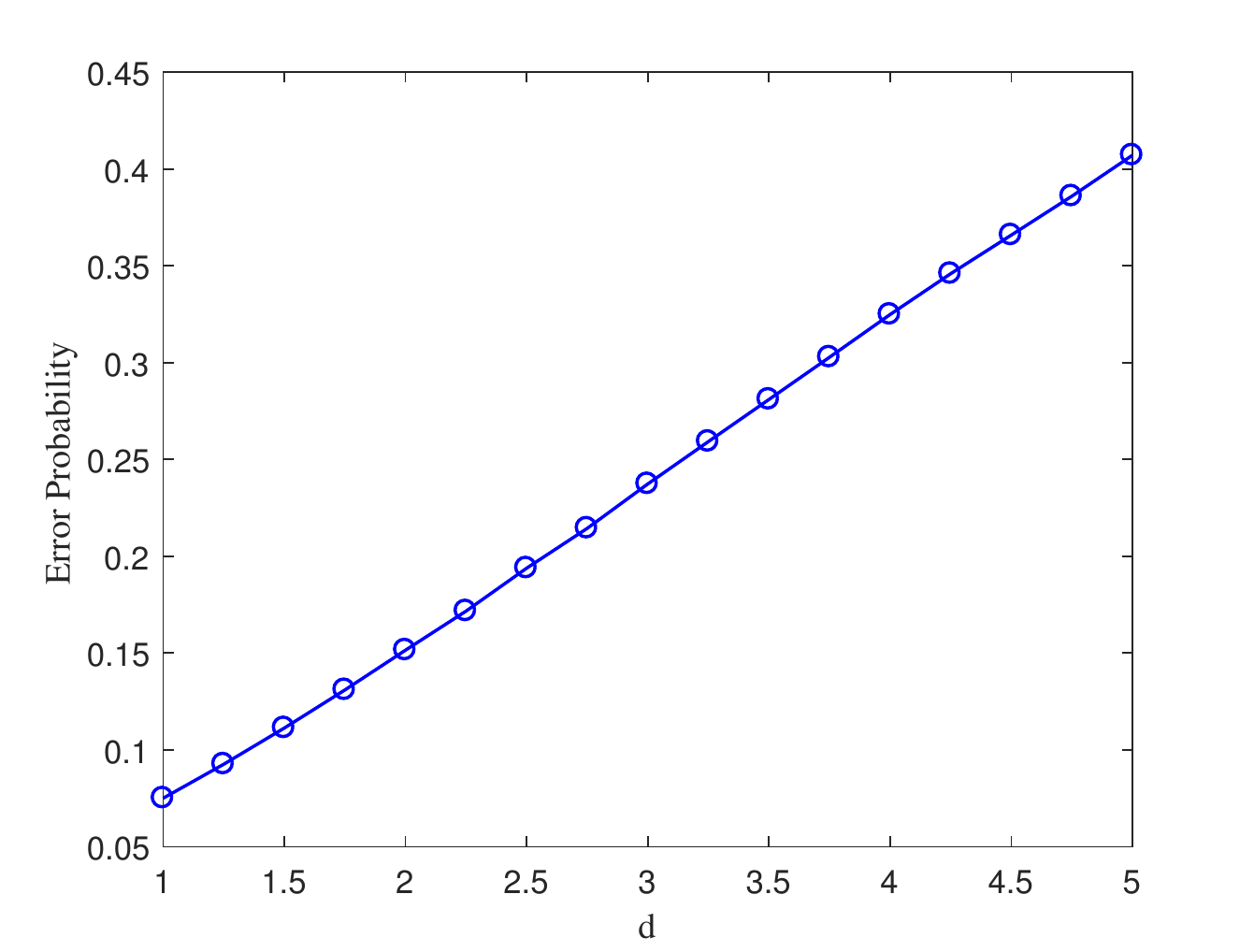}%
\label{figd_1}}
\hfil
\subfloat[ $K=1$, $\boldsymbol{\pi}=(1)$. ]{\includegraphics[width=3.25 in]{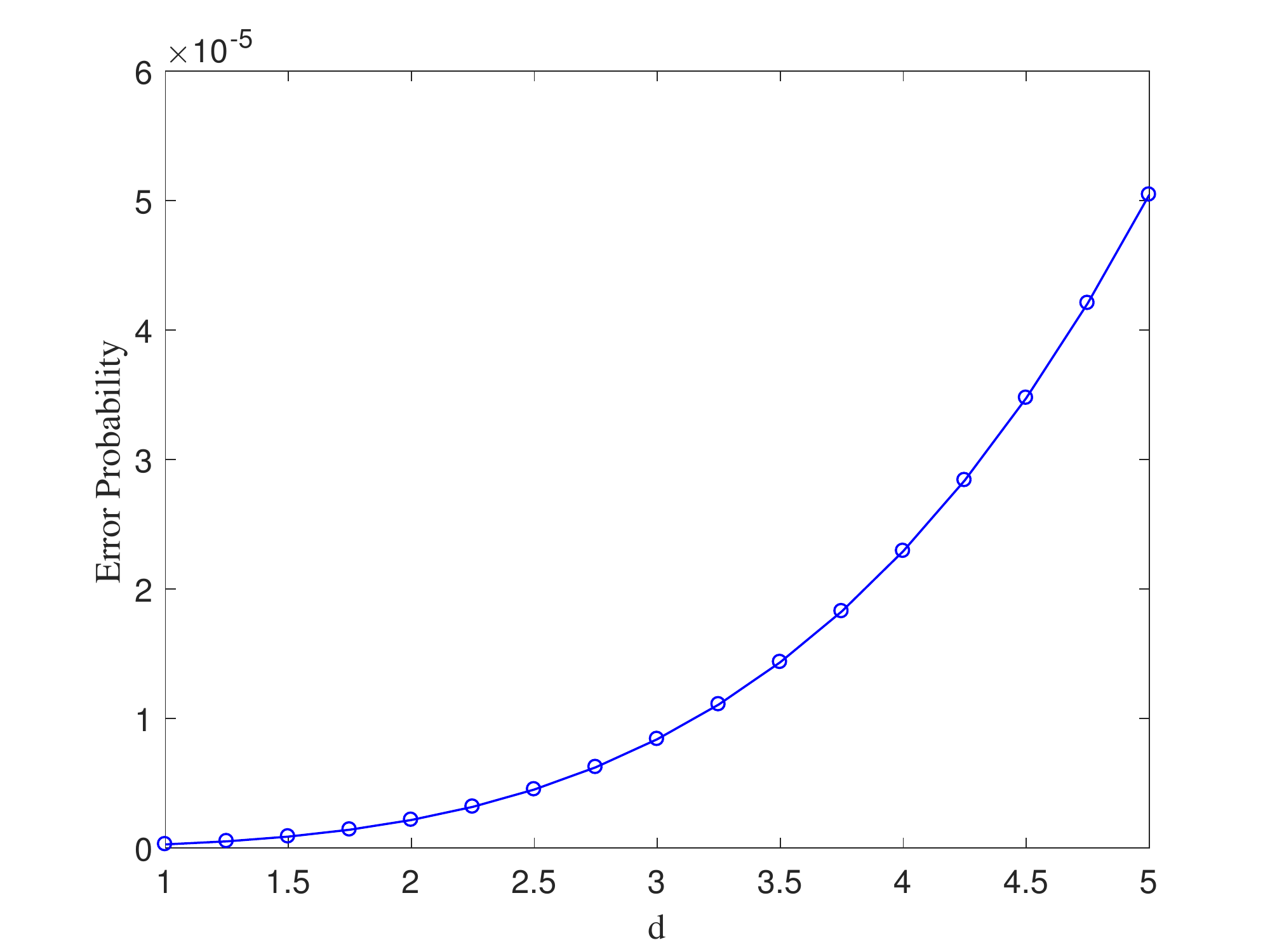}%
\label{figd_2}}
\caption{The error probability of the code given in \eqref{figdeq} as a function of $d$ for $d\in[1,5]$. The left sub-figure considers a Poisson channel with memory and the right sub-figure considers a Poisson channel without memory. Here, a total-power constraint $P=20$ is assumed. According to Theorem \ref{t1}, the code in \eqref{figdeq} is optimal for both channels for any value of $d$.}
\label{figd_sim}
\end{figure*}

\subsection{Code design subject to a  total-power  constraint}
In this section, we only consider a  total-power  constraint on the individual codewords ($P$ is finite but $A=\infty$). Our first result in Theorem~\ref{t1} identifies a particular structure for the codewords when the blocklength $N$ is strictly larger than the channel memory $K$. This structure can be suboptimal for blocklengths $N\leq K$. To illustrate this, we find the optimal codewords for the special case of $N=K=2$. Finding the optimal codewords  for blocklengths $N\leq K$ (in the general case) seems to be difficult.

\begin{theorem}
\label{t1}Consider a discrete Poisson channel with memory  of order $K$ and dark noise $d>0$. 
Assume that  blocklength $N$ is greater than or equal to $K+1$, and the two codewords $\boldsymbol{x_1}$ and $\boldsymbol{x_2}$ satisfy the  total-power constraint $P$ in \eqref{jadid1}. 
Then, the following codeword pair is optimal in the sense of minimizing the ML decoder error probability:
\begin{align}
\begin{pmatrix}
\boldsymbol{x_1}\\
\boldsymbol{x_2}
\end{pmatrix}= \Bigg( \overbrace{ \begin{matrix}
P & 0 &  \cdots  & 0  \\     
 0& 0& \cdots  &0         \end{matrix}}^K \quad
\overbrace{\begin{matrix}
   0   & 0&  \cdots &0\\
  P  &0& \cdots  & 0
\end{matrix}}^{N-K}
 \Bigg) \label{eqn1}
\end{align} 
\end{theorem}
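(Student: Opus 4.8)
The plan is to recast the problem as the maximization of a statistical distance, to show that an optimal codeword pair must have three features — full power, each codeword concentrated in a single time slot, and disjoint ``channel footprints'' — and finally to check that, once $N\ge K+1$, the only such pair (up to a relabeling of the time axis that leaves the error probability unchanged) is \eqref{eqn1}.

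\emph{Reformulation.} Since $B$ is uniform and the receiver is an ML receiver, the minimum error probability of a code equals $\tfrac12\bigl(1-\mathrm{TV}(P_1,P_2)\bigr)=\tfrac12\sum_{\boldsymbol y}\min\{P_1(\boldsymbol y),P_2(\boldsymbol y)\}$, where $P_1=\bigotimes_i\mathsf{Poisson}(\lambda_i+d)$ and $P_2=\bigotimes_i\mathsf{Poisson}(\mu_i+d)$ with $\boldsymbol\lambda=\boldsymbol x_1*\boldsymbol\pi$ and $\boldsymbol\mu=\boldsymbol x_2*\boldsymbol\pi$. Thus minimizing the error probability is the same as minimizing the overlap $\rho(P_1,P_2):=\sum_{\boldsymbol y}\min\{P_1,P_2\}$ over all feasible mean pairs. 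The useful feature of this reformulation is monotonicity under processing: if a single Markov kernel sends $P_1\mapsto P_1'$ and $P_2\mapsto P_2'$, then $\rho(P_1',P_2')\ge\rho(P_1,P_2)$. Hence, to prove a candidate pair optimal it suffices to realize every feasible pair as a degradation of the candidate.

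\emph{Structural reductions.} First, the codewords may be taken to use full power. For $d<b<a$ the kernel ``add an independent $\mathsf{Poisson}(c)$ with $c=d(a-b)/(b-d)$, then binomially thin, keeping each count with probability $d/(d+c)$'' sends $\mathsf{Poisson}(a)\mapsto\mathsf{Poisson}(b)$ while fixing $\mathsf{Poisson}(d)$; applied slot by slot it turns a full-power pulse into any smaller one while leaving the all-noise alternative untouched, so $\rho$ cannot decrease when power is increased. Second, the footprints $\mathrm{supp}(\boldsymbol\lambda)$ and $\mathrm{supp}(\boldsymbol\mu)$ may be taken disjoint: a slot $i$ with $\lambda_i,\mu_i>0$ carries a common $\mathsf{Poisson}(\min\{\lambda_i,\mu_i\})$ component that is identically distributed under both hypotheses and hence pure added noise, so deleting it strictly decreases $\rho$. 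Third, subject to disjoint footprints and full power, each codeword should be a single pulse, i.e.\ $\boldsymbol\lambda=P\,\boldsymbol\pi$ up to a shift (and likewise $\boldsymbol\mu$); morally, because for the Poisson channel one strong look at the signal beats several weak looks of the same total rate — already visible in the Bhattacharyya coefficient $\prod_i e^{-(\sqrt{\lambda_i+d}-\sqrt{\mu_i+d})^2/2}$, which is extremized by concentrating the energy. The second and third reductions are the delicate ones: the ``delete the common part'' and ``merge'' moves take $\boldsymbol\lambda,\boldsymbol\mu$ outside the set of honest convolutions with $\boldsymbol\pi$, and the exact objective is nonconvex, so they are carried out by passing to a relaxed optimization in which $\boldsymbol\lambda,\boldsymbol\mu$ are only required to obey the constraints that every feasible code satisfies (nonnegativity, $\|\cdot\|_1\le\|\boldsymbol\pi\|_1P$, the per-slot bound $\|\cdot\|_\infty\le(\max_j\pi_j)P$, and the support/shift constraints inherited from convolution); this relaxation is set up to keep the same global minimum while eliminating spurious local minima, after which its minimizer is pinned down directly.

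\emph{Identification of the optimal code.} By the reductions an optimal pair is $\boldsymbol x_1=P\,\boldsymbol{e}_{p_1}$, $\boldsymbol x_2=P\,\boldsymbol{e}_{p_2}$ with $0\le p_1,p_2\le N-1$ and footprints $\{p_1,\dots,p_1+K-1\}$, $\{p_2,\dots,p_2+K-1\}$ disjoint, i.e.\ $|p_1-p_2|\ge K$. When the footprints are disjoint the corresponding blocks of outputs are independent, so shifting the whole time axis is a measure-preserving bijection of the output space and the error probability does not depend on the particular $p_1,p_2$; in particular it equals that of the pair $p_1=0$, $p_2=K$, which is exactly \eqref{eqn1}. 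Such a pair fits inside a block of length $N$ precisely when $N-1\ge K$, i.e.\ $N\ge K+1$, the hypothesis of the theorem. (For $N\le K$ no valid pulse pair has disjoint footprints, which is why \eqref{eqn1} can fail to be optimal there, consistently with the $N=K=2$ example.) I expect the main obstacle to be the concentration step — showing that merging a codeword's energy into one slot does not increase the \emph{exact} ML error probability (not merely a surrogate such as the Bhattacharyya bound) while respecting feasibility — and, tied to it, choosing the relaxation so that this nonconvex step admits a clean global analysis.
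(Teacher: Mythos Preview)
Your proposal shares the paper's high-level strategy --- relax the feasibility set for $(\boldsymbol\lambda,\boldsymbol\mu)$, then move any pair to the claimed optimum by error-nonincreasing steps --- but the concentration step you yourself flag as ``the main obstacle'' is in fact where your sketch breaks, and your proposed relaxation is not the right one.

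First, the relaxation. Constraining only $\|\boldsymbol\lambda\|_1\le P\|\boldsymbol\pi\|_1$ and $\|\boldsymbol\lambda\|_\infty\le P\max_j\pi_j$ is too loose: under these constraints nothing prevents $\boldsymbol\lambda$ from having several coordinates equal to $P\max_j\pi_j$, and the resulting minimizer of the relaxed problem is generally not of the form $\boldsymbol x*\boldsymbol\pi$, so the relaxed optimum is infeasible and the argument collapses. The paper instead relaxes to $\mathcal{T}'=\{(\boldsymbol\lambda,\boldsymbol\mu):\boldsymbol\lambda,\boldsymbol\mu\prec_{\mathrm w}[P\boldsymbol\pi,\boldsymbol 0_{N-1}]\}$ (weak majorization), which is exactly tight enough: $\mathcal T\subset\mathcal T'$ because $\boldsymbol x*\boldsymbol\pi$ is a convex combination of shifts of $(\sum_i x_i)\boldsymbol\pi$, and --- crucially --- the optimizer of the relaxed problem is $[P\boldsymbol\pi,\boldsymbol 0]$ paired with a disjoint shift of itself, which lands back in $\mathcal T$.

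Second, the concentration step. The Bhattacharyya heuristic is only an upper bound and does not control the exact ML error, and your data-processing framework does not supply a kernel realizing ``merge two Poisson looks into one stronger look''. The paper's replacement is a direct coupling lemma: if $a_j\ge a_k$ (equivalently $\tfrac{\lambda_j+d}{\mu_j+d}\ge\tfrac{\lambda_k+d}{\mu_k+d}$), then moving $\epsilon$ mass from $\lambda_k$ to $\lambda_j$ does not increase the ML error; the proof writes $Y_j'=\tilde Y_j+Z$ with $Z\sim\mathsf{Poisson}(\epsilon)$ and uses $(a_j-a_k)Z\ge0$. In particular, when $\mu_j=\mu_k$ (e.g.\ both zero, as after your disjoint-footprints step), this says anti-Robin-Hood operations on $\boldsymbol\lambda$ do not increase the error. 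Combined with the majorization relaxation and the standard fact that $\boldsymbol\lambda\prec\boldsymbol\pi'$ implies $\boldsymbol\lambda$ is reachable from $\boldsymbol\pi'$ by finitely many Robin-Hood moves, this yields an honest proof of the concentration step for the \emph{exact} error probability. Your TV/data-processing viewpoint and the add-then-thin kernel for ``full power'' are nice and essentially recover the paper's monotonicity lemma at coordinates where the other codeword is zero, but they do not by themselves produce the merge move; that is where the majorization/AR machinery does real work.
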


We remark that the above theorem does not claim that the sequences given in \eqref{eqn1} is the \emph{only} optimal codeword pairs. It just claims that \eqref{eqn1} is an optimal codeword pair that minimizes the error probability of the ML receiver.

Proof of the above theorem is given in Section~\ref{five1} based on a suitable relaxation of the optimization problem of finding the best code. As shown  later in Remark~\ref{remarkN}, the problem of finding the optimal code is a nonconvex optimization problem, and the relaxation technique is used to circumvent local minima.

An interesting observation here is that the error probability for $N=K+1$ is the same as the error probability for any blocklength $N>K+1$, \emph{i.e.,} increasing $N$ beyond $K+1$ does not decrease the error probability.

\begin{figure*}[!t]
\centering
\subfloat[ $K=2$, $\boldsymbol{\pi}=(1/2,1/2)$ and $d=0.25$.]{\includegraphics[width=3.25 in]{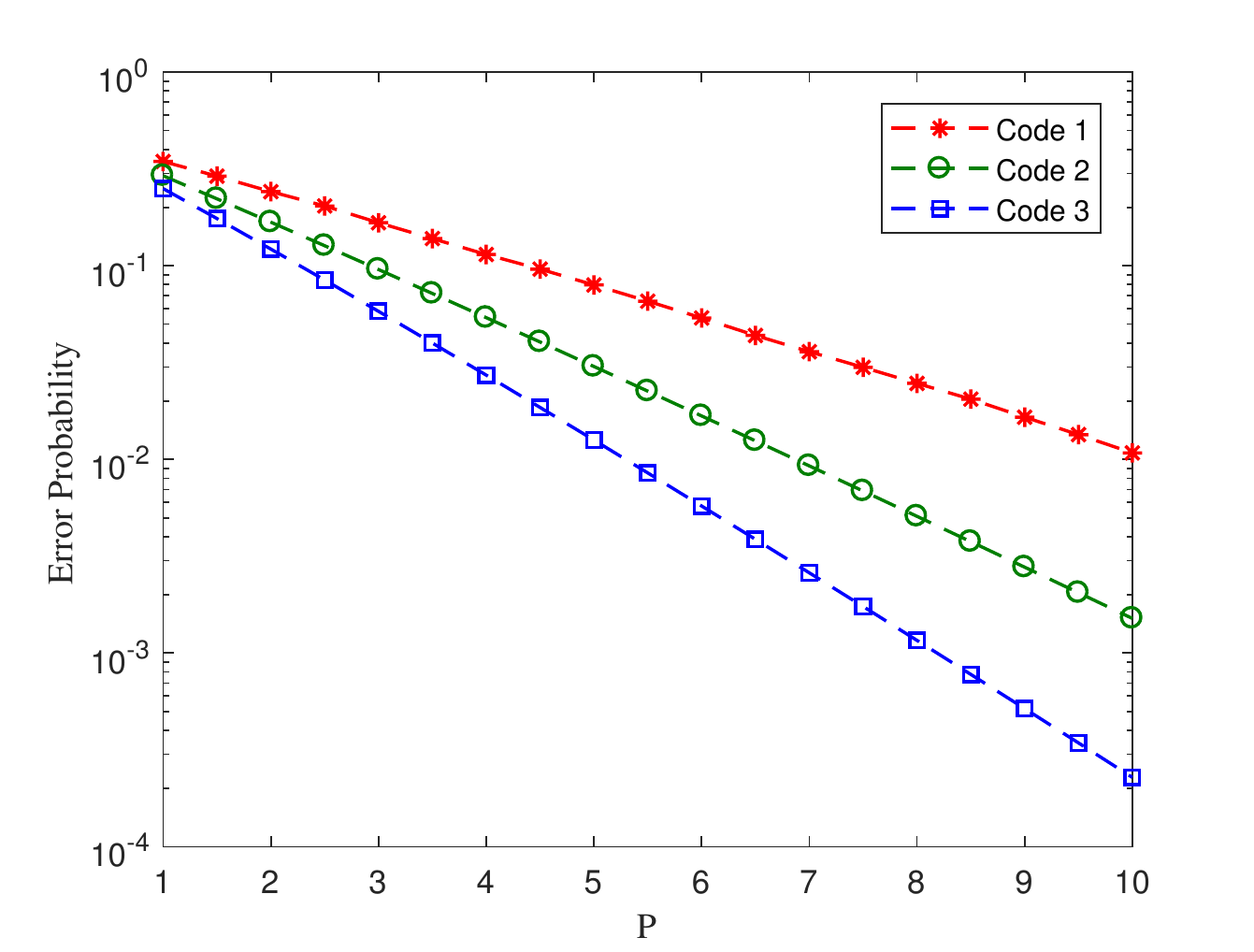}%
\label{figp_1}}
\hfil
\subfloat[ $K=1$, $\boldsymbol{\pi}=(1)$ and $d=0.25$. ]{\includegraphics[width=3.25 in]{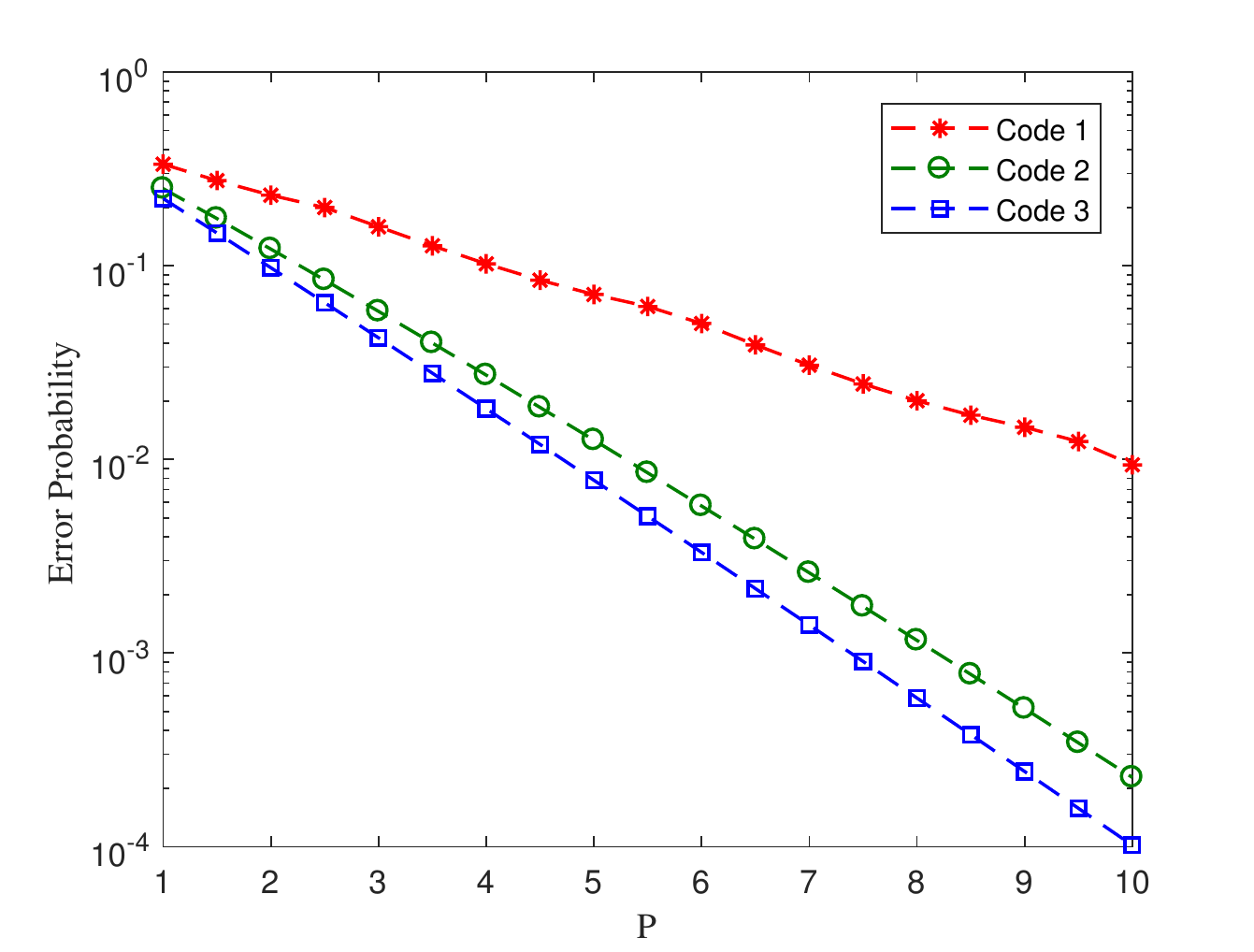}%
\label{figp_2}}
\caption{Comparison of the error probabilities of   three coding schemes given in \eqref{newp} for two different channel models. Code 3 is the optimal code given in Theorem \ref{t1}. Code 1 corresponds to the on/off keying strategy of sending a sequence at maximum power for one message, and sending nothing for the other message.}
\label{figp_sim}
\end{figure*}

\begin{corollary}
\label{c1} If the channel is without memory  ($K=1$), an  optimal codeword pair for blocklength  $N\geq 2$ is
\begin{align}
\Bigg(\overbrace{\begin{matrix}
  P & 0  & \cdots  & 0 \\
 0 & P & \cdots  & 0
\end{matrix}}^{N}\Bigg).
\label{col1}
\end{align}

Since the channel is without memory, any column-permutation of the construction in \eqref{col1} also yields an optimal code.

\end{corollary}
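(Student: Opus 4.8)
The plan is to obtain the corollary by specializing Theorem~\ref{t1} to $K=1$ and then adding a short symmetry argument for the permutation claim. First I would set $K=1$: the hypothesis $N\geq K+1$ becomes $N\geq 2$, precisely the assumed range of blocklengths, while the construction \eqref{eqn1} has a first block of length $K=1$ equal to the column $\binom{P}{0}$ and a second block of length $N-K=N-1$ whose first column is $\binom{0}{P}$ and whose remaining columns vanish. Concatenating these columns gives exactly the codeword pair displayed in \eqref{col1}, so \eqref{col1} is an optimal pair by Theorem~\ref{t1}.

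It remains to justify that permuting the columns of \eqref{col1} preserves optimality. Since $K=1$ the output block has length $N$ and $W(\boldsymbol{y}\mid\boldsymbol{x})=\prod_{i=0}^{N-1}W(y_i\mid x_i)$. Fix a permutation $\sigma$ of $\{0,1,\dots,N-1\}$ and let $(\boldsymbol{x_1}',\boldsymbol{x_2}')$ be the column-permuted pair, $x_{ji}'=x_{j\sigma(i)}$. The coordinate relabeling $\Phi\colon\boldsymbol{y}\mapsto(y_{\sigma(0)},\dots,y_{\sigma(N-1)})$ is a bijection of $(\mathbb{N}\cup\{0\})^N$ onto itself that pushes $W(\cdot\mid\boldsymbol{x_j})$ forward to $W(\cdot\mid\boldsymbol{x_j}')$ for $j=1,2$. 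Moreover, for $K=1$ we have $a_i=\log\frac{\pi_0 x_{1i}+d}{\pi_0 x_{2i}+d}$ and $b=\sum_{i}(\lambda_i-\mu_i)$ in \eqref{eqn:LMu}, so the ML rule for $(\boldsymbol{x_1}',\boldsymbol{x_2}')$ has its coefficients permuted by $\sigma$ and the same threshold $b$; hence $\Phi$ carries the ML decision region of $(\boldsymbol{x_1},\boldsymbol{x_2})$ onto that of $(\boldsymbol{x_1}',\boldsymbol{x_2}')$. Since $\Phi$ is measure-preserving under each of the two hypotheses, the two ML error probabilities coincide, and the permuted code is again optimal.

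I do not anticipate any genuine obstacle here: the whole content of the corollary is that the memoryless model does not distinguish time slots, so Theorem~\ref{t1} at $K=1$, combined with the invariance of $\prod_i W(y_i\mid x_i)$ under a common permutation of the input and output coordinates, yields the claim. The only mild care needed is consistent tie-breaking in the ML rule, which may be chosen to commute with $\sigma$.
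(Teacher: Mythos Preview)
Your proposal is correct and matches the paper's approach: the corollary is stated immediately after Theorem~\ref{t1} with no separate proof, since it is just the specialization $K=1$ (so $N\geq K+1$ becomes $N\geq 2$ and \eqref{eqn1} becomes \eqref{col1}), and the permutation claim is asserted as a direct consequence of the memoryless product structure. Your explicit bijection argument for the permutation part is a bit more detailed than what the paper supplies, but it is the intended reasoning.
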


\textbf{Numerical Simulations:}
In Figure \ref{figd_sim}, we illustrate the error probability of ML-decoding for the following code:
\begin{align}
    C=\begin{pmatrix}
    \boldsymbol{x}_1\\
    \boldsymbol{x}_2
    \end{pmatrix}=
    \begin{pmatrix}
    P &0&0\\
    0&0&P
    \end{pmatrix},
    \label{figdeq}
\end{align}
for two channels, one with memory $K=2$ and the channel coefficients $\boldsymbol{\pi}=(1/2,1/2)$ (Fig.~\ref{figd_1}) and another one without memory (i.e. $\boldsymbol{\pi}=(1)$). 
This code is optimal for both channels used in this figure due to Theorem \ref{t1}. Firstly, observe that the channel memory has a severe impact on the error probability (in both cases we use the same total-power $P=20$). As expected, we observe that the error probability increases as we increase the background noise. This figure also shows that a channel without memory is more sensitive to an increase of the background noise from $d=1$ to $d=5$ (one curve is almost linear while the other curve is exponential).

In Figure~\ref{figp_sim} we compare the error probabilities of three different coding schemes for a channel without memory (Fig.~\ref{figp_2}) and for a channel with memory $\boldsymbol{\pi}=(1/2,1/2)$ (Fig.~\ref{figp_1}). In both case, the background dark noise is set to be $d=0.25$. The three codes used are as follows:
\begin{align}
\label{newp}
C_1&=\begin{pmatrix}
P/4 & P/4 &P/4&P/4\\
0&0&0&0
\end{pmatrix}, \nonumber\\
C_2&=\begin{pmatrix}
P/2 & P/2 &0&0\\
0&0&P/2&P/2
\end{pmatrix},\nonumber\\
C_3&=\begin{pmatrix}
P & 0 &0&0\\
0&0&P&0
\end{pmatrix}.
\end{align} 
Code $C_1$ corresponds to on/off keying strategy and its error probability as a function of $P$ is illustrated in red. Code $C_2$ is a symmetric strategy and its error probability is illustrated in green for both channel. The last coding scheme is $C_3$ that is the optimal one (according on Theorem~\ref{t1}) for both channels and its error probability as a function of $P$ is illustrated in blue. This figure shows that the on/off keying codeword has the worst performance, and its gap to the other two strategies is more significant when there is no channel memory.

We now turn to the case of $N\leq K$. Here, we only consider the special case of $N=K=2$.

\begin{theorem}
\label{t2}
Consider a discrete Poisson channel with memory  of order $K=2$ and dark noise $d>0$. Assume that the blocklength $N=2$ and we have a  total-power  constraint $ x_{10}+x_{11}\leq P$ and $x_{20}+x_{21}\leq P$. Then, there are optimal codewords of the form
\begin{align}
\begin{pmatrix}
\boldsymbol{x_1}\\
\boldsymbol{x_2}
\end{pmatrix}=\left( \begin{matrix}
P & 0 \\
 0& x
\end{matrix}
\right)\label{eqnLL1}
\end{align} 
for some $ 0 \leq x \leq P$.
\end{theorem}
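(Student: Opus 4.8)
The plan is to recast the error probability as a total variation distance, cut the problem down to very few variables by one soft (``Blackwell-improving'') move, and then finish off the resulting low-dimensional problem by a direct analysis.

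\textbf{Reformulation.} First I would note that for any fixed codeword pair the ML decoder is the Bayes decoder for the uniform prior on $B\in\{1,2\}$ under $0$--$1$ loss, so its error probability equals $\tfrac12\bigl(1-\|P_1-P_2\|_{\mathrm{TV}}\bigr)$, where $P_j=\bigotimes_{i=0}^{2}\mathsf{Poisson}\bigl((\boldsymbol x_j*\boldsymbol\pi)_i+d\bigr)$ is the output law when $B=j$; this identity holds for any consistent tie-breaking in \eqref{eqn:LMu}. Hence minimizing the error probability is the same as maximizing $\|P_1-P_2\|_{\mathrm{TV}}$ over feasible $(\boldsymbol x_1,\boldsymbol x_2)$, and a maximizer exists because we may restrict to the compact set $\{(\boldsymbol x_1,\boldsymbol x_2)\in([0,P]^2)^2: x_{j0}+x_{j1}\le P\}$, on which the total variation distance is continuous in $(\boldsymbol\lambda,\boldsymbol\mu)$.

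\textbf{One codeword at full power.} Next I would show that scaling \emph{both} codewords by a common factor $t\ge 1$ (as long as it remains feasible) never decreases $\|P_1-P_2\|_{\mathrm{TV}}$. Indeed, for $t'\ge t$ the per-coordinate channel ``thin the count by $t/t'$, then add an independent $\mathsf{Poisson}\bigl((1-t/t')d\bigr)$'' maps $\mathsf{Poisson}(s\nu+d)$ to $\mathsf{Poisson}\bigl(\tfrac{t}{t'}s\nu+d\bigr)$ for every $\nu\ge0$; applied coordinatewise it transports the scale-$t'$ output law to the scale-$t$ output law \emph{simultaneously} under $B=1$ and $B=2$. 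Thus the scale-$t$ pair $(P_1,P_2)$ is a garbling of the scale-$t'$ pair, and the data-processing inequality for total variation gives the monotonicity. Consequently we may scale up until a power constraint is tight, so WLOG one codeword has total power exactly $P$; using the symmetry $\boldsymbol x_1\leftrightarrow\boldsymbol x_2$ (which preserves the error probability) I would assume $x_{10}+x_{11}=P$, leaving only the three free variables $x_{11}$ (with $x_{10}=P-x_{11}$), $x_{20}$, $x_{21}$.

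\textbf{The crux.} It remains to show that among optimal codes there is one with $x_{11}=0$ (so $\boldsymbol x_1=(P,0)$) and $x_{20}=0$ (so $\boldsymbol x_2=(0,x)$, $0\le x\le P$). The trouble is that pushing $x_{11}$ or $x_{20}$ to zero changes the output law under \emph{both} hypotheses in a way that is not a garbling, so the soft argument above no longer applies and the reduced problem must be attacked head on: write $1-\|P_1-P_2\|_{\mathrm{TV}}$ explicitly via the Poisson tail sums coming from the threshold rule \eqref{eqn:LMu} with $a_i=\log\frac{\lambda_i+d}{\mu_i+d}$ and $b=\sum_i(\lambda_i-\mu_i)$ as a function of $(x_{11},x_{20},x_{21})$, and show that a maximizer lies on the face $\{x_{11}=0,\ x_{20}=0\}$. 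I expect this to need both first- and second-order optimality conditions (the second-order part because, as recalled in Remark~\ref{remarkN}, the problem is nonconvex and may carry spurious stationary points) together with a case split on the ordering of $\pi_0$ and $\pi_1$, which governs how mass placed at the second slot of a codeword loads the middle output coordinate; and, as in the proof of Theorem~\ref{t1}, it may be cleanest to first relax the feasible set so that the spurious local optima disappear while the global optimum is preserved. This is the main obstacle: since $\|P_1-P_2\|_{\mathrm{TV}}$ has no closed form and the objective is nonconvex, the boundary-optimality of $x_{11}$ and $x_{20}$ does not come for free.

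\textbf{Conclusion.} Once the maximizer is shown to have the form $\boldsymbol x_1=(P,0)$, $\boldsymbol x_2=(0,x)$ with $0\le x\le P$, the proof is complete, since Theorem~\ref{t2} only asserts the existence of an optimal pair of this shape and does not pin down $x$.
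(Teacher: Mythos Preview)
Your reformulation via total variation and the thinning/data-processing argument that lets you scale both codewords up until one sits at power exactly $P$ are both correct and pleasant; this is a cleaner route to ``one codeword at full power'' than anything the paper does explicitly. But the proposal stops precisely where the content of the theorem lies: you yourself flag the ``crux'' (forcing $x_{11}=0$ and $x_{20}=0$) as ``the main obstacle'' and do not carry it out. As written, this is a plan rather than a proof.

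Two remarks on how the paper actually closes that gap, since they correct your expectations. First, no second-order conditions and no relaxation are used here. The paper computes, for any feasible perturbation of $\boldsymbol{x_1}$, the first derivative of $\mathbb{P}_{\e}$ and writes it as $\sum_i D_i\zeta_i$ with $\zeta=\boldsymbol s*\boldsymbol\pi$; the key structural fact (their Lemma~\ref{l7}) is that $\operatorname{sign}(D_i)=\operatorname{sign}(\mu_i-\lambda_i)$, together with KKT-type equalities $\sum_i D_{j+i}\pi_i=\nu\le0$ on indices $j$ with $x_{1j}>0$. Second, these sign constraints alone force, in the interior case $x_{10},x_{11}>0$ with slack in the power, an immediate contradiction; and in the interior case at full power they pin down a short list of sign patterns for $(D_0,D_1,D_2)$, each of which is then dispatched not by further calculus but by explicit error-nonincreasing moves (Lemma~\ref{l1}: push the smaller of $\lambda_i,\mu_i$ down; Lemma~\ref{l3}/Corollary~\ref{corrl1}: shift $\lambda$-mass from a coordinate with smaller $a_i$ to one with larger $a_i$). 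The boundary configurations are handled by the same monotone moves. So the finish is a finite case analysis driven by the signs of the $D_i$, not a second-derivative or relaxation argument; your scaling step dovetails with this, but does not substitute for it.
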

Proof of the above theorem is given in Section~\ref{five2}.

\begin{figure*}[!t]
\centering
\subfloat[$P=10$]{\includegraphics[width=3.25 in]{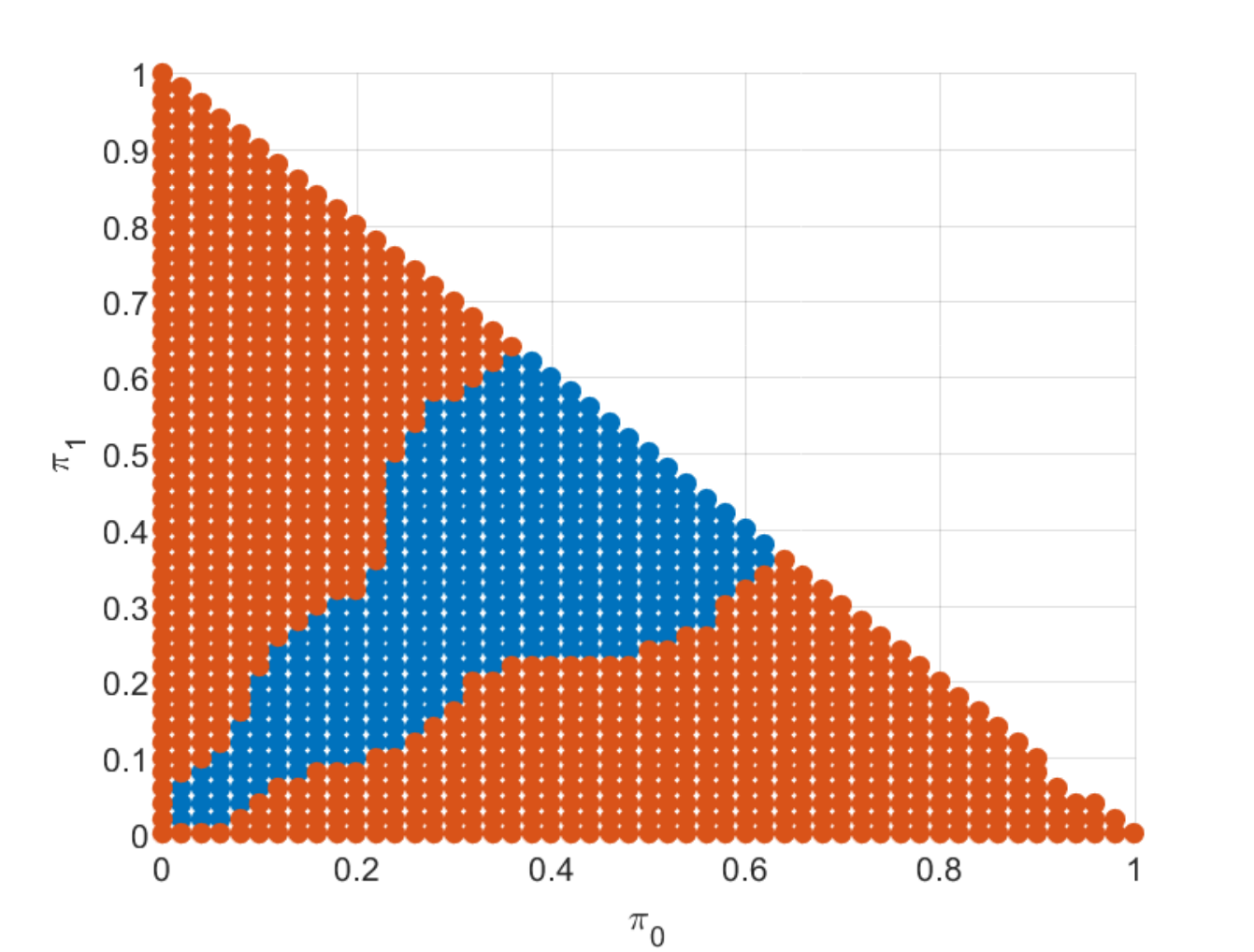}%
\label{abi1}}
\hfil
\subfloat[ $P=8$ ]{\includegraphics[width=3.25 in]{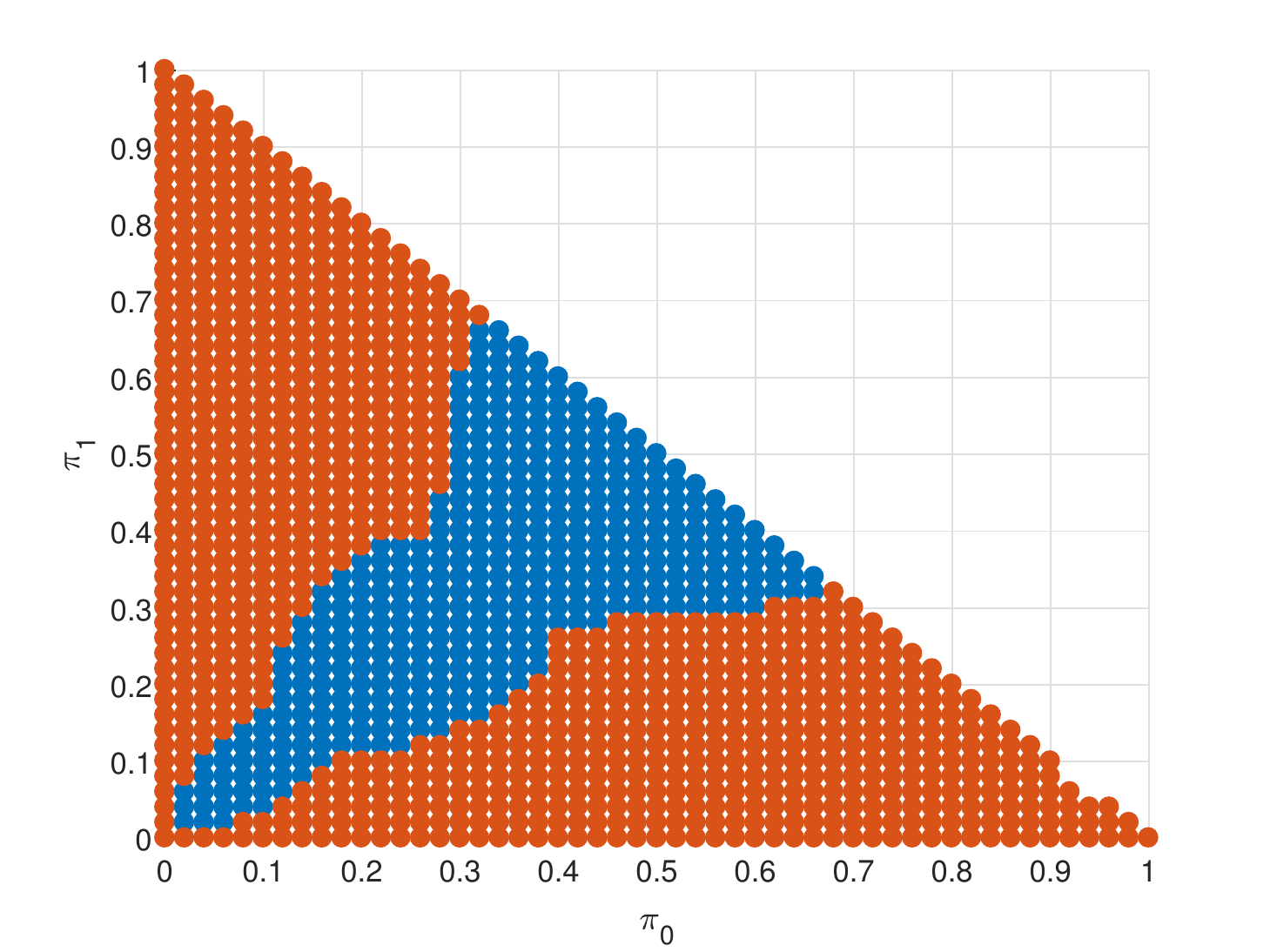}%
\label{abi2}}
\caption{ Comparison of the error probabilities of the codes stated in \eqref{eqabi1} and \eqref{eqabi2} for channel coefficients $0 \leq \pi_0$, $0 \leq \pi_1$ such that $\pi_0+\pi_1\leq 1$. A point $(\pi_0,\pi_1)$ is colored with blue if the error of the code \eqref{eqabi1} is less than the error of \eqref{eqabi2} and is colored with orange otherwise.  }
\label{abi}
\end{figure*}

\begin{figure}
\includegraphics[width=.45\textwidth]{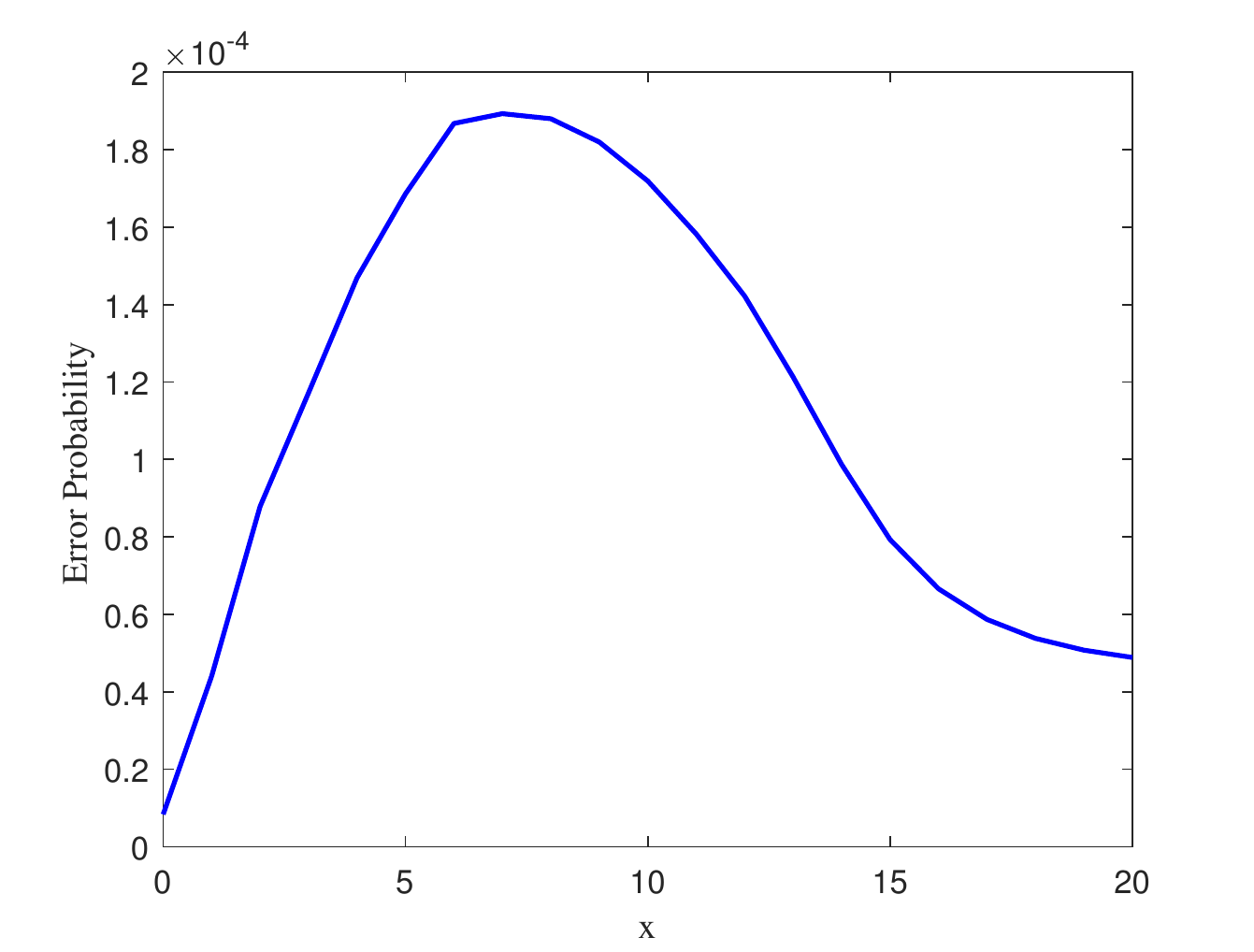}
\centering
\caption{The error probability of the code given in \eqref{eqnLL1} as a function of $x$ for $x\in[0,P]$ where $P=20$. In this example, we use $\boldsymbol{\pi}=\left(0.5, 0.5\right)$ and $d=0.1$.}
\label{fig11}
\end{figure}

\begin{example}\label{example6}
Consider the case of $N=K=2$, $P=10$ and $d=0.5$. Then, numerical simulation shows that for channel coefficients $\left[\pi_0,\pi_1\right]=\left[0.6, 0.4\right]$, the optimal codewords are 
\begin{align}
\begin{pmatrix}
\boldsymbol{x_1}\\
\boldsymbol{x_2}
\end{pmatrix}=\left( \begin{matrix}
P & 0 \\
 0& 0
\end{matrix}
\right),
\label{eqabi1}
\end{align}
with an error probability of $0.0092$. In this example, the codeword pair 
 \begin{align}
 \begin{pmatrix}
\boldsymbol{x_1}\\
\boldsymbol{x_2}
\end{pmatrix}=\left( \begin{matrix}
P & 0 \\
 0& P
\end{matrix}
\right),
\label{eqabi2}
\end{align}
has an error probability of 
$0.0095$ and is not optimal. This shows that in this example the following holds: (i) the codeword structure given in \eqref{eqn1} is not optimal , and (ii) as $x_{11}=x_{21}=0$, the error probability for $N=2$ is the same as the error probability for blocklength one. { Next, in Fig~\ref{abi}, for the points of the set $\{(\pi_0,\pi_1) \ \vert \ 0\leq\pi_0 \leq 1, \ 0\leq\pi_1 \leq 1, \ \pi_0+\pi_1\leq 1\},$ the error probability of the code in \eqref{eqabi1} is compared with the error probability of the code in \eqref{eqabi2}, for $P=10$ (Fig.~\ref{abi1}) and $P=8$ (Fig.~\ref{abi2}) with $d=0.5$ for the both cases. A point is colored with blue if the first error is less than the second one and is colored with orange otherwise. Note that the patterns are symmetrical with respect to line $\pi_0-\pi_1=0$. This is due to the particular structure of the codes in \eqref{eqabi1} and \eqref{eqabi2} whose error probabilities will be  symmetric functions of $\pi_0$ and $\pi_1$. 
Note that the code given in \eqref{eqabi2} is the same as the one given in Theorem \ref{t1}. However, the assumption of Theorem \ref{t1} ($N\geq K+1$) is violated here. This figure confirms that the code given in Theorem \ref{t1} may fail to be optimal without this assumption.
}
\end{example}
\begin{remark}\label{remarkN}
The proof of Theorem~\ref{t2} shows that any codeword that is ``locally" optimal (i.e., is not improved by local changes) must be of the following form:
\begin{align*}
\begin{pmatrix}
\boldsymbol{x_1}\\
\boldsymbol{x_2}
\end{pmatrix}=\left( \begin{matrix}
P & 0 \\
 0& x
\end{matrix}
\right)
\end{align*} 
 Figure~\ref{fig11} plots the error probability of the code given in Theorem~\ref{t2} in terms of $x$ for the following channel $\boldsymbol{\pi}=\left[0.5, 0.5\right]$, with dark noise $d=0.1$. The curve is not convex which indicates that the problem of finding the optimal codewords is not a convex optimization problem. 
\end{remark}

\subsection{Code design subject to a  peak-power  constraint}
In this section, we only consider a  peak-power  constraint on the individual codewords ($P=\infty$ but $A$ is finite). 
\begin{figure*}[!t]
\centering
\subfloat[ $K=2$, $\boldsymbol{\pi}=(1/2,1/2)$ and $d=0.25$.]{\includegraphics[width=3.25 in]{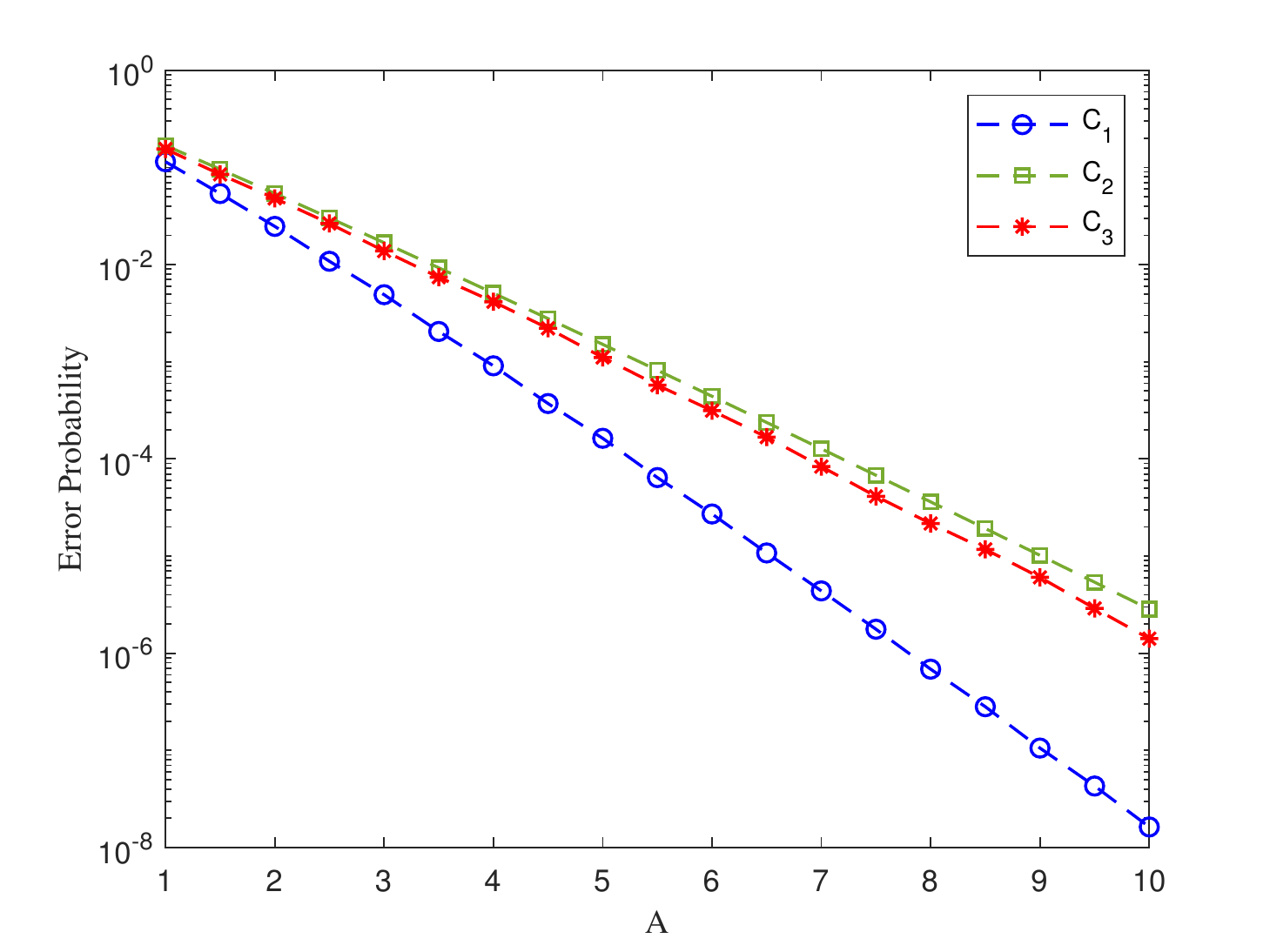}%
\label{figa1}}
\hfil
\subfloat[ $K=1$, $\boldsymbol{\pi}=(1)$ and $d=0.25$.]{\includegraphics[width=3.25 in]{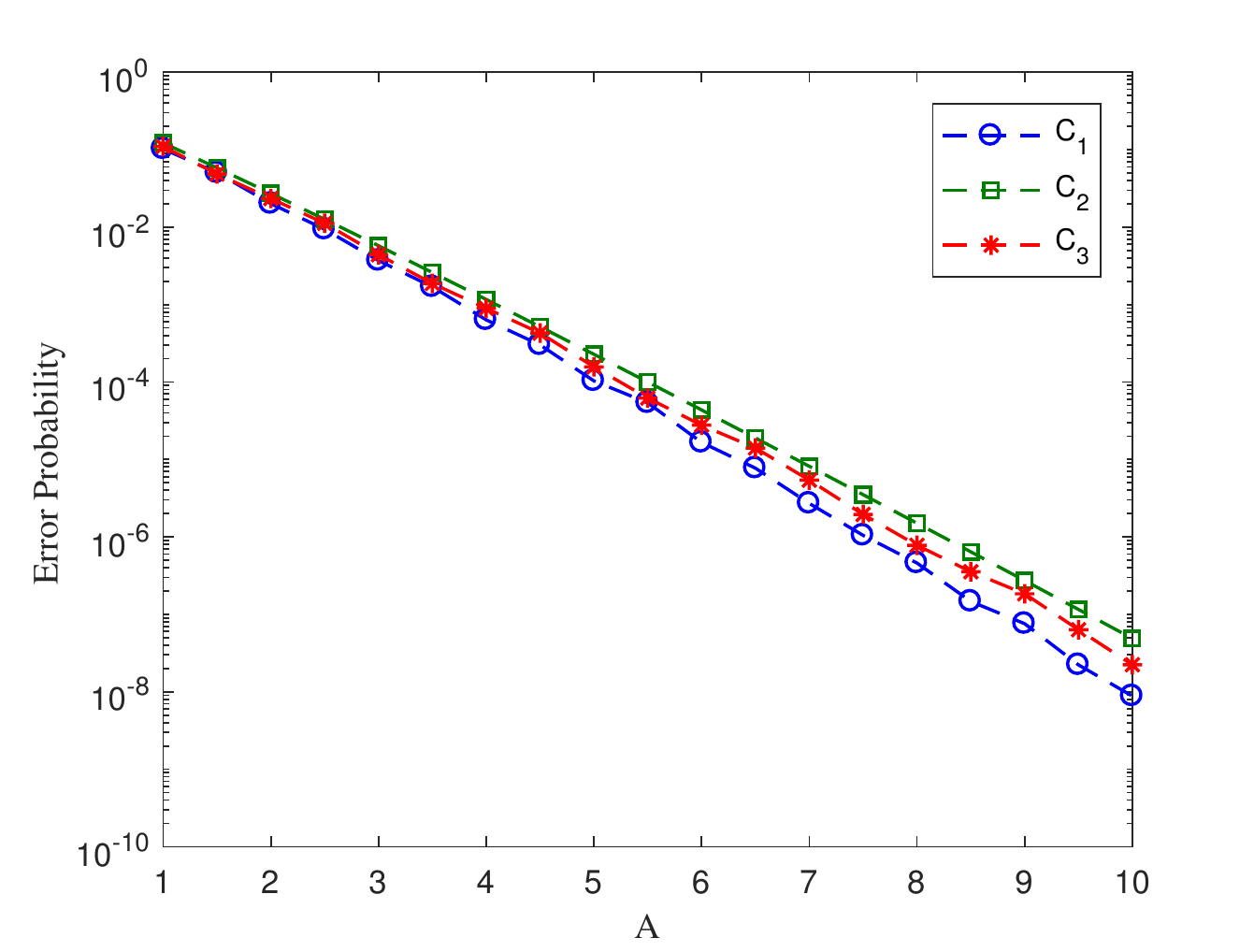}%
\label{figa2}}
\caption{Comparison of the error probability of the three coding schemes mentioned in \eqref{newa} for two different channels.}
\label{figa_sim}
\end{figure*}

\begin{theorem}
\label{t3}
Consider a discrete Poisson channel with memory  of order $K\geq 1$ and dark noise $d > 0$. Then 
there exists a constant $A^*$ depending on the blocklength $N$, the channel coefficients and dark noise level $d$ such that the following holds: for codewords with  peak-power  constraint $A \geq A^*$, the optimal codeword pair is
\begin{align*}
\begin{pmatrix}
\boldsymbol{x_1}\\
\boldsymbol{x_2}
\end{pmatrix}= \Bigg( \overbrace{ \begin{matrix}
A & A &  \cdots  & A  \\     
 0& 0& \cdots  &0                                          
\end{matrix}}^N  \Bigg) .
\end{align*}
More specifically, the above code construction is optimal if $A$ satisfies \eqref{c1-eq1} given at the top of
page~\pageref{c1-eq1}.
\newcounter{storeeqcounter}
\newcounter{tempeqcounter}
\setcounter{storeeqcounter}{\value{equation}}
\addtocounter{equation}{1}
\end{theorem}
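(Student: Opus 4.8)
The plan is to recast the error probability through a Chernoff-type exponent, to show this exponent is maximized by the on/off pair, and then to upgrade the exponent inequality to an exact one in the large-$A$ regime. Write $\alpha_i=\lambda_i+d$, $\beta_i=\mu_i+d$, and let $P_1,P_2$ be the induced output laws $W(\cdot|\boldsymbol{x_1}),W(\cdot|\boldsymbol{x_2})$, so $P_j$ is a product of Poissons with means $\alpha_i$ resp.\ $\beta_i$ and $P_e=\tfrac12\sum_{\boldsymbol y}\min\!\big(P_1(\boldsymbol y),P_2(\boldsymbol y)\big)$. Since the rule \eqref{eqn:LMu} thresholds the log-likelihood ratio $L(\boldsymbol y)=\sum_i\!\big(y_i\log\tfrac{\alpha_i}{\beta_i}-(\alpha_i-\beta_i)\big)$, the standard Chernoff estimate for a sum of independent terms gives, with $\phi(s)=\sum_i\!\big(\alpha_i^{1-s}\beta_i^{s}-(1-s)\alpha_i-s\beta_i\big)$ and $s^\star=\argmin_{s\in[0,1]}\phi(s)$ (an interior minimizer unless the two codewords coincide), both $P_e\le e^{\phi(s^\star)}$ and a matching Bahadur--Rao type lower bound $P_e\ge \tfrac{c_0}{\sqrt{1+V}}\,e^{\phi(s^\star)}$, where $V$ is the variance of $L$ under the tilted law $\propto P_1^{1-s^\star}P_2^{s^\star}$ and $c_0>0$ is absolute. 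Hence everything reduces to maximizing the exponent
\[
E(\boldsymbol{x_1},\boldsymbol{x_2}):=-\phi(s^\star)=\max_{s\in[0,1]}\sum_i g_s(\alpha_i,\beta_i),\qquad g_s(\alpha,\beta):=(1-s)\alpha+s\beta-\alpha^{1-s}\beta^{s}\ge 0.
\]

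The key structural fact is that $g_s(\alpha,\beta)$ is increasing in $\alpha$ on $\{\alpha\ge\beta\}$, increasing in $\beta$ on $\{\beta\ge\alpha\}$, and satisfies $g_s(a,b)=g_{1-s}(b,a)$. Therefore, on the admissible box $\alpha_i,\beta_i\in[\,d,\,A\rho_i+d\,]$ --- where $\rho_i=\sum_{j:\,0\le i-j\le N-1,\ 0\le j\le K-1}\pi_j$ is the largest value that $\lambda_i$ (equivalently $\mu_i$) can attain under the peak constraint --- each coordinate obeys $g_s(\alpha_i,\beta_i)\le\max\{g_s(A\rho_i+d,d),\,g_{1-s}(A\rho_i+d,d)\}$. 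I would next show that once $A$ is large, for every $i$ with $\rho_i>0$ one has $g_s(A\rho_i+d,d)\ge g_{1-s}(A\rho_i+d,d)$ for all $s\le\tfrac12$ (this is where $A\rho_{\min}/d$ must be large), so at the maximizing $s$ all coordinates select the same branch; summing over $i$ and maximizing over $s$ then collapses the right-hand side to $\max_{s}\sum_i g_s(A\rho_i+d,d)=E(\text{on/off})$. Thus $E(\boldsymbol{x_1},\boldsymbol{x_2})\le E(\text{on/off})$ for every admissible pair, and, tracing the strict monotonicity of $g_s$, equality forces $\lambda_i=A\rho_i$ and $\mu_i=0$ at every coordinate with $\rho_i>0$; for non-degenerate $\boldsymbol{\pi}$ this pins the code down to $\boldsymbol{x_1}=(A,\dots,A)$, $\boldsymbol{x_2}=\boldsymbol{0}$, i.e.\ the on/off pair (up to relabeling the two messages).

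To pass from $E(\cdot)\le E(\text{on/off})$ to $P_e(\cdot)\ge P_e(\text{on/off})$ the sub-exponential prefactor cannot be ignored: a direct evaluation gives $P_e(\text{on/off})=\exp\!\big(-NA\|\boldsymbol{\pi}\|_1+o(A)\big)$ with an $o(A)$ correction of order $A\tfrac{\log\log A}{\log A}\gg\log A$, so a merely sub-linear exponent gap would be swamped. I would therefore split the competitors into two regimes. (i) If $(\boldsymbol{x_1},\boldsymbol{x_2})$ differs from the on/off pair by an $\Omega(1)$ amount in the rescaled entries $x_{jk}/A$, then the box-monotonicity bound above loses a full $\Omega(A)$, so $E(\text{on/off})-E(\boldsymbol{x_1},\boldsymbol{x_2})=\Omega(A)$, which dominates the $O(\log(1+V))=O(\log A)$ prefactor; combining the Chernoff upper bound for on/off (evaluated at an explicit $s$ near its minimizer) with the Bahadur--Rao lower bound for $(\boldsymbol{x_1},\boldsymbol{x_2})$ then yields $P_e(\boldsymbol{x_1},\boldsymbol{x_2})>P_e(\text{on/off})$ as soon as $A$ exceeds an explicit threshold. (ii) For codes in an $O(1)$-neighborhood of the on/off pair I would argue locally: the on/off pair is a vertex of the feasible box, and I would evaluate the first-order variations of $P_e$ there, show $\partial_{x_{1k}}P_e\le 0$ and $\partial_{x_{2k}}P_e\ge 0$ (so lowering any $x_{1k}$ below $A$, or raising any $x_{2k}$ above $0$, cannot help), and then supply a quantitative second-order estimate --- valid precisely when $A$ is large --- that keeps $P_e$ above its on/off value throughout the neighborhood. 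Recording the largest of the thresholds from (i), (ii) and the branch-selection step gives the explicit condition \eqref{c1-eq1}.

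The step I expect to be the genuine obstacle is regime (ii), and more broadly the interaction between the polynomial prefactors and the exponent at the boundary of a non-convex feasible set. Near the on/off vertex the two competing error events --- a ``miss'' when $B=1$ and a ``false alarm'' when $B=2$ --- both reduce to Poisson lower tails $\Pr[\mathsf{Poisson}(\Lambda)\le t]$ with $t\asymp\Lambda/\log\Lambda$, whose first and second variations in the codeword entries are, as the introduction warns for exactly this problem, bulky and not sign-definite term by term, so one must show that the particular combination arising here has the correct sign uniformly for $A\ge A^\ast$. This local delicacy is precisely the footprint of the non-convexity observed in Remark~\ref{remarkN}: it is why the global exponent bound by itself does not close the proof, and why --- much as in the proof of Theorem~\ref{t1}, where a relaxation is used to suppress local minima --- the argument ultimately needs a separate device, here the explicit large-$A$ vertex analysis, to rule out spurious local optima.
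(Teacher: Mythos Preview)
Your approach via the Chernoff/Bhattacharyya exponent is genuinely different from the paper's, and the place where it runs into trouble is exactly the one you flag: regime~(ii). The paper never performs a local first/second-order analysis at the on/off vertex, and it never needs to, because it has a tool you do not invoke --- an \emph{exact} (not merely exponential) monotonicity lemma for $P_e$ itself. This is the key conceptual difference.

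Concretely, the paper's Lemma~\ref{l1} says: if you raise $\lambda_i$ or lower $\mu_i$ on coordinates where $\lambda_i\ge\mu_i$, and do the opposite on coordinates where $\lambda_i<\mu_i$, then $P_e$ does not increase (the proof couples Poisson variables and keeps the old decision rule). Starting from any admissible $(\boldsymbol{\lambda},\boldsymbol{\mu})$, one step of this lemma produces a pair $(\boldsymbol{\lambda}',\boldsymbol{\mu}')$ with disjoint supports and no larger $P_e$. The case split is then on the support: either $\lambda'_i>0$ for every $i$ (so $\boldsymbol{\mu}'=\boldsymbol{0}$), in which case majorization plus anti-Robin-Hood moves (Lemma~\ref{l5}, Corollary~\ref{corrl1}, Lemma~\ref{l1} again) transform $(\boldsymbol{\lambda}',\boldsymbol{0})$ monotonically \emph{all the way} to the on/off pair; or else some $\lambda'_j=0$, which forces $\sum_i\lambda'_i\le NA\|\boldsymbol{\pi}\|_1-A\min_j\pi_j$, an $\Omega(A)$ power deficit, and then the crude lower bound $P_e\ge\tfrac12 e^{-(\sum_i\lambda'_i+(N+K-1)d)}$ (just the probability that all outputs are zero under message~1) already beats the Chernoff upper bound on the on/off code for large $A$. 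Either branch terminates without any local analysis.

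Your regime~(ii), by contrast, is left as a promise. The second-order computation you sketch is precisely what the introduction warns is intractable for this channel, and the $o(1)$ neighborhood in the rescaled variables $x_{jk}/A$ is still an $O(A)$-sized box in absolute coordinates, over which sign-definiteness of a Hessian would have to be established uniformly. In fact your own exponent machinery does not help here: for perturbations of fixed absolute size (say $x_{20}=1$) the Chernoff exponent gap is only $O(A\log\log A/(\log A)^2)$, not $\Omega(A)$, so your regime~(i) argument does not absorb it, yet such codes still lie in your regime~(ii). The paper's exact monotonicity dissolves this entire neighborhood in one stroke --- note that whenever $x_{1k}\ge x_{2k}$ for all $k$ (which holds throughout your regime~(ii) once $\delta<1/2$), one has $\lambda_i\ge\mu_i$ for all $i$, and Lemma~\ref{l1} then pushes directly to on/off with no derivative computation at all. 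If you want to complete your argument, that exact coupling lemma is the missing ingredient; the exponent analysis alone does not close the gap.
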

Proof of the above theorem is given in Section~\ref{five3}.

In Fig.~\ref{figa_sim} we compare the error probabilities of three coding schemes. The channels used for this plot are the as those used in Fig.~\ref{figp_sim}.  The three codes are as follows:
\begin{align}
\label{newa}
C_1&=\begin{pmatrix}
A & A &A&A\\
0&0&0&0
\end{pmatrix}, \nonumber\\
C_2&=\begin{pmatrix}
A & A &0&0\\
0&0&A&A
\end{pmatrix},\nonumber\\
C_3&=\begin{pmatrix}
A & A &A&0\\
0&0&0&A
\end{pmatrix}.
\end{align} 
Code $C_1$ corresponds to an on/off keying strategy which is the optimal code (according to Theorem~\ref{t3}) for both channel models when $A$ is sufficiently large.
Its error probability is illustrated in blue. Note that  the code $C_1$ generally beats the codes $C_2$ and $C_3$ even for small values of $A$. 
However, in Fig.~\ref{zoom} we we zoom Fig.~\ref{figa1} around $A=1.5$. As one can observe, for $A=1.5$, the error probability of code $C_1$ is $0.0504$ whereas the error probability of the code $C_3$ is $0.0485$ which is smaller. This indicates that for small values of $A$, it is not necessarily true that the code $C_1$ has the smallest error probability among all possible codes.

\begin{figure}
 \includegraphics[width=.45\textwidth]{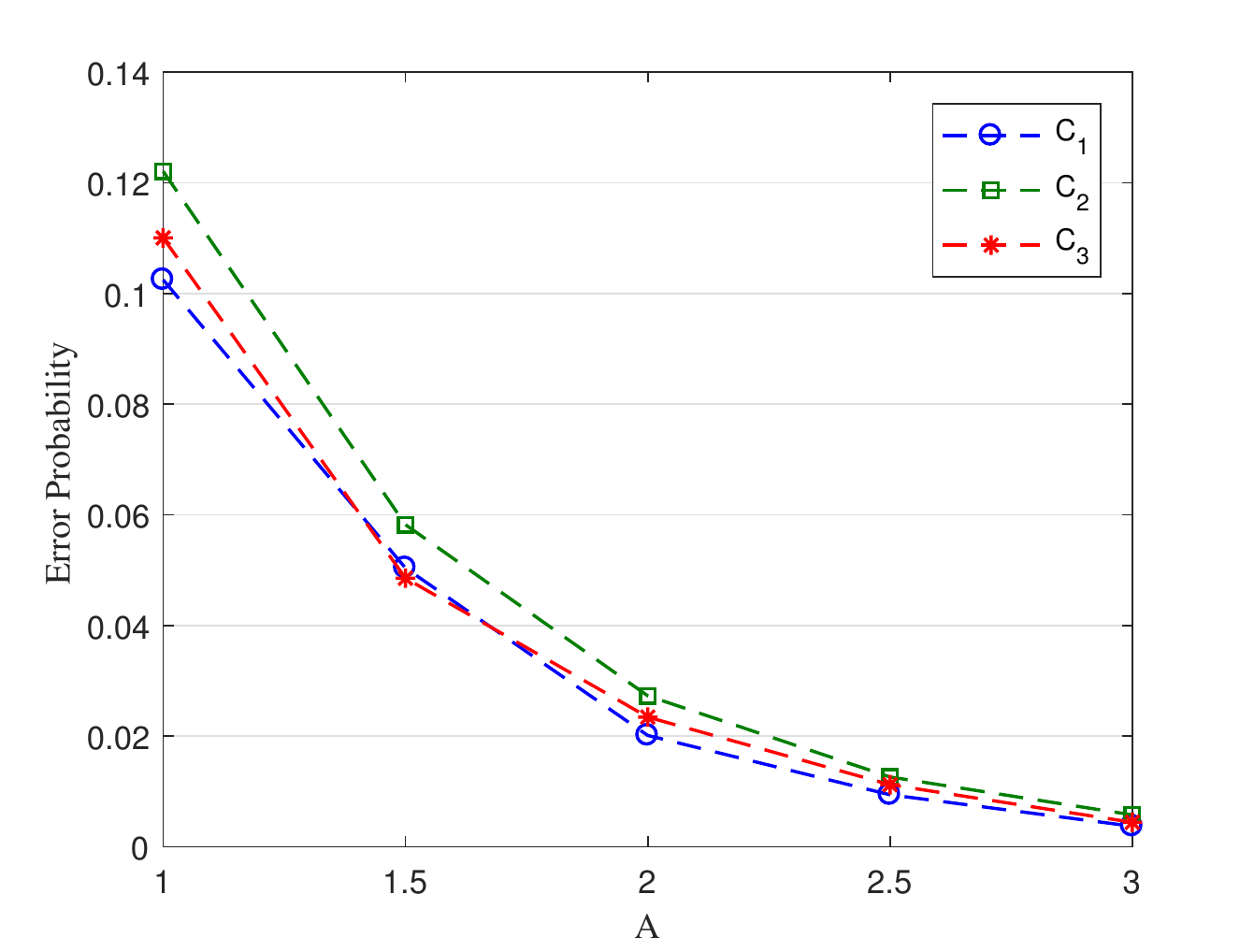}
 \centering
 \caption{The error probabilities of the three coding schemes mentioned in \eqref{newa}, for the channel with $K=1$, $\boldsymbol{\pi}=(1)$ and $d=0.25$. }
 \label{zoom}
 \end{figure}
\subsection{Code design subject to both peak and  total-power  constraints}
Assume that $K=1$ (no channel memory) and that we have both peak and  total-power  constraints and their ratio is $\beta$, \emph{i.e.,} $P=A\beta$. If
 $N \leq \lfloor \beta \rfloor $, the amplitude constraint implies the total-power constraint (the total power constraint equation becomes inactive), and a solution can be found. This case is similar to Theorem~\ref{t3} with $K=1$. The case of $N \geq \lfloor \beta \rfloor +1$ is treated in the following theorem in the  high-power  regime. For simplicity we assume $\pi_0=1$ in the following theorem (a similar result can be shown for arbitrary values of $\pi_0$).

\begin{figure*}[!t]
\normalsize
\setcounter{tempeqcounter}{\value{equation}}
\begin{align}
\setcounter{equation}{\value{storeeqcounter}}
\label{c1-eq1}
    &e^{-(N+K-1)\left(\frac {NA\left(\sum_{i=0}^{K-1} \pi_i\right)}{N+K-1}+d\right)}\nonumber\\
    &\quad\times \left( \small{\frac{e\left(\frac {NA\left(\sum_{i=0}^{K-1} \pi_i\right)}{N+K-1}+d\right)\log\left({\left(\frac {NA\left(\sum_{i=0}^{K-1} \pi_i\right)}{N+K-1}+d\right)}/{d}\right)}{\frac {NA\left(\sum_{i=0}^{K-1} \pi_i\right)}{N+K-1}}} \right)^{{(N+K-1)}/{\log\left({\left(\frac {NA\left(\sum_{i=0}^{K-1} \pi_i\right)}{N+K-1}+d\right)}/{d}\right)}}
 \nonumber\\
 &\leq \frac12 e^{-\left(\max_{j\in \{ 0, \ldots, K-1 \} }\left(-A\pi_j +NA\left(\sum_{i=0}^{K-1} \pi_i\right)\right)+(N+K-1)d\right)}.
\end{align}
\setcounter{equation}{\value{tempeqcounter}} 
\hrulefill
\vspace*{4pt}
\end{figure*}

\begin{theorem}
\label{t4}
Consider a discrete Poisson channel without memory  (i.e., $K= 1$), $\pi_0=1$ and dark noise $d > 0$. Fix some constant $\beta>0$ and { blocklength $N \geq \lfloor \beta \rfloor+1$}. Then there is another constant $A^*$ which depends only on $\beta$, $d$ and {$N$} such that the following holds: for a given peak-power constraint $A\geq A^*$ and  total-power  constraint $P=A\beta$, the optimal codeword pair $\boldsymbol{x_1}$ and $\boldsymbol{x_2}$ is 
\begin{align*}
{x_{1i}}&=\kappa_i, \qquad 0\leq i\leq N-1
\\
{x_{2i}}&=\eta_i, \qquad 0\leq i\leq N-1,
\end{align*}
where $\boldsymbol\kappa$ and $\boldsymbol\eta$ are sequences of infinite length defined as \eqref{c1-th7} given at the top of
page~\pageref{c1-th7}, for noninteger $\beta$, where $\lfloor\beta\rfloor$ and $\{\beta\}$ are the integer and fractional parts of $\beta$ respectively. For integer $\beta$, the sequences $\boldsymbol\kappa$ and $\boldsymbol\eta$ are defined as \eqref{c1-th71} given at the top of
page~\pageref{c1-th71}. More specifically, the code construction is optimal if $A$ satisfies
\newcounter{storeeqcounterth7}
\newcounter{tempeqcounterth7}
\setcounter{storeeqcounterth7}{\value{equation}}
\addtocounter{equation}{1}
\newcounter{storeeqcounterth71}
\newcounter{tempeqcounterth71}
\setcounter{storeeqcounterth71}{\value{equation}}
\addtocounter{equation}{1}
\begin{figure*}[!t]
\normalsize
\setcounter{tempeqcounterth7}{\value{equation}}
\begin{align}
\setcounter{equation}{\value{storeeqcounterth7}}
\label{c1-th7}
    \begin{pmatrix}
\boldsymbol{\kappa}\\
\boldsymbol{\eta}
\end{pmatrix}=\color{black} \Bigg( \color{black} \overbrace{\begin{matrix}
A & A & \cdots & A & A\{\beta\}  \\    
 0& 0 &\cdots & 0 & 0                         
\end{matrix}}^{\lfloor \beta \rfloor +1}
\begin{matrix}
&\\
&
\end{matrix}
\overbrace{\begin{matrix}
  0 & 0  &\cdots &0& 0\\ 
  A & A &\cdots &A& A\{\beta\}
\end{matrix}}^{\lfloor \beta \rfloor +1}
\begin{matrix}
&\\
&
\end{matrix}
\begin{matrix}
0 & 0 \cdots \\
0 & 0 \cdots 
\end{matrix}
\Bigg) 
\end{align}
\setcounter{equation}{\value{tempeqcounterth7}} 
\hrulefill
\vspace*{4pt}
\end{figure*}
\begin{figure*}[!t]
\normalsize
\setcounter{tempeqcounterth71}{\value{equation}}
\begin{align}
\setcounter{equation}{\value{storeeqcounterth71}}
\label{c1-th71}
    \begin{pmatrix}
\boldsymbol{\kappa}\\
\boldsymbol{\eta}
\end{pmatrix}=\color{black} \Bigg( \color{black} \overbrace{\begin{matrix}
A & A & \cdots & A & A  \\    
 0& 0 &\cdots & 0 & 0                         
\end{matrix}}^{\beta}
\begin{matrix}
&\\
&
\end{matrix}
\overbrace{\begin{matrix}
  0 & 0  &\cdots &0& 0\\ 
  A & A &\cdots &A& A
\end{matrix}}^{ \beta }
\begin{matrix}
&\\
&
\end{matrix}
\begin{matrix}
 0 & 0 \cdots \\
 0 & 0 \cdots 
\end{matrix}
 \Bigg) 
\end{align}
\setcounter{equation}{\value{tempeqcounterth71}} 
\hrulefill
\vspace*{4pt}
\end{figure*}
\begin{align}
    \frac12 e&^{-\left(A(\beta-\Gamma)+Nd\right)}\nonumber\\
    \geq & \ e^{-n\left(\frac {A\beta}{n}+d\right)}\nonumber\\
       &\times \left( \small{\frac{e\left(\frac {A\beta}{n}+d\right)\log\left({\left(\frac {A\beta}{n}+d\right)}/{d}\right)}{\frac {A\beta}{n}}} \right)^{{n}/{\log\left({\left(\frac {A\beta}{n}+d\right)}/{d}\right)}}\label{eqnNNNNNNNN22}
\end{align}
where
\begin{align}
  \Gamma\triangleq \begin{cases} \{\beta\} &\mbox{if } \{\beta\} > 0 \\
1 & \mbox{if } \{\beta\} = 0, \end{cases}\label{eqnNAAAAA}
\end{align}
and
\begin{align}
  n\triangleq \begin{cases} \lfloor \beta \rfloor +1 &\mbox{if } \{\beta\} > 0 \\
\beta & \mbox{if } \{\beta\} = 0. \end{cases}
\end{align}

\end{theorem}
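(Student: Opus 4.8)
The plan is to follow the architecture of the proof of Theorem~\ref{t3} (peak power only, high-power regime), now carrying the total-power budget $P=A\beta$ through every step. Since $K=1$ and $\pi_0=1$ we have $\lambda_i=x_{1i}$ and $\mu_i=x_{2i}$, so by \eqref{eqn:LMu} the ML decoder compares $\sum_{i=0}^{N-1}a_iY_i$ with $b$, where $a_i=\log\frac{x_{1i}+d}{x_{2i}+d}$ and $b=\sum_i(x_{1i}-x_{2i})=\sum_ix_{1i}-\sum_ix_{2i}$, and $P_e=\tfrac12\Pr(\sum a_iY_i<b\mid B=1)+\tfrac12\Pr(\sum a_iY_i\ge b\mid B=2)$. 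First I would compute (an upper bound on) the error of the candidate code. Its two codewords carry equal total mass $A\beta$, so $b=0$, and they sit on disjoint supports $S_1,S_2$; hence under $B=1$ the error collapses to a comparison $\Pr\bigl(\sum_{i\in S_1}Y_i<\sum_{i\in S_2}Y_i\bigr)$ between an independent $\mathsf{Poisson}$ of mean $\approx A\beta+nd$ and one of mean $\approx nd$ (with the evident modification for $\{\beta\}$ and for $N<2n$), and symmetrically for $B=2$. A Chernoff-and-Stirling estimate of this comparison probability produces exactly the right-hand side of \eqref{eqnNNNNNNNN22}, the Poisson lower-tail factor $\bigl(e(\tfrac{A\beta}{n}+d)\log(\cdot)/\tfrac{A\beta}{n}\bigr)^{n/\log(\cdot)}$ being the Stirling term of a lower deviation of a $\mathsf{Poisson}$ with mean $\tfrac{A\beta}{n}+d$.

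Second I would establish a matching lower bound for every code that is not a permutation of the candidate. The idea is to bound $P_e$ from below by the probability of a single explicitly chosen ``cheapest'' error event: under the hypothesis whose codeword carries the larger total mass, force every output coordinate in that codeword's support to realize the value $0$, except the one coordinate carrying the most mass, which is left free. Because the released total mass is at most $A\beta$ and the saved coordinate carries at most $A$ (respectively at most $A\{\beta\}$ when we are away from the candidate configuration, which is the worst case here), this event has probability at least $\tfrac12e^{-(A(\beta-\Gamma)+Nd)}$, the left-hand side of \eqref{eqnNNNNNNNN22}, with the $\tfrac12$ absorbing the tie-break when the two total masses coincide. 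Whenever \eqref{eqnNNNNNNNN22} holds we then get $P_e(\text{candidate})\le \text{RHS}\le \text{LHS}\le P_e(\text{any other code})$, which is the asserted optimality, with $A^*$ the least $A$ making \eqref{eqnNNNNNNNN22} an equality. To justify that a non-candidate code can never do better even when its two codewords have nearly equal masses, I would run a short exchange/smoothing argument in the spirit of the relaxation behind Theorem~\ref{t1}: moving mass inside a codeword toward the peak value $A$, and sliding the two codewords onto disjoint coordinates, only decreases the surrogate $\prod_ie^{-\frac12(\sqrt{x_{1i}+d}-\sqrt{x_{2i}+d})^2}$ (the Hellinger affinity of the two product-Poisson laws, which sandwiches $P_e$ from both sides), so it suffices to verify the lower bound on the extremal configurations this produces, namely the candidate and its local perturbations; and on those the $\lfloor\beta\rfloor$-at-$A$-plus-one-fractional structure, together with $n=\lceil\beta\rceil$ used slots, is forced by concavity of $\sqrt{\cdot+d}$ under $\sum x_{ji}\le A\beta$, $x_{ji}\le A$, giving the integer vs.\ noninteger split in \eqref{c1-th71}--\eqref{c1-th7}.

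The main obstacle is exactly the nonconvexity flagged in Remark~\ref{remarkN}: stationarity is not enough, so the lower bound must be made to hold for \emph{all} competing codes, in particular for near-optimal ones that differ from the candidate only in how the fractional slot or the idle slots $2n,\dots,N-1$ are populated. Getting the constants sharp enough that the comparison reduces to the clean inequality \eqref{eqnNNNNNNNN22} rather than a mere asymptotic statement requires two-sided control of the relevant Poisson lower tail $\Pr(\mathsf{Poisson}(\nu)\le t)$ at the threshold $t=n/\log((\tfrac{A\beta}{n}+d)/d)\to0$, for which I would use that once $A\ge A^*$ this tail equals $e^{-\nu}$ up to the stated polynomial-in-$\log A$ correction; and it requires a separate, slightly delicate treatment of the edge regime $\lfloor\beta\rfloor+1\le N<2\lceil\beta\rceil$, where the candidate's second codeword is truncated and the two error directions $B=1$ and $B=2$ are no longer symmetric.
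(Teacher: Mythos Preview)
Your proposal diverges from the paper's proof in a way that leaves a real gap. The paper does \emph{not} try to lower-bound the error of an arbitrary competing code and then compare. Instead it runs a sequence of \emph{genuinely error--non-increasing} transformations on any code $(\boldsymbol{x_1},\boldsymbol{x_2})$: first Lemma~\ref{l1} zeroes out the smaller of $x_{1i},x_{2i}$ at every coordinate (disjoint supports), then AR operations via Corollary~\ref{corrl1} pile each codeword into the shape $(A,\ldots,A,r,0,\ldots)$, then Lemma~\ref{l1} again pushes $r$ up until a constraint binds. The output of this pipeline is always one of two things: either the candidate (up to permutation), or a code whose \emph{larger} codeword has total mass at most $A(\beta-\Gamma)$. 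Only in the second case is the crude all-zeros lower bound $\tfrac12 e^{-(A(\beta-\Gamma)+Nd)}$ invoked (Lemma~\ref{l9}), and there it holds trivially because the mass is genuinely deficient.

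Your plan instead asserts the lower bound $\tfrac12 e^{-(A(\beta-\Gamma)+Nd)}$ for \emph{every} non-candidate code, and the justification offered does not deliver this. A code using full power $A\beta$ in each codeword but with overlapping supports, or with the fractional slot placed differently, is not the candidate, yet your ``force all outputs to zero except the heaviest coordinate'' event has probability of order $e^{-(A\beta-A)-(N-1)d}$ times a conditional error term you have not controlled; there is no reason this should exceed the claimed LHS with $\Gamma=\{\beta\}$. Your fallback via the Hellinger affinity also does not close the gap: the affinity only sandwiches $P_e$ with slack on both sides, so showing that your smoothing operations decrease the affinity does \emph{not} show they decrease $P_e$. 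The paper's Lemmas~\ref{l1} and~\ref{l3} are coupling arguments that move $P_e$ itself, and that is the missing ingredient. With those lemmas in hand the whole argument becomes the short reduction above, and the Chernoff/Stirling calculation you sketched for the upper bound on the candidate is exactly Lemma~\ref{l9}; no separate treatment of the regime $\lfloor\beta\rfloor+1\le N<2\lceil\beta\rceil$ or of the Hellinger surrogate is needed.
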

Proof of the above theorem is given in Section~\ref{five4}.

\section{Proofs}
\label{five}
In the proofs of Theorem~\ref{t1} and Theorem~\ref{t3}, we use the majorization and some related concepts  that are reviewed in Subsection~\ref{two}.

\subsection{Majorization}
\label{two}

Majorization is a preorder on vectors of real numbers.

We say that a sequence 
$\boldsymbol{a} \in \mathbb{R}^n$ weakly majorizes $\boldsymbol{b} \in \mathbb{R}^n$
from below and write it as $ \boldsymbol{a} \succ_{\w} \boldsymbol{b}$ (equivalently, we say that $\boldsymbol{b}$ is weakly majorized by $\boldsymbol{a}$ from below, written as $ \boldsymbol{b} \prec_{\w} \boldsymbol{a}$)   if and only if 
\begin{align*}
\sum_{i=1}^k {a}_i^{\downarrow} \geq \sum_{i=1}^k {b}_i^{\downarrow} \quad k=1, \ldots, n, 
\end{align*}
where $\boldsymbol{{a}}^{\downarrow}$ ($\boldsymbol{{b}}^{\downarrow}$) is
the vector with the same components with $\boldsymbol{a}$ ($\boldsymbol{b}$) that is sorted in the descending order.
If $\boldsymbol{a} \succ _{\w}\boldsymbol{b}$ and in addition $ \sum _{i=1}^{n}a_{i}=\sum _{i=1}^{n}b_{i}$ we say that $ \mathbf {a} $ majorizes $ \mathbf {b}$ written as $ \mathbf {a} \succ \mathbf {b} $.
 
\begin{definition}
A Robin Hood operation on a nonnegative sequence $\mathbf{a}$ replaces two elements $a_i$ and $a_j <a_i$ by $a_i - \epsilon$ and $a_j + \epsilon$, respectively, for some $\epsilon \in \left(0, a_i - a_j\right)$ \cite{arnold}. An Anti-Robin Hood operation (AR) does the opposite on nonnegative sequences. That is, for a sequence $\mathbf{a}$ with nonnegative entries,  it replaces $ a_i$ and $a_j <a_i$ by $a_i + \epsilon$ and $a_j - \epsilon$, for some $\epsilon \in \left(0, a_j\right]$.  
\end{definition}
\begin{lemma}[\cite{arnold}(p. 11)]
\label{l5}
Take two nonnegative sequences $\boldsymbol{\lambda}$ and $\boldsymbol{\pi}$ of the same length satisfying $\boldsymbol{\lambda} \prec \boldsymbol{\pi}$. Then starting from $\boldsymbol{\lambda}$, we can produce $\boldsymbol{\pi}$ by a finite number of AR operations. Equivalently we can reach $\boldsymbol{\lambda}$ from $\boldsymbol{\pi}$ with the finite sequence of Robin Hood operations.
\end{lemma}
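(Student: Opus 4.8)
The plan is to reprove the classical Hardy--Littlewood--P\'olya fact that the majorization order is generated by elementary transfers. First I would note that it suffices to prove the Robin Hood assertion (that $\boldsymbol\lambda$ is reached from $\boldsymbol\pi$ by finitely many Robin Hood operations); the AR assertion then follows by reading such a chain backwards, since the inverse of a Robin Hood move $(a_i,a_j)\mapsto(a_i-\epsilon,a_j+\epsilon)$ is an AR move, at worst after relabelling the two coordinates it touches. Since majorization is permutation invariant, I would also assume from the outset that $\boldsymbol\lambda$ and $\boldsymbol\pi$ are both written in nonincreasing order, and prove: starting from $\boldsymbol\pi$, a finite sequence of Robin Hood operations reaches $\boldsymbol\lambda$.

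The proof is an induction on the number $m$ of coordinates at which the sorted vectors $\boldsymbol\pi$ and $\boldsymbol\lambda$ differ; if $m=0$ there is nothing to do. For $m\ge1$, let $j$ be the smallest index with $\pi_j\neq\lambda_j$. Since $\pi_i=\lambda_i$ for $i<j$ while $\sum_{i\le j}\pi_i\ge\sum_{i\le j}\lambda_i$ by majorization, we get $\pi_j>\lambda_j$, so $\pi$ is ``too big'' at coordinate $j$. As the two vectors have the same total sum, there is a coordinate $k$ with $\pi_k<\lambda_k$; any such $k$ satisfies $k>j$, and I would take the smallest one. Using that $\boldsymbol\lambda$ is nonincreasing, so $\lambda_k\le\lambda_j<\pi_j$, one checks $\pi_j>\pi_k$ strictly, so the following is a legal Robin Hood operation: set $\epsilon:=\min(\pi_j-\lambda_j,\ \lambda_k-\pi_k)>0$, decrease $\pi_j$ by $\epsilon$, and increase $\pi_k$ by $\epsilon$, producing a vector $\boldsymbol\pi'$.

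It remains to verify the two properties that keep the induction running. First, $\boldsymbol\lambda\prec\boldsymbol\pi'$: the partial sums of $\boldsymbol\pi'$ coincide with those of $\boldsymbol\pi$ except on the indices $j,\dots,k-1$, where they decrease by $\epsilon$, and the choices of $j$, $k$, $\epsilon$ guarantee they stay $\ge$ the corresponding partial sums of $\boldsymbol\lambda$ (the total sum being unchanged). Second, $\boldsymbol\pi'$ agrees with $\boldsymbol\lambda$ in strictly more coordinates than $\boldsymbol\pi$ does: by the choice of $\epsilon$ at least one of $\pi'_j=\lambda_j$ or $\pi'_k=\lambda_k$ holds (a new agreement), the coordinates outside $\{j,k\}$ are untouched, and neither of $j,k$ turns from an agreement into a disagreement. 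If the operation perturbs the nonincreasing order I would re-sort $\boldsymbol\pi'$ before the next iteration, checking that re-sorting never raises the disagreement count. After finitely many such steps the process reaches $\boldsymbol\lambda$, completing the induction. (An alternative route, slightly heavier here, goes through the classical equivalences: $\boldsymbol\lambda\prec\boldsymbol\pi$ if and only if $\boldsymbol\lambda=D\boldsymbol\pi$ for a doubly stochastic $D$, then Birkhoff's theorem and a factorization of $D$ into transfer matrices.)

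I expect the genuine difficulty to be concentrated entirely in the second property above together with the legality of the elementary move: ensuring that at every stage one can pick a ``too big'' and a ``too small'' coordinate whose entries are \emph{strictly} ordered, and that the transfer of the precisely chosen size $\epsilon$ creates a fresh agreement while destroying none --- and, along with it, handling the re-sorting step and the degenerate ties (which is exactly why I would carry the argument out on the Robin Hood side rather than on the AR side). The partial-sum estimate in the first property is routine bookkeeping once the two indices and $\epsilon$ have been fixed.
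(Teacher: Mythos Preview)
The paper does not give its own proof of this lemma; it is stated as a known fact with a citation to \cite{arnold}, so there is no in-paper argument to compare against. Your approach is precisely the classical Hardy--Littlewood--P\'olya transfer argument found in that reference.

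There is, however, a genuine gap at the step you yourself flag. The assertion that ``re-sorting never raises the disagreement count'' is false with your choice of indices. Take $\boldsymbol\pi=(4,4,0,0)$ and $\boldsymbol\lambda=(3,2,2,1)$: your rule gives $j=1$, $k=3$, $\epsilon=1$, hence $\boldsymbol\pi'=(3,4,1,0)$ with three disagreements; re-sorting to $(4,3,1,0)$ restores all four. The standard repair is to pick the two indices so that $\pi_i=\lambda_i$ for every $i$ strictly between them --- for instance, take $j$ to be the \emph{largest} index with $\pi_j>\lambda_j$ and then $k>j$ the smallest index with $\pi_k<\lambda_k$. One then checks that such a $k$ exists (otherwise the $j$-th partial sums of $\boldsymbol\pi$ and $\boldsymbol\lambda$ would have to be simultaneously equal and strictly ordered), that $\pi_i=\lambda_i$ for all $j<i<k$, and consequently that $\boldsymbol\pi'$ remains nonincreasing after the transfer; no re-sorting is ever needed and the disagreement count strictly drops. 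With this adjustment your induction goes through cleanly.
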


\subsection{Proof of Theorem~\ref{t1}}
\label{five1}

The error probability depends on the two codewords $\bxo$ and $\bxoo$ through vectors $\blambda=\boldsymbol{x_1}* \boldsymbol{\pi}$ and $\bmu=\boldsymbol{x_2}* \boldsymbol{\pi}$ as defined in equations \eqref{eqnA2} and \eqref{eqnA3}. In particular, for a given dark noise $d$, we can view the error probability as a function of the pair $\left(\boldsymbol{\lambda},\boldsymbol{\mu}\right)$. We can write this as $\mathbb{P}_{\e}\left(\boldsymbol{\lambda},\boldsymbol{\mu}\right)$.

Let $\mathcal{T}$ be the set of pairs $\left(\boldsymbol{\lambda},\boldsymbol{\mu}\right)$ for which one can find nonnegative  sequences $\boldsymbol{x_1}, \boldsymbol{x_2}$ satisfying $\boldsymbol{\lambda}=\boldsymbol{x_1}* \boldsymbol{\pi}$ and $\boldsymbol{\mu}=\boldsymbol{x_2}* \boldsymbol{\pi}$ and 
the total-power constraint $P$ in \eqref{jadid1},
\begin{align*}
\mathcal{T}
\triangleq \big\{ \left(\boldsymbol{\lambda},\boldsymbol{\mu}\right) \in (\mathbb{R}_0^+ \times \mathbb{R}_0^+)^{N+K-1}: \exists \ \boldsymbol{x}_1,\boldsymbol{x}_2 \in  (\mathbb{R}_0^+)^N\\
 \text{satisfying \eqref{jadid1} and \eqref{eqnA2}, \eqref{eqnA3} } \big\}.
\end{align*}
Then, the problem can be expressed as follows:
\begin{align}
\label{op1}
\argmin_{\left(\boldsymbol{\lambda},\boldsymbol{\mu}\right)\in\mathcal{T}} \ \mathbb{P}_{\e}\left(\boldsymbol{\lambda},\boldsymbol{\mu}\right).
\end{align}
To solve this problem, we first relax the above optimization problem by defining a set $\mathcal{T}'$ satisfying
$\mathcal{T}\subset\mathcal{T'}$, and consider
\begin{align}
\label{op13}
\argmin_{\left(\boldsymbol{\lambda},\boldsymbol{\mu}\right)\in\mathcal{T'}} \ \mathbb{P}_{\e}\left(\boldsymbol{\lambda},\boldsymbol{\mu}\right).
\end{align}
We find a solution 
$\left(  \boldsymbol{\lambda}^\star,\boldsymbol{\mu}^\star \right) $ of the relaxed problem in \eqref{op13} and verify that the optimizer pair
$\left(\boldsymbol{\lambda}^\star,\boldsymbol{\mu}^\star\right)$ belongs to $\mathcal{T}$. This shows that the answers to the original and relaxed problems in \eqref{op1} and \eqref{op13} are the same. We highlight a crucial difference between the optimization problems in  \eqref{op1} and \eqref{op13} that is very helpful to our proof: given a pair $\left(\boldsymbol{\lambda},\boldsymbol{\mu}\right)\in\mathcal{T'}$, if we apply the same permutation on the sequences $\boldsymbol{\lambda}$ and $\boldsymbol{\mu}$, we obtain a pair in $\mathcal{T'}$. However, given a pair $\left(\boldsymbol{\lambda},\boldsymbol{\mu}\right)\in\mathcal{T}$, it is not necessarily true that we remain in $\mathcal{T}$ after permuting the two sequences. 

Let $$\mathcal{T}'
\triangleq\big\{\left(\boldsymbol{\lambda},\boldsymbol{\mu}\right) \in (\mathbb{R}_0^+ \times \mathbb{R}_0^+)^{N+K-1}: \boldsymbol{\lambda},\boldsymbol{\mu} \prec_\text{w}  \left[P\boldsymbol{\pi},\boldsymbol{0}_{N-1} \right]\big\}.$$
That is, $\mathcal{T}'$ is the set of pairs $\left(\boldsymbol{\lambda},\boldsymbol{\mu}\right)$ of sequences of length $N+K-1$ that are weakly majorized from below by the sequence $ \left[P\boldsymbol{\pi},\boldsymbol{0}_{N-1} \right]$. Here,  $ \left[P\boldsymbol{\pi},\boldsymbol{0}_{N-1} \right]$ is a sequence of length $N+K-1$ formed by concatenating the sequence $P\boldsymbol{\pi}$ of length $K$ with the all zero sequence $\boldsymbol{0}_{N-1}$ of length $N-1$. 

\begin{lemma} The relation $\mathcal{T}\subset\mathcal{T'}$ holds.\label{l4}
\end{lemma}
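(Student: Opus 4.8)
The plan is to take an arbitrary pair $(\boldsymbol{\lambda},\boldsymbol{\mu}) \in \mathcal{T}$ and show it lies in $\mathcal{T}'$, i.e., that both $\boldsymbol{\lambda}$ and $\boldsymbol{\mu}$ are weakly majorized from below by $[P\boldsymbol{\pi},\boldsymbol{0}_{N-1}]$. By definition of $\mathcal{T}$, there is a nonnegative sequence $\boldsymbol{x}_1 \in (\mathbb{R}_0^+)^N$ with $\sum_i x_{1i} \le P$ and $\boldsymbol{\lambda} = \boldsymbol{x}_1 * \boldsymbol{\pi}$; the argument for $\boldsymbol{\mu}$ is identical. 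So it suffices to prove the following claim: if $\boldsymbol{x}$ is a nonnegative length-$N$ sequence with $\sum_i x_i \le P$, then $\boldsymbol{x} * \boldsymbol{\pi} \prec_{\w} [P\boldsymbol{\pi},\boldsymbol{0}_{N-1}]$.

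First I would reduce to the case $\sum_i x_i = P$: since all entries of $\boldsymbol{x}*\boldsymbol{\pi}$ are nondecreasing (coordinatewise) in the entries of $\boldsymbol{x}$, and weak majorization from below is monotone under coordinatewise increase of the majorized vector, scaling $\boldsymbol{x}$ up to total mass exactly $P$ only makes the claim harder; so assume $\sum_i x_i = P$. Now write $\boldsymbol{x} = \sum_{i=0}^{N-1} x_i \boldsymbol{e}_i$ where $\boldsymbol{e}_i$ is the $i$-th standard basis vector. Then $\boldsymbol{x}*\boldsymbol{\pi} = \sum_i x_i (\boldsymbol{e}_i * \boldsymbol{\pi})$, and each $\boldsymbol{e}_i * \boldsymbol{\pi}$ is just $\boldsymbol{\pi}$ shifted to start at position $i$ (padded with zeros to length $N+K-1$). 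The key structural observation is that $\boldsymbol{x}*\boldsymbol{\pi}$ is therefore a convex combination, with weights $x_i/P$, of the shifted vectors $P(\boldsymbol{e}_i*\boldsymbol{\pi})$, each of which is a permutation (in fact a cyclic-type shift within a zero-padded window) of $[P\boldsymbol{\pi},\boldsymbol{0}_{N-1}]$ and hence has exactly the same sorted profile as $[P\boldsymbol{\pi},\boldsymbol{0}_{N-1}]$. Since each $P(\boldsymbol{e}_i*\boldsymbol{\pi})$ is majorized by (indeed equimajorized with) $[P\boldsymbol{\pi},\boldsymbol{0}_{N-1}]$, and majorization is preserved under convex combinations (the partial-sum functions $\boldsymbol{v}\mapsto \sum_{j=1}^k v_j^{\downarrow}$ are convex, hence subadditive under averaging), it follows that $\boldsymbol{x}*\boldsymbol{\pi} \prec_{\w} [P\boldsymbol{\pi},\boldsymbol{0}_{N-1}]$. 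Combining the bounds for $\boldsymbol{\lambda}$ and $\boldsymbol{\mu}$ gives $(\boldsymbol{\lambda},\boldsymbol{\mu}) \in \mathcal{T}'$.

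The main obstacle I anticipate is being careful about lengths and about the convexity step: one must verify that each $\boldsymbol{e}_i * \boldsymbol{\pi}$, once padded to length $N+K-1$, really does have the same multiset of entries as $[P\boldsymbol{\pi},\boldsymbol{0}_{N-1}]$ after scaling by $P$ (this uses $i$ ranging over $0,\dots,N-1$ so the shifted support $\{i,\dots,i+K-1\}$ always fits inside $\{0,\dots,N+K-2\}$, and that the number of appended zeros works out to $N-1$). Once that bookkeeping is pinned down, the convexity argument for $\prec_{\w}$ is standard (each $\sum_{j=1}^k(\cdot)_j^{\downarrow}$ is a convex function, so its value at an average is at most the average of its values, all of which equal $\sum_{j=1}^k (P\boldsymbol{\pi})_j^{\downarrow}$). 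An alternative route, if one prefers to avoid the convex-combination lemma, is to invoke Lemma~\ref{l5}/the Robin Hood characterization directly, but the convexity argument seems cleanest here.
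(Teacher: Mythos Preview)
Your proposal is correct and follows essentially the same route as the paper: write $\boldsymbol{x}*\boldsymbol{\pi}$ as a convex combination of shifted copies of $P\boldsymbol{\pi}$ (each a permutation of $[P\boldsymbol{\pi},\boldsymbol{0}_{N-1}]$) and conclude weak majorization. The paper works with the actual total $\bar{x}=\sum_i x_{1i}$ and then passes from $[\bar{x}\boldsymbol{\pi},\boldsymbol{0}_{N-1}]$ to $[P\boldsymbol{\pi},\boldsymbol{0}_{N-1}]$ at the end, citing \cite[Theorem~2.1]{arnold} for the convex-combination step, whereas you normalize to total mass $P$ up front and justify the same step via convexity of the Ky~Fan partial-sum functionals; these are cosmetic differences only.
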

\begin{proof}  We need to show that if $
\boldsymbol{\lambda}=\boldsymbol{x_1}* \boldsymbol{\pi}$ and $
\boldsymbol{\mu}=\boldsymbol{x_2}* \boldsymbol{\pi}
$ for some nonnegative  sequences $\boldsymbol{x_1}$ and $\boldsymbol{x_2}$ satisfying  
 the total-power constraint $P$ in \eqref{jadid1}  then
$$ \boldsymbol{\lambda},\boldsymbol{\mu} \prec_{\w}  \left[P\boldsymbol{\pi},\boldsymbol{0}_{N-1} \right].$$

Let $$\bar{x}=\sum_{i=0}^{N-1} x_{1i}\leq P.$$  We define a function $Z(\cdot, \cdot)$ that takes in an integer and a sequence of real numbers, and outputs another sequence of real numbers as follows: for $0\leq i\leq N-1$, let
$$Z\left(\bar{x}\boldsymbol{\pi},i\right) \triangleq \left[\boldsymbol{0}_i, \bar{x}\boldsymbol{\pi}, \boldsymbol{0}_{N-i-1}\right]$$
be a sequence of length $N+K-1$ formed by padding zeros to the beginning and end of the sequence $\bar{x}\boldsymbol{\pi}$. Then,\begin{align*}
\boldsymbol{\lambda}=\boldsymbol{x_1}* \boldsymbol{\pi}= \sum_{i=0}^{N-1} \frac{x_{1i}}{\bar{x}} Z\left(\bar{x}\boldsymbol{\pi},i\right).
\end{align*}
Observe that $\boldsymbol{\lambda}$ is expressed as a convex combination of the sequences $Z\left(\bar x\boldsymbol{\pi},i\right)$. Each of the sequences $Z\left(\bar{x}\boldsymbol{\pi},i\right)$ for $0\leq i\leq N-1$ is a permutation of $\left[\bar{x}\boldsymbol{\pi},\boldsymbol{0}_{N-1}\right]$. Therefore, from \cite[Theorem~2.1]{arnold} we conclude that $ \boldsymbol{\lambda} \prec_{\w} \left[\bar{x}\boldsymbol{\pi},\boldsymbol{0}_{N-1}\right]$. Since  $\left[\bar{x}\boldsymbol{\pi},\boldsymbol{0}_{N-1}\right]\prec_{\w} \left[P\boldsymbol{\pi},\boldsymbol{0}_{N-1}\right]$, we obtain 
$ \boldsymbol{\lambda}\prec_{\w} \left[P\boldsymbol{\pi},\boldsymbol{0}_{N-1}\right].$ The proof for 
$\boldsymbol{\mu} \prec_{\w} \left[P\boldsymbol{\pi},\boldsymbol{0}_{N-1}\right]$ is similar. 
\end{proof}

We claim that an optimal solution for \eqref{op13} is as \eqref{isiopt} given at the top of
page~\pageref{isiopt}. This claim would conclude the proof of the theorem since $\left(\boldsymbol{\lambda}^\star,\boldsymbol{\mu}^\star\right)\in\mathcal{T}$ as $\boldsymbol{\lambda}^\star=\boldsymbol{x_1}* \boldsymbol{\pi}$ and $\boldsymbol{\mu}^\star=\boldsymbol{x_2}* \boldsymbol{\pi}$ for  $\left(\boldsymbol{x_1},\boldsymbol{x_2}\right)$ given in the statement of the theorem.
\newcounter{storeeqcountern1}
\newcounter{tempeqcountern1}
\setcounter{storeeqcountern1}{\value{equation}}
\addtocounter{equation}{1}
\begin{figure*}[!t]
\normalsize
\setcounter{tempeqcountern1}{\value{equation}}
\begin{align}
\setcounter{equation}{\value{storeeqcountern1}}
\label{isiopt}
\begin{pmatrix}
\boldsymbol{\lambda^\star}\\
\boldsymbol{\mu^\star}
\end{pmatrix}=\left( \begin{matrix}
P\pi_0  & \color{black} \cdots \color{black} & P\pi_{K-2} & P\pi_{K-1}& 0& 0&  \color{black} \cdots \color{black} &0 & \color{black} \cdots \color{black} &0\\
 0& 0&\color{black} \cdots \color{black} &0 & P\pi_0  &\color{black} \cdots \color{black} &P\pi_{K-2} & P\pi_{K-1} & \color{black} \cdots \color{black} &0
\end{matrix}
\right)
\end{align}
\setcounter{equation}{\value{tempeqcountern1}} 
\hrulefill
\vspace*{4pt}
\end{figure*}

It remains to prove that $\left(\boldsymbol{\lambda^\star}, 
\boldsymbol{\mu^\star}\right)$ is an optimal solution. To show this, we start from an arbitrary pair $\left(\boldsymbol{\lambda},\boldsymbol{\mu}\right)$ and alter the sequences $\boldsymbol{\lambda}$ and $\boldsymbol{\mu}$ in a finite number of steps such that 
\begin{itemize}
\item the error probability does not increase in each step;
\item we end up with the sequences $\left(\boldsymbol{\lambda^\star}, 
\boldsymbol{\mu^\star}\right)$.
\end{itemize}

Therefore, $\left(\boldsymbol{\lambda^\star}, \boldsymbol{\mu^\star}\right)$ is a solution with the minimum error probability. To see this, take  an arbitrary pair $(\boldsymbol{\hat\lambda}, \boldsymbol{\hat\mu})$ with the error probability $\mathbb{P}_{\e} (\boldsymbol{\hat\lambda},\boldsymbol{\hat\mu})$. Then we can alter the pair $(\boldsymbol{\hat\lambda}, \boldsymbol{\hat\mu})$ in a finite steps and end up with the pair $\left(\boldsymbol{\lambda^\star}, \boldsymbol{\mu^\star}\right)$ while the error probability does not increase in each step. As a result the error probability at the final step should be less than or equal to the error probability of the initial pair. In other words, $\mathbb{P}_{\e} (\boldsymbol{\lambda^\star},\boldsymbol{\mu^\star}) \leq \mathbb{P}_{\e} (\boldsymbol{\hat\lambda},\boldsymbol{\hat\mu}) $. Since the pair $(\boldsymbol{\hat\lambda}, \boldsymbol{\hat\mu})$ was arbitrary, we conclude that $\left(\boldsymbol{\lambda^\star}, \boldsymbol{\mu^\star} \right)$ is a minimizer of the error probability. Please note that this does not imply that the pair $\left(\boldsymbol{\lambda^\star}, \boldsymbol{\mu^\star} \right)$ is a \emph{unique} minimizer, and we do not make this claim in the statement of the theorem.

Take an arbitrary pair $\left(\boldsymbol{\lambda},\boldsymbol{\mu}\right)$ of nonnegative real vectors of length $N+K-1$. 
Let 
$\mathcal{A}=\{0\leq i\leq N+K-2: \lambda_i\geq \mu_i\}$, $\mathcal{A}^c=\{0,1,2,\ldots, N+K-2\} \backslash  \mathcal A$.
Without loss of generality we may assume that $\vert \mathcal{A} \vert \geq \left(N+K-1\right)/2 \geq K$, otherwise we can swap $\boldsymbol{\lambda}$ and $\boldsymbol{\mu}$ and follow the same argument. Furthermore, by applying a permutation on indices, without loss of generality we may assume that
$\mathcal{A}=\{0,1, \ldots , |\mathcal{A}|-1\}$ and 
$\mathcal{A}^c=\{|\mathcal{A}|,  \ldots , N+K-2\}$. We move from $\left(\boldsymbol{\lambda},\boldsymbol{\mu}\right)$ to $\left(\boldsymbol{\lambda}^\star,\boldsymbol{\mu}^\star\right)$ in three phases of steps:

\textbf{Phase 1:}
Using Lemma~\ref{l1} in Appendix~\ref{appendixa}, we can decrease $\lambda_i, i\in \mathcal{A}^c, $
and $\mu_i, i\in \mathcal{A}, $ to zero and produce $ \boldsymbol{\lambda'},\boldsymbol{\mu'}$ such that the error probability does not increase. Since 
$ \boldsymbol{\lambda'} \prec_{\w} \boldsymbol{\lambda}$ and $ \boldsymbol{\mu'} \prec_{\w} \boldsymbol{\mu}$, we obtain that
$ \boldsymbol{\lambda'},\boldsymbol{\mu'} \prec_{\w} \left[P\boldsymbol{\pi},\boldsymbol{0}_{N-1}\right]$.

 \textbf{Phase 2:}
Utilizing Lemma~\ref{l6}, $\boldsymbol{\lambda'}$ and $\boldsymbol{\mu'}$ of Phase 1 are majorized by \eqref{eq-n2} given at the top of
page~\pageref{eq-n2}, where $t,t' \in \{0,1,\ldots,K-2 \} $, $r_1 \leq P\pi_{t+1}$ and $r_2 \leq P\pi_{t'+1}$. We can move from $\boldsymbol{\lambda'}$ and $\boldsymbol{\mu'}$ 
to $\boldsymbol{\lambda''}$ and $\boldsymbol{\mu''}$ with AR operations using Lemma~\ref{l5}. These operations can be done while keeping  coordinates of the $\boldsymbol \lambda$ sequence zero on indices in $\mathcal{A}^c $ and $\boldsymbol \mu$ coordinates zero on $\mathcal{A} $. For this reason, Corollary~\ref{corrl1} guarantees that the error probability does not increase during these AR operations.
\newcounter{storeeqcountern2}
\newcounter{tempeqcountern2}
\setcounter{storeeqcountern2}{\value{equation}}
\addtocounter{equation}{1}
\begin{figure*}[!t]
\normalsize
\setcounter{tempeqcountern2}{\value{equation}}
\begin{align}
\setcounter{equation}{\value{storeeqcountern2}}
\label{eq-n2}
\left( \begin{matrix}
\boldsymbol{\lambda''}\\
\boldsymbol{\mu''}
\end{matrix} \right) =
\left( \begin{matrix}
P\pi_0  & \color{black} \cdots \color{black} & P\pi_{t} & r_1 &0 & \color{black} \cdots \color{black} & 0 & 0& 0&  \color{black} \cdots \color{black} \\
 0& 0&\color{black} \cdots \color{black} &0 &0&\color{black} \cdots \color{black} & 0 & P \pi_0  &\color{black} \cdots \color{black} &P\pi_{t'}
\end{matrix}
\begin{matrix}
&0 & \color{black} \cdots \color{black} &0 \\
& r_2 & \color{black} \cdots \color{black} &0
\end{matrix} \right)
\end{align}
\setcounter{equation}{\value{tempeqcountern2}} 
\hrulefill
\vspace*{4pt}
\end{figure*}

\textbf{Phase 3:}
Since the length of $\boldsymbol{\lambda''}$ and $\boldsymbol{\mu''}$ are greater than or equal to $2K$, based on Lemma~\ref{l1} we can increase the elements of $\boldsymbol{\lambda''}$ and reach $\left[\boldsymbol{\pi}_K,\boldsymbol{0}_{N-1} \right]$. Similarly (after reordering of indices) from $\boldsymbol{\mu''}$ we can reach $\left[\boldsymbol{0}_{N-1},\boldsymbol{\pi}_K\right]$ such that the error probability does not increase. With reordering the columns we reach
$\boldsymbol{\lambda^\star}, 
\boldsymbol{\mu^\star}$
 given in \eqref{isiopt}. 
The proof is complete. $\hfill  \square $

\subsection{Proof of Theorem~\ref{t2}}
\label{five2}

We assume that $\pi_0>0, \pi_1>0$ (otherwise the statement is immediate). 
The length of memory   is 2, hence $\boldsymbol{\pi}=\left[\pi_0,\pi_1\right]$ and the sequences $\boldsymbol{\lambda}=\boldsymbol{x_1}*\boldsymbol{\pi}$ and $\boldsymbol{\mu}=\boldsymbol{x_2}*\boldsymbol{\pi}$ are respectively as follows:
\begin{align}
\begin{matrix}
\boldsymbol{\lambda}=\big[x_{10} \pi_0,  & x_{11} \pi_0+x_{10} \pi_1, & x_{11} \pi_1\big],\\
\boldsymbol{\mu}=\big[x_{20} \pi_0,  & x_{21} \pi_0+x_{20} \pi_1, & x_{21} \pi_1\big].
\end{matrix}\label{lmu-num}
\end{align}
We utilize the fact that if a codeword pair $\left(\boldsymbol{x_1}, \boldsymbol{x_2}\right)$ is optimal, it must satisfy the necessary conditions given in Lemma~\ref{l7}. Assume that we have a pair of codewords that is \emph{not} of the following form:
\begin{align*}
\begin{pmatrix}
\boldsymbol{x_1}\\
\boldsymbol{x_2}
\end{pmatrix}=\left( \begin{matrix}
P & 0 \\
 0& x
\end{matrix}
\right).
\end{align*}

Then, one of the following three cases hold:
\begin{itemize}
 
 \item[1-] \emph{One of the codewords (without loss  of generality $\boldsymbol{x_1}$) satisfies the following conditions: $${x_{10}+ x_{11} <P, \ x_{10},x_{11}>0.}$$} Since $x_{10},x_{11}>0$, from Lemma~\ref{l7}
we have
\begin{align*}
  \begin{matrix}
D_0 \pi_0 +D_1 \pi_1=0, \\
D_1 \pi_0 +D_2 \pi_1=0.
\end{matrix}
\end{align*}
Hence $D_0=a$, $D_1=-a\frac{ \pi_0}{\pi_1}$ and 
$
D_2=a\big(\frac{\pi_0}{\pi_1}\big)^2$ 
for some $a \in \mathbb{R}$. Therefore,  the signs of $D_i, \  0 \leq i \leq 2$ have three different possibilities as follows:
\begin{align*}
\begin{matrix}
D_0>0, & D_1<0, & D_2>0; \\
D_0<0, & D_1>0, & D_2<0; \\
D_0=0, & D_1=0, & D_2=0 .
\end{matrix}
\end{align*}
For each case, we can use 
Lemma~\ref{l7} to find an equivalent condition for decision rule coefficients $a_i$. Using the definition of $a_i$ given in \eqref{eqn:LMu} and the definitions of $\boldsymbol{\lambda}$ and $\boldsymbol{\mu}$ given in \eqref{lmu-num}, we obtain  the following conditions for the three cases respectively: 
\begin{enumerate}
    \item $x_{10} <x_{20},    x_{11} < x_{21}$ and $x_{11} \pi_0 + x_{10} \pi_1 > x_{20} \pi_0 + x_{21} \pi_1;$
    \item $x_{10} >x_{20},   x_{11} > x_{21}$ and $ x_{11} \pi_0 + x_{10} \pi_1 < x_{20} \pi_0 + x_{21} \pi_1;$
    \item $
    x_{10}=x_{20}$ and $  x_{11}=x_{21}$.
\end{enumerate}
One can directly verify that in the first two cases, the inequalities contradict each other. The third case implies that the two codewords are equal,  which is clearly the worst possible choice for the two codewords. 

\item[2-]\emph{One of the codewords (without loss  of generality $\boldsymbol{x_1}$) satisfies the following conditions: $${x_{10}+ x_{11} =P, \ x_{10},x_{11}>0}.$$} 
From Lemma~\ref{l7} we have
\begin{align*}
 D_0 \pi_0 +D_1 \pi_1 =D_1 \pi_0 + D_2 \pi_1\leq 0,
\end{align*}
which results in
\begin{align*}
 \ D_2>D_1 \iff D_0 >D_1.
\end{align*}
From $\pi_1,\pi_2>0$, the sign of $D_i$ can have the following possibilities:
\begin{align*}
\left \{\begin{matrix}
\text{(a)} :& D_0 = 0 & D_1 = 0 & D_2 = 0 \\ 
\text{(b)} : & D_0 < 0 & D_1 = 0 & D_2 < 0 \\
\text{(c)} : & D_0\leq 0 & D_1>0 & D_2 \leq 0\\
\text{(d)} : & D_0\geq 0 & D_1<0 & D_2 \geq 0 \\
\text{(e)} : & D_0\leq 0 & D_1<0 & D_2\leq 0 \\
\text{(f)} : & D_0\leq 0 & D_1<0 & D_2 \geq 0 \\
\text{(g)} : & D_0\geq 0 & D_1<0 & D_2 \leq 0 
\end{matrix} \right.
\end{align*}
We show that in each case (if it can occur), we can reach to $\left[P,0\right]$ or $\left[0,P\right]$ from $\boldsymbol{x_1}$ without increasing the error probability. \\ 
(a): Lemma~\ref{l7} implies that the two codewords are equal. Therefore the code is not optimal.\\
(b): Using Lemma~\ref{l7} for $D_0, D_2$ and from the definitions of $\boldsymbol{\lambda}$ and $\boldsymbol{\mu}$ given in \eqref{lmu-num}, we obtain that $x_{10}$ and $x_{11}$ are greater than $x_{20}$, $x_{21}$. Then, any linear combination of $x_{10}$ and $x_{11}$ is also greater than the same combination of $x_{20}$, $x_{21}$. This contradicts Lemma~\ref{l7} for $D_1$.  \\
(c),(d): These cases are similar to case (b). \\
(e): Using Lemma~\ref{l7} for $D_0, D_2$ we obtain that $x_{10}$ and $x_{11}$ are greater than or equal to
$x_{20}$ and $x_{21}$. Therefore, $\lambda_i\geq \mu_i$ for all $i$.
Lemma~\ref{l1} then implies that if we set decrease $x_{20},x_{21}$ to zero, the error probability would not increase. Next, if $x_{11}\pi_1 \leq x_{10} \pi_0$, Lemma~\ref{l3} shows that increasing $x_{10}$ to $P$ and decreasing $x_{11}$ to $0$ would not increase the error probability. A similar argument works for $x_{11}\pi_1 \geq x_{10} \pi_0$. \\
(f): We have $x_{10} \pi_0 \geq x_{20} \pi_0$,  $ x_{11}\pi_1 \leq x_{21}\pi_1$ and $x_{11} \pi_0 + x_{10} \pi_1 > x_{20} \pi_0 + x_{21} \pi_1$. If $\pi_0 \leq \pi_1$, we decrease $x_{11}$ to $0$ and increase $x_{10}$ to $P$; this would not increase $\mathbb{P}_{\e}$ by Lemma~\ref{l1}. If $\pi_0 > \pi_1$, we proceed as follows: we vary the values of $\boldsymbol{\lambda}$ in \eqref{lmu-num} in a number of steps such that the error probability does not increase in each step. While the values of $\boldsymbol{\lambda}$ that we obtain in the intermediate steps do not necessarily correspond to codewords $\boldsymbol{x_1}$, the final $\boldsymbol{\lambda}$ that we reach does correspond to codewords $\boldsymbol{x_1}$ as given in the statement of the theorem.
We first decrease $x_{11} \pi_0 + x_{10} \pi_1$ by $x_{11}\left(\pi_0-\pi_1\right)$ to $\left(x_{11}+x_{10}\right) \pi_1=P\pi_1$ and increase $x_{10}\pi_0$  by $x_{11}\left(\pi_0-\pi_1\right)$ to $x_{10}\pi_0+x_{11}\left(\pi_0-\pi_1\right)$. This change does not increase the error since $D_0 \leq 0$ and $D_2 \geq 0$; by Lemma~\ref{l7} we have $x_{10}\geq x_{20}$ and $x_{11}\leq x_{21}$; therefore, \[\frac{x_{10}\pi_0+d}{x_{20}\pi_0+d} \geq \frac{x_{11} \pi_0 + x_{10} \pi_1+d}{x_{20} \pi_0 + x_{21} \pi_1+d}.\] The latter follows the fact that for every positive $a, a', b,$ and $b'$, if $\frac{a}{a'}\leq \frac{b}{b'}$ then $\frac{a}{a'} \leq \frac{a+b}{a'+b'} \leq \frac{b}{b'}$. Hence, Lemma~\ref{l3} ensures that the error probability would not increase. Next, decreasing $x_{11}\pi_1$ to $0$ and then increasing $x_{10}\pi_0+x_{11}\left(\pi_0-\pi_1\right)$ to $\left(x_{10}+x_{11}\right)\pi_0=P\pi_0$ would not increase the error probability by Lemma~\ref{l1}. Therefore, we reach to codeword $\boldsymbol{\lambda}=\left[P\pi_0, P\pi_1, 0\right]$ that corresponds to codeword $\boldsymbol{x_1}=\left[P,0\right]$.\\ 
(g): This case is similar to case (f).
\item[3-] \emph{Either $\begin{pmatrix}
\boldsymbol{x_1}\\
\boldsymbol{x_2}
\end{pmatrix}=\left( \begin{matrix}
x & 0 \\
 0& x'
\end{matrix}
\right)$ or $\begin{pmatrix}
\boldsymbol{x_1}\\
\boldsymbol{x_2}
\end{pmatrix}=\left( \begin{matrix}
0 & x\\
 0& x'
\end{matrix}
\right)$ hold for some $0\leq x' \leq x \leq P$.} For the second case, error probability would not increase if we reduce $x'$ to $0$ and increase $x$ to $P$. For the first one if $ x\pi_1 \geq x' \pi_0$ we can increase $x$ to $P$ and else increase $x'$ to $P$. Hence the  optimal code has to have the following structure for some $0 \leq x \leq P$:
\begin{align*}
\begin{pmatrix}
\boldsymbol{x_1}\\
\boldsymbol{x_2}
\end{pmatrix}=\left( \begin{matrix}
P & 0 \\
 0& x
\end{matrix}
\right). 
\end{align*}
$\hfill  \square $
\end{itemize}

\subsection{Proof of Theorem~\ref{t3}}
\label{five3}

 Take an arbitrary codeword pair $(\boldsymbol{x_1}, \boldsymbol{x_2})$. The error probability of the codeword pair $(\boldsymbol{x_1}, \boldsymbol{x_2})$ is determined by $ \boldsymbol{\lambda}$ and $\boldsymbol{\mu}$ where
\begin{align}
\left(\begin{matrix}
\boldsymbol{\lambda}\\
\boldsymbol{\mu}
\end{matrix}\right)=\left(
\begin{matrix}
\boldsymbol{\pi}*\boldsymbol{x_1}\\
\boldsymbol{\pi}*\boldsymbol{x_2}
\end{matrix}\right)=\left(
\begin{matrix}
\lambda_0 &  \ldots &\lambda_{N+K-2} \\
\mu_0 &  \ldots & \mu_{N+K-2} 
\end{matrix}\right).
\end{align}
The idea of the proof is as follows:
we alter the pair $(\boldsymbol{\lambda}, \boldsymbol{\mu})$ in a number of steps such that (i) in each alteration step, the error probability of the altered pair is less than or equal to  the error probability of the unaltered pair, and (ii) at the end of the alteration steps, we either reach the pair corresponding to the one given in the statement of the theorem, or we reach a pair  whose error probability is greater than or equal to the one given in the statement of the theorem provided that $A$ is sufficiently large.  This shows the optimality of the codeword pair given in the statement of the theorem.

Let $\lambda'_i=\lambda_i \mathbf{1}[\lambda_i\geq \mu_i]$ and $\mu'_i=\mu_i \mathbf{1}[\lambda_i<\mu_i]$. By operations of Lemma~\ref{l1}, we can reach from $\left( \boldsymbol{\lambda}, \boldsymbol{\mu}\right)$ to $\left( \boldsymbol{\lambda'}, \boldsymbol{\mu'}\right)$ without increasing the error probability. However, observe that it may not be possible to express $\left( \boldsymbol{\lambda'}, \boldsymbol{\mu'}\right)$ as $\left(\boldsymbol{\pi}*\boldsymbol{x'_1}, \boldsymbol{\pi}*\boldsymbol{x'_2}\right)$ for some codewords $\boldsymbol{x'_1}$ and $\boldsymbol{x'_2}$. Nonetheless, we can identify two new codewords $\boldsymbol{x'_1}$ and $\boldsymbol{x'_2}$ whose error probability is less than or equal to that of the pair $\left( \boldsymbol{\lambda'}, \boldsymbol{\mu'}\right)$.

If $|\{i:\lambda'_i>0\}|=N+K-1$, we have $\boldsymbol{\mu'}=0$. Since $\boldsymbol \lambda'  \prec_{\w} \boldsymbol \lambda=\boldsymbol{x_1} *\boldsymbol{\pi} \prec_{\w} \boldsymbol{x'_1}*\boldsymbol{\pi} =\boldsymbol{\lambda''}$, where $\boldsymbol{x'_1}=\left[A,A,  \ldots ,A\right]$, utilizing Lemma~\ref{l6}, $\boldsymbol{\lambda'}$ is majorized by $\boldsymbol{\lambda''_1}=\left[\lambda''_0,\lambda''_1,\ldots,\lambda''_t, r_1,0,\ldots,0\right]$ where $\boldsymbol{\lambda''}$ is sorted in the descending order and $0 \leq r_1 \leq \lambda''_{t+1}$.  We can move from $\boldsymbol{\lambda'}$ to $\boldsymbol{\lambda''_1}$ with AR operations using
Lemma~\ref{l5}. Corollary~\ref{corrl1} guarantees that the error probability does not increase during
these AR operations. Using Lemma~\ref{l1} we can increase the
elements of $\boldsymbol{\lambda''_1}$ and reach to $\boldsymbol{\lambda''}=\boldsymbol{x'_1}*\boldsymbol{\pi}$ with power $NA\left(\sum_{i=0}^{K-1} \pi_i\right)$ without increasing the error probability. Therefore, the error probability of $\left( \boldsymbol{\lambda}, \boldsymbol{\mu}\right)$ is greater than or equal to the error probability of $\left( \boldsymbol{\lambda''}=\boldsymbol{\pi}*\boldsymbol{x'_1}, \boldsymbol{\mu'}=\boldsymbol{\pi}*\boldsymbol{x'_2}\right)$ where $\boldsymbol{x'_1}=\left[A,A,  \ldots ,A\right]$ and $\boldsymbol{x'_2}=\left[0,0,  \ldots ,0\right]$. We are done in this case. The case $|\{i:\lambda'_i>0\}|=0$ is similar.

 If $0<|\{i:\lambda'_i>0\}| <N+K-1$, we have $\mu_j>\lambda_j$ for some $j$. Without loss of generality assume that the power of $\boldsymbol{\lambda'}$ is greater than or equal to the power of $\boldsymbol{\mu'}$ (otherwise, we can swap the two). We claim that the power of $\boldsymbol{\lambda'}$ is less than or equal to
 \begin{align*}
 \max_{j\in \{ 0, \ldots, K-1 \} }\left(-A\pi_j +NA\left(\sum_{i=0}^{K-1} \pi_i\right)\right). 
 \end{align*}
 This holds if $x_{1i}=0$ for some $0 \leq i \leq N+K-2$ as the total would be at most $$\left(N-1\right)A\left(\sum_{i=0}^{K-1} \pi_{i}\right),$$ in this case. Thus, assume that $x_{1i}>0$ for all $0 \leq i \leq N+K-2$. 

 Suppose that the first index $j$ where $\mu_j>\lambda_j$ is $\bar j$. The weight of $\boldsymbol{\lambda'}$ equals $\sum_{{i:\lambda'_i \neq 0}}\lambda_i$. We have
\begin{align*}
&\sum_{i:\lambda'_i \neq 0}\lambda_i \\
& \leq \sum_{i=0, {i \neq \bar j}}^{N+K-2} \lambda_i \\
& \overset{\text{(a)}}{\leq} -\lambda_{\bar j} + \left( \sum_{i=0}^{N-1} x_{1i}\right) \left( \sum_{i=0}^{K-1} \pi_i \right)\\ 
&\overset{\text{(b)}}{\leq}
-\lambda_{\bar j}+\left( x_{1\left(\min\{ {\bar{j}, N-1} \}\right)}+\left(N-1\right)A\right) \left( \sum_{i=0}^{K-1} \pi_i \right)  \\
&\overset{\text{(c)}}{\leq}
-x_{{1\left(\min \{{\bar{j}, N-1} \}\right)}} \pi_{\left(\left(\bar{j}-\left(N-1\right)\right)_+\right)} \\
&\quad  -\left(A- x_{1\left(\min\{ {\bar{j}, N-1}\}\right)}\right) \left(\sum_{i=0}^{K-1} \pi_i\right)+NA\left(\sum_{i=0}^{K-1} \pi_i\right)\\
&\leq 
\max_{x \in [0, A]} \left \{ -x\pi_{\left(\left(\bar{j}-\left(N-1\right)\right)_+\right)} -\left(A-x\right) \sum_{i=0}^{K-1} \pi_i \right \}\\
&\qquad +NA\left(\sum_{i=0}^{K-1} \pi_i\right)\\
 & \overset{\text{(d)}}{\leq} 
 -A\pi_{\left(\left(\bar{j}-\left(N-1\right)\right)_+\right)} +NA\left(\sum_{i=0}^{K-1} \pi_i\right)\\
&\leq
\max_{j\in \{ 0, \ldots, K-1 \} }\left(-A\pi_j +NA\left(\sum_{i=0}^{K-1} \pi_i\right)\right).
\end{align*}

 In the above derivation, the step (a) follows from $\boldsymbol{\lambda}=\boldsymbol{x_1}*\boldsymbol{\pi} $ and nonnegativity of
$\boldsymbol{x_1}$ and $\boldsymbol{\pi}$  which imply that the total-power of $\boldsymbol{\lambda}$ (sum of its elements) is not greater than the total-power of $\boldsymbol{x_1}$ times the total-power of $\boldsymbol{\pi}$. Step (b) follows from $x_{1i} \leq A$ for $0 \leq i \leq N-1$ (the peak-power constraint). The step (c) follows from $\boldsymbol{\lambda}=\boldsymbol{x_1}*\boldsymbol{\pi} $ and the nonnegativity of $\boldsymbol{x_1}$ and $\boldsymbol{\pi}$ which imply $ \lambda_i \leq x_{{1\left(\min \{{i, N-1} \}\right)}} \pi_{\left(\left(i-\left(N-1\right)\right)_+\right)}$ for $0\leq i \leq N+K-2$. Finally (d) follows from the fact that a linear function of $x \in [a,b]$ takes its maximum at $a$ or $b$.

Since $\pi_j>0$ for all $j$, we have 
\begin{align}
\max_{j\in \{ 0, \ldots, K-1 \} }\left(-A\pi_j +NA\left(\sum_{i=0}^{K-1} \pi_i\right)\right)<NA\left(\sum_{i=0}^{K-1} \pi_i\right).
\label{opta}
\end{align}
From Lemma~\ref{l9}, we can find the upper bound \eqref{c1-eq2} given at the top of page~\pageref{c1-eq2}, 
on the error probability of the code given in the statement of the theorem (the codewords $\left[A,A,  \ldots ,A\right]$ and $\left[0,0,  \ldots ,0\right]$). We can also find the lower bound \newcounter{storeeqcounter1}
\newcounter{tempeqcounter1}
\setcounter{storeeqcounter1}{\value{equation}}
\addtocounter{equation}{1}
\begin{figure*}[!t]
\normalsize
\setcounter{tempeqcounter1}{\value{equation}}
\begin{align}
\setcounter{equation}{\value{storeeqcounter1}}
\label{c1-eq2}
    &e^{-(N+K-1)\left(\frac {NA\left(\sum_{i=0}^{K-1} \pi_i\right)}{N+K-1}+d\right)}\nonumber\\&\quad\times \left( \small{\frac{e\left(\frac {NA\left(\sum_{i=0}^{K-1} \pi_i\right)}{N+K-1}+d\right)\log\left({\left(\frac {NA\left(\sum_{i=0}^{K-1} \pi_i\right)}{N+K-1}+d\right)}/{d}\right)}{\frac {NA\left(\sum_{i=0}^{K-1} \pi_i\right)}{N+K-1}}} \right)^{{(N+K-1)}/{\log\left({\left(\frac {NA\left(\sum_{i=0}^{K-1} \pi_i\right)}{N+K-1}+d\right)}/{d}\right)}}
 \nonumber\\&=\mathcal{O}\left(e^{-NA\left(\sum_{i=0}^{K-1} \pi_i\right)}\log\left(NA\left(\sum_{i=0}^{K-1} \pi_i\right)\right)^{{(N+K-1)}/{\log\left(NA\left(\sum_{i=0}^{K-1} \pi_i\right)\right)}}\right)
\end{align}
\setcounter{equation}{\value{tempeqcounter1}} 
\hrulefill
\vspace*{4pt}
\end{figure*}
\begin{align*}
 \frac12 e^{-\left(\max_{j\in \{ 0, \ldots, K-1 \} }\left(-A\pi_j +NA\left(\sum_{i=0}^{K-1} \pi_i\right)\right)+(N+K-1)d\right)}
\end{align*}
on  the error probability of the code $(\boldsymbol{\lambda}',\boldsymbol{\mu}')$. For sufficiently large $A$ we obtain \eqref{c1-eq3} given at the top of page~\pageref{c1-eq3},
since the left hand side decays faster than the right hand side to zero as $A$ tends to infinity by  \eqref{opta}. 
Thus, for large values of $A$, the error probability of the pair $\left( \boldsymbol{\lambda'}, \boldsymbol{\mu'}\right)$  is greater than the error probability of the code given in the statement of the theorem.

\newcounter{storeeqcounter2}
\newcounter{tempeqcounter2}
\setcounter{storeeqcounter2}{\value{equation}}
\addtocounter{equation}{1}
\begin{figure*}[!t]
\normalsize
\setcounter{tempeqcounter2}{\value{equation}}
\begin{align}
\setcounter{equation}{\value{storeeqcounter2}}
\label{c1-eq3}
        &e^{-(N+K-1)\left(\frac {NA\left(\sum_{i=0}^{K-1} \pi_i\right)}{N+K-1}+d\right)}\nonumber\\&\quad\times \left( \small{\frac{e\left(\frac {NA\left(\sum_{i=0}^{K-1} \pi_i\right)}{N+K-1}+d\right)\log\left({\left(\frac {NA\left(\sum_{i=0}^{K-1} \pi_i\right)}{N+K-1}+d\right)}/{d}\right)}{\frac {NA\left(\sum_{i=0}^{K-1} \pi_i\right)}{N+K-1}}} \right)^{{(N+K-1)}/{\log\left({\left(\frac {NA\left(\sum_{i=0}^{K-1} \pi_i\right)}{N+K-1}+d\right)}/{d}\right)}}
 \nonumber\\&\leq \frac12 e^{-\left(\max_{j\in \{ 0, \ldots, K-1 \} }\left(-A\pi_j +NA\left(\sum_{i=0}^{K-1} \pi_i\right)\right)+(N+K-1)d\right)},
\end{align}
\setcounter{equation}{\value{tempeqcounter2}} 
\hrulefill
\vspace*{4pt}
\end{figure*}
$\hfill  \square $
\subsection{Proof of Theorem~\ref{t4}}
\label{five4}
Consider two arbitrary codewords  $\boldsymbol{ x_1}$ and $\boldsymbol {x_2}$. Consider the case that $x_{1i}$ and $x_{2i}$ are both positive for some $i$. Let us assume that $x_{1i}\leq x_{2i}$ (the other case is similar). Then, using Lemma~\ref{l1}, reducing $x_{1i}$ to zero does not increase the error probability. This change does also does not increase the power of the codeword $\boldsymbol{x}_1$. Thus, using Lemma~\ref{l1} repeatedly, we can reach to codewords of the following form without increasing the error probability:
\begin{align}
\left(
\begin{matrix}
\boldsymbol{x'_1}\\
\boldsymbol{x'_2} 
\end{matrix}\right)=\left(
 \begin{matrix}
   x'_{10} &  \ldots  & x'_{1\left(n-1\right)}&  0 &   \ldots   & 0 \\
 0&  \ldots  & 0 & x'_{2n} &   \ldots   & x'_{2\left(N-1\right)}
 \end{matrix}\right).
\end{align}
Using Corollary~\ref{corrl1}, we can reach the following codewords without increasing the error probability (with AR operations):
\begin{align}
&\left(
\begin{matrix}
\boldsymbol{x''_1}\\
\boldsymbol{x''_2} 
\end{matrix}\right)\nonumber\\
&\quad =\left(
 \begin{matrix}
   A &  \cdots  & A & x''_{1n'}&  0 &   \cdots   &0& 0  \\
  0&  \ldots  & 0 & 0& A &   \ldots   &A& x''_{2n''} 
 \end{matrix}
 \begin{matrix}
  &0 &  \ldots  & 0\\
   &0 &  \ldots  & 0 
 \end{matrix}\right)
 \label{eqnx2}
\end{align}
for some positive $x''_{1n'}$ and $x''_{2n''}$ and $n'' \leq \left(N-1\right)$. Without loss of generality assume that $$\sum_{i=0}^{N-1}{x''_{1i}}\geq \sum_{i=0}^{N-1}{x''_{2i}},$$
else we can switch the two codewords.
Using Lemma~\ref{l1}, increasing 
$x''_{1n'}$ or $x''_{2n''}$ would decrease the error probability of the code. Thus, $x''_{1n'}$ or $x''_{2n''}$ can be increased as long as the  power constraint allow it. This implies that $x''_{1n'}$ and $x''_{2n''}$ can be increased to either $A$ or $A\{\beta\}$. 

Two cases are possible: if $\sum_{i=0}^{N-1}{x''_{1i}}=A\beta$, then the error probability of the codeword given in \eqref{eqnx2} is greater than equal error probability of the codes given in the statement of the theorem by Lemma~\ref{l1}. Else if $\sum_{i=0}^{N-1}{x''_{1i}}<A\beta$, since  $x''_{1n'}$ is either $A$ or $A\{\beta\}$, we deduce that
$$A\beta-\sum_{i=0}^{N-1}{x''_{1i}}\geq A\Gamma$$
where $\Gamma>0$ is a constant defined as follows:
\begin{align}
  \Gamma\triangleq \begin{cases} \{\beta\} &\mbox{if } \{\beta\} > 0 \\
1 & \mbox{if } \{\beta\} = 0. \end{cases}\label{eqnNAAAA}
\end{align}
From Lemma~\ref{l9}, we can find the upper bound 
\begin{align*}
&e^{-n\left(\frac {A\beta}{n}+d\right)}\\ 
&\quad\times\left( \small{\frac{e\left(\frac {A\beta}{n}+d\right)\log\left({\left(\frac {A\beta}{n}+d\right)}/{d}\right)}{\frac {A\beta}{n}}} \right)^{{n}/{\log\left({\left(\frac {A\beta}{n}+d\right)}/{d}\right)}}
\end{align*}
on the error probability of the code given in the statement of the theorem, where
{\begin{align}
  n\triangleq \begin{cases} \lfloor \beta \rfloor +1 &\mbox{if } \{\beta\} > 0 \\
\beta & \mbox{if } \{\beta\} = 0. \end{cases}
\end{align}}
Similarly, from Lemma~\ref{l9}, we can find the lower bound
$$\frac12 e^{-\left(\sum_{i=0}^{N-1}{x''_{1i}}+Nd\right)}\geq \frac12 e^{-\left(A(\beta-\Gamma)+Nd\right)}$$
on the error probability of the code in \eqref{eqnx2}. 
This proves that the code given in the statement of the theorem has a smaller probability if $A$ is sufficiently large such that
\begin{align}
    &\frac12 e^{-\left(A(\beta-\Gamma)+Nd\right)}\nonumber\\
    & \quad \geq e^{-n\left(\frac {A\beta}{n}+d\right)}\nonumber\\
    & \qquad \times\left( \small{\frac{e\left(\frac {A\beta}{n}+d\right)\log\left({\left(\frac {A\beta}{n}+d\right)}/{d}\right)}{\frac {A\beta}{n}}} \right)^{{n}/{\log\left({\left(\frac {A\beta}{n}+d\right)}/{d}\right)}}.\label{eqnNNNNNNNN}
\end{align}
Observe that for large enough $A$, the above equation holds since the right hand side of of the order
\begin{align*}
\mathcal{O}\left(e^{-A\beta}\log\left(A\beta\right)^{{n}/{\log\left(A\beta\right)}}\right)
\end{align*}
as $A$ tends to infinity while $n<N$ is fixed, which vanishes faster than the left hand side of  \eqref{eqnNNNNNNNN} as $A$ tends to infinity.
$\hfill  \square $

\section{Conclusion}
\label{six}
In this paper, we studied the transmission of a bit over a discrete Poisson channel with memory  under a peak-power or total-power  constraint. For the case of having a  total-power  constraint, the optimal codewords for $N>K$ (code length greater than memory length) are derived. An simple ``bursty" code was shown to be optimal in this case. Interestingly, the codewords do not depend on the channel memory coefficients, meaning that the knowledge of channel memory coefficients is not necessary at the transmitter if $N>K$. The problem seems to be difficult to solve when $N \leq K$. In particular, Example~\ref{example6} shows that the channel memory coefficients affect the structure of the optimal code in this case. This special case is left as a future work.  When the  peak-power  constraint is imposed, an on/off
keying  strategy  is shown to be optimal in the high-power regime  regardless of the values of the channel coefficients. Finally, we also investigated the case where both the  peak-power  constraint and the  total-power  constraint are imposed. We only provided a result for a channel without memory in the  high-power regime. It does not seem easy to explicitly identify the optimal codewords for the most general case.

\appendices

\section{Some Lemmas}
\label{appendixa}
Take an arbitrary pair $\left(\boldsymbol{\lambda},\boldsymbol{\mu}\right)$ of nonnegative real vectors of length $N+K-1$. 
Let 
\begin{align}
    \mathcal{A}&=\{0\leq i\leq N+K-2: \lambda_i\geq \mu_i\},\label{defTildeA1}\\
    \mathcal{A}^c&=\{0,1,\ldots, N+K-2\} \setminus \mathcal A, \label{defTildeA2}
\end{align}
and 
\begin{align}
a_i&=\log \frac{\lambda_i+d}{\mu_i+d}, \ i \in \{0,1,\ldots, N+K-2\},\label{aid}
\\ b&=\sum_{i=0}^{N+K-2} \left(\lambda_i-\mu_i\right). \label{bd}
\end{align}
 Then,  the decision rule given in \eqref{eqn:LMu} can be written as follows.       
\begin{align}
\sum_{i=0}^{N+K-2}a_iY_i=\sum_{i \in \mathcal{A}} |a_i| Y_i   - \sum_{i \in \mathcal{A}^c} |a_i| Y_i
  \, \underset{2}{\overset{1}{\gtreqless}}  \, b.\label{eqn:DR1}
\end{align}
\begin{lemma}
\label{l1}
$\mathbb{P}_{\e}\left(\boldsymbol{\lambda},\boldsymbol{\mu}\right)\geq \mathbb{P}_{\e}\left(\boldsymbol{\lambda'},\boldsymbol{\mu'}\right)$ if 
$\lambda_i\leq \lambda'_i$, $\mu_i\geq \mu'_i$ for $i \in \mathcal{A}$, and $\lambda_i\geq \lambda'_i$, $\mu_i\leq \mu'_i$ for $i \in \mathcal{A}^c$, where $\mathcal{A}$ is defined in \eqref{defTildeA1} and $ \mathbb{P}_{\e}\left(\boldsymbol{\alpha},\boldsymbol{\beta}\right)$ is the error probability under the optimal decision rule for the pair $\left(\boldsymbol{\alpha},\boldsymbol{\beta}\right)$.
\end{lemma}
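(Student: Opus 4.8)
The plan is to pass to the Bayes form of the error probability and reduce the claim to a one-coordinate monotonicity statement. Since $B$ is uniform and the decoder is ML, the error probability of a pair $(\boldsymbol\lambda,\boldsymbol\mu)$ can be written as
\[
\mathbb{P}_{\e}(\boldsymbol\lambda,\boldsymbol\mu)=\frac12\sum_{\boldsymbol y}\min\!\Big(\prod_{i=0}^{N+K-2}e^{-(\lambda_i+d)}\frac{(\lambda_i+d)^{y_i}}{y_i!},\ \prod_{i=0}^{N+K-2}e^{-(\mu_i+d)}\frac{(\mu_i+d)^{y_i}}{y_i!}\Big),
\]
so it suffices to show that this quantity does not increase under the perturbation in the hypothesis (this is just the ML rule \eqref{eqn:LMu} expressed as a Bayes test, and is robust to ties $\lambda_i=\mu_i$).

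First I would connect $(\boldsymbol\lambda,\boldsymbol\mu)$ to $(\boldsymbol\lambda',\boldsymbol\mu')$ by finitely many one-entry moves, performed in the order: raise $\lambda_i$ to $\lambda_i'$ for $i\in\mathcal A$; then lower $\lambda_i$ to $\lambda_i'$ for $i\in\mathcal A^c$; then lower $\mu_i$ to $\mu_i'$ for $i\in\mathcal A$; then raise $\mu_i$ to $\mu_i'$ for $i\in\mathcal A^c$. An elementary check shows that along this entire path the current domination set $\{i:\text{current }\lambda_i\geq\text{current }\mu_i\}$ stays equal to $\mathcal A$, so every individual move has the following shape: a single coordinate $j$ is isolated, its $B=2$ mean $\mu_j+d$ is held fixed, and its $B=1$ mean $\lambda_j+d$ changes monotonically while remaining on the same side of $\mu_j+d$. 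By the symmetry swapping $(\boldsymbol\lambda,\boldsymbol\mu)$ and $\mathcal A\leftrightarrow\mathcal A^c$, it is enough to treat $j\in\mathcal A$, i.e. $\lambda_j\geq\mu_j$ with $\lambda_j$ increasing.

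The heart of the argument is then a scalar monotonicity lemma. Freeze all coordinates except $j$, set $c_1=\prod_{i\neq j}e^{-(\lambda_i+d)}(\lambda_i+d)^{y_i}/y_i!\geq0$ and $c_2=\prod_{i\neq j}e^{-(\mu_i+d)}(\mu_i+d)^{y_i}/y_i!\geq0$ for the contribution of the frozen coordinates, and put $\alpha=\lambda_j+d\geq\beta=\mu_j+d$. Summing the Bayes formula over the frozen coordinates, it suffices to prove that
\[
h(\alpha)\ :=\ \sum_{y\geq0}\min\!\Big(c_1\,e^{-\alpha}\frac{\alpha^{y}}{y!},\ c_2\,e^{-\beta}\frac{\beta^{y}}{y!}\Big)
\]
is nonincreasing in $\alpha$ on $[\beta,\infty)$ for every $c_1,c_2\geq0$. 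I would prove this by writing $h(\alpha)=\min_{A\subseteq\{0,1,2,\dots\}}\big[c_1\Pr_{\alpha}(Y\in A)+c_2\Pr_{\beta}(Y\notin A)\big]$ with $Y\sim\mathsf{Poisson}(\cdot)$, and noting that the minimizer $A^\star=\{y:c_1e^{-\alpha}\alpha^y/y!\leq c_2e^{-\beta}\beta^y/y!\}$ is a lower set $\{0,1,\dots,t\}$, since $\alpha\geq\beta$ makes the likelihood ratio $e^{-(\alpha-\beta)}(\alpha/\beta)^y$ nondecreasing in $y$. Then, for $\alpha_2>\alpha_1\geq\beta$, letting $A^\star_1=\{0,\dots,t\}$ be the minimizer at $\alpha_1$, we get $h(\alpha_2)\leq c_1\Pr_{\alpha_2}(Y\leq t)+c_2\Pr_{\beta}(Y>t)\leq c_1\Pr_{\alpha_1}(Y\leq t)+c_2\Pr_{\beta}(Y>t)=h(\alpha_1)$, where the second inequality uses $c_1\geq0$ together with the fact that $\Pr_{\alpha}(Y\leq t)$ is nonincreasing in $\alpha$ (the Poisson family is stochastically increasing in its parameter, e.g. via the thinning coupling). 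Combined with the reduction above, this yields $\mathbb{P}_{\e}(\boldsymbol\lambda',\boldsymbol\mu')\leq\mathbb{P}_{\e}(\boldsymbol\lambda,\boldsymbol\mu)$.

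The main obstacle is the scalar lemma: the individual summands $\min(c_1e^{-\alpha}\alpha^y/y!,\,c_2e^{-\beta}\beta^y/y!)$ are not monotone in $\alpha$, so a term-by-term comparison fails; the device that rescues the argument is rewriting the sum of minima as a minimum over subsets and then exploiting the monotone-likelihood-ratio structure of the Poisson family to reduce to threshold sets, for which stochastic monotonicity applies. The remaining point — that the chosen path of one-entry moves keeps the set $\mathcal A$ fixed, so the hypotheses of the scalar lemma are valid at each step — is routine but should be verified explicitly.
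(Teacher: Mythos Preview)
Your proof is correct, but it takes a more elaborate route than the paper's. Both arguments rest on the same two ingredients: (i) bound $\mathbb{P}_{\e}(\boldsymbol\lambda',\boldsymbol\mu')$ by the error of a \emph{suboptimal} rule, and (ii) use stochastic monotonicity of the Poisson family to compare. The paper implements this in one shot: it freezes the ML rule $(a_i,b)$ of the \emph{original} pair $(\boldsymbol\lambda,\boldsymbol\mu)$, applies that same rule to $(\boldsymbol\lambda',\boldsymbol\mu')$, and then couples $Y_i'\sim\mathsf{Poisson}(\lambda_i'+d)$ with $Y_i\sim\mathsf{Poisson}(\lambda_i+d)$ via $Y_i'=Y_i+Z_i$, $Z_i\sim\mathsf{Poisson}(\lambda_i'-\lambda_i)\geq0$ on $\mathcal A$ (and the analogous coupling on $\mathcal A^c$ and for $\boldsymbol\mu$). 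This handles all coordinates simultaneously, with no need to decompose into one-entry moves or to check that $\mathcal A$ stays fixed along a path. Your approach instead passes to the Bayes form $\tfrac12\sum_{\boldsymbol y}\min(\cdot,\cdot)$ and reduces to a scalar monotonicity lemma via MLR and threshold sets; this is perfectly valid and arguably more modular, but it costs you the path argument and the bookkeeping about $\mathcal A$.

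One small slip to clean up: your symmetry claim reduces the four move types to two, not one. Swapping $(\boldsymbol\lambda,\boldsymbol\mu)$ maps ``$j\in\mathcal A$, $\lambda_j$ increasing'' to ``$j\in\mathcal A^c$, $\mu_j$ increasing'', but the case ``$j\in\mathcal A$, $\mu_j$ decreasing'' is not covered by that swap alone. Of course the same scalar argument applies with the roles of the two arms reversed (your $h$ is nondecreasing in $\beta$ on $(0,\alpha]$ by the identical reasoning), so this is only a matter of stating it.
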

\begin{proof} $ \mathbb{P}_{\e}\left(\boldsymbol{\lambda'},\boldsymbol{\mu'}\right)$ is the error probability under the optimal decision rule for the pair $\left(\boldsymbol{\lambda'},\boldsymbol{\mu'}\right)$. If we use a different decision rule, the resulting error probability will be greater than or equal to $ \mathbb{P}_{\e}\left(\boldsymbol{\lambda'},\boldsymbol{\mu'}\right)$. We show that if we use the optimal decision rule of $\left(\boldsymbol{\lambda},\boldsymbol{\mu}\right)$ for the pair $\left(\boldsymbol{\lambda'},\boldsymbol{\mu'}\right)$, the error probability will be less than or equal to $\mathbb{P}_{\e}\left(\boldsymbol{\lambda},\boldsymbol{\mu}\right)$. This shows that 
$\mathbb{P}_{\e}\left(\boldsymbol{\lambda},\boldsymbol{\mu}\right)\geq \mathbb{P}_{\e}\left(\boldsymbol{\lambda'},\boldsymbol{\mu'}\right)$.

Let $a_i$ and $b$ be defined as in \eqref{aid} and \eqref{bd}. 
Then, $\mathbb{P}_{\e}\left(\boldsymbol{\lambda},\boldsymbol{\mu}\right)$ is the average of error probability under the hypothesis that message $B=1$ or message $B=2$ are transmitted. In other words, $\mathbb{P}_{\e}\left(\boldsymbol{\lambda},\boldsymbol{\mu}\right)$ equals 
\begin{align*}
 &\frac12 \mathbb{P}\left(
\sum_{i \in \mathcal{A}} |a_i| Y_i   - \sum_{i \in \mathcal{A}^c} |a_i| Y_i
< b \middle|  Y_i\sim \mathsf{Poisson}\left(\lambda_i+d\right)\right)\nonumber
\\&  + \frac12 \mathbb{P} \left(
\sum_{i \in \mathcal{A}} |a_i| Y_i   - \sum_{i \in \mathcal{A}^c} |a_i| Y_i
\geq b \middle|  Y_i\sim \mathsf{Poisson}\left(\mu_i+d\right)\right)
\end{align*}
where $Y_i$'s are mutually independent under each hypothesis.

With the decision rule given by $a_i$ and $b$, the error probability for the pair $\left(\boldsymbol{\lambda'},\boldsymbol{\mu'}\right)$, $\tilde{\mathbb{P}}_{\e}\left(\boldsymbol{\lambda'},\boldsymbol{\mu'}\right)$ is equal to the following.
\begin{align}
& \frac12 \mathbb{P}\left(
\sum_{i \in \mathcal{A}} |a_i| Y'_i   - \sum_{i \in \mathcal{A}^c} |a_i| Y'_i
< b \middle| Y'_i\sim \mathsf{Poisson}\left(\lambda'_i+d\right)\right)\nonumber\\
&+\frac12 \mathbb{P}\left(
\sum_{i \in \mathcal{A}} |a_i| Y'_i   - \sum_{i \in \mathcal{A}^c} |a_i| Y'_i
\geq b  \middle|  Y'_i\sim \mathsf{Poisson}\left(\mu'_i+d\right)\right)\label{ee}
\end{align}
To show that the expression given in \eqref{ee} is less than or equal to $\mathbb{P}_{\e}\left(\boldsymbol{\lambda},\boldsymbol{\mu}\right)$, it suffices to show that
\begin{align*} &\mathbb{P}\left(
\sum_{i \in \mathcal{A}} \right.  \left. |a_i| Y_i   - \sum_{i \in \mathcal{A}^c} |a_i| Y_i
< b \middle|  Y_i\sim \mathsf{Poisson}\left(\lambda_i+d\right)\right)\\ &\geq 
\mathbb{P}\left(
\sum_{i \in \mathcal{A}} |a_i| Y'_i   - \sum_{i \in \mathcal{A}^c} |a_i| Y'_i
< b \middle|  Y'_i\sim \mathsf{Poisson}\left(\lambda'_i+d\right)\right)
\end{align*}
and
\begin{align*}
&\mathbb{P}
\left( 
\sum_{i  \in \mathcal{A}} \right.  \left. |a_i| Y_i  - \sum_{i \in \mathcal{A}^c} |a_i| Y_i
\geq b \middle|  Y_i\sim \mathsf{Poisson}\left(\mu_i+d\right)\right)\\ 
&  \geq 
\mathbb{P}\left(
\sum_{i \in \mathcal{A}} |a_i| Y'_i   - \sum_{i \in \mathcal{A}^c} |a_i| Y'_i
\geq b \middle| Y'_i\sim \mathsf{Poisson}\left(\mu'_i+d\right)\right).
\end{align*}
We prove the first equation. The proof for the second one is similar. Let $ Y_i\sim \mathsf{Poisson}\left(\lambda_i+d\right)$ and $Y'_i\sim \mathsf{Poisson}\left(\lambda'_i+d\right)$ be independent Poisson random variables. It suffices to show that
\begin{align} 
&\mathbb{P}\left(
\sum_{i \in \mathcal{A}} |a_i| Y_i   - \sum_{i \in \mathcal{A}^c} |a_i| Y_i
< b\right)
\nonumber\\
&\quad\geq 
\mathbb{P}\left(
\sum_{i \in \mathcal{A}} |a_i| Y'_i   - \sum_{i \in \mathcal{A}^c} |a_i| Y_i
< b\right)\label{eqn:1a1}
\\&\quad\geq\mathbb{P}\left(
\sum_{i \in \mathcal{A}} |a_i| Y'_i   - \sum_{i \in \mathcal{A}^c} |a_i| Y'_i
< b\right),\label{eqn:1b1}
\end{align}
To show \eqref{eqn:1a1}, let ${Z}_i\sim \mathsf{Poisson}\left(\lambda'_i-\lambda_i\right)$ for $i\in\mathcal{A}$ be mutually independent of each other and of previously defined variables. Let
$\tilde{Y}_i=Y_i+{Z}_i$ for $i\in\mathcal{A}$. Observe that $\tilde Y_i\sim \mathsf{Poisson}\left(\lambda'_i+d\right)$ has the same distribution as $Y'_i$. Since $Z_i\geq 0$, 
\begin{align} 
&\mathbb{P}\left(
\sum_{i \in \mathcal{A}} |a_i| Y_i   - \sum_{i \in \mathcal{A}^c} |a_i| Y_i
< b\right)
\nonumber\\&\quad\geq
 \mathbb{P}\left(\sum_{i \in \mathcal{A}} |a_i| Z_i +
\sum_{i \in \mathcal{A}} |a_i| Y_i   - \sum_{i \in \mathcal{A}^c} |a_i| Y_i
< b\right)
\\&\quad= \mathbb{P}\left(
\sum_{i \in \mathcal{A}} |a_i| \tilde Y_i   - \sum_{i \in \mathcal{A}^c} |a_i| Y_i
< b\right)
\\&\quad= \mathbb{P}\left(
\sum_{i \in \mathcal{A}} |a_i|  Y'_i   - \sum_{i \in \mathcal{A}^c} |a_i| Y_i
< b\right),
\end{align}
where the last step utilizes the fact that $\tilde Y_i$ has the same distribution as $Y'_i$. 
The proof of \eqref{eqn:1b1} is similar and follows by defining ${Z}_i\sim \mathsf{Poisson}\left(\lambda_i-\lambda'_i\right)$  for $i\in\mathcal{A}^c$ and
$\tilde{Y}_i=Y'_i+{Z}_i$ for $i\in\mathcal{A}^c$.
\end{proof}

\begin{lemma}
\label{l3}
Take some pair $\left(\boldsymbol{\lambda},\boldsymbol{\mu}\right)$. 
Let $a_i$ be defined as in \eqref{aid}. Assume that $a_j\geq a_k$. Let $\boldsymbol{\lambda'}$ be equal to  $\boldsymbol{\lambda}$ in all indices except for indices $j,k$, \emph{i.e.,} $\lambda'_i=\lambda_i$ for $i\notin\{j,k\}$. Furthermore, 
$\lambda'_j=\lambda_j+\epsilon$ and $\lambda'_k=\lambda_k-\epsilon$ for some $\epsilon\leq \lambda_k$. Then, 
$\mathbb{P}_{\e}\left(\boldsymbol{\lambda},\boldsymbol{\mu}\right)\geq \mathbb{P}_{\e}\left(\boldsymbol{\lambda'},\boldsymbol{\mu}\right)$.
\end{lemma}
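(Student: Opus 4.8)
The plan is to follow the template of the proof of Lemma~\ref{l1}. Since $\mathbb{P}_{\e}(\boldsymbol{\lambda'},\boldsymbol{\mu})$ is the error probability of the \emph{optimal} (ML) decision rule for the pair $(\boldsymbol{\lambda'},\boldsymbol{\mu})$, it is enough to produce \emph{one} decision rule for $(\boldsymbol{\lambda'},\boldsymbol{\mu})$ whose error probability is at most $\mathbb{P}_{\e}(\boldsymbol{\lambda},\boldsymbol{\mu})$; I would use the rule that is optimal for the original pair $(\boldsymbol{\lambda},\boldsymbol{\mu})$, namely ``decode $\hat B=1$ iff $\sum_i a_iY_i\geq b$'' with the $a_i$ of \eqref{aid} and $b=\sum_i(\lambda_i-\mu_i)$. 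Two observations make the reduction clean: first, $\sum_i\lambda'_i=\sum_i\lambda_i$, so this threshold $b$ is also the one for $(\boldsymbol{\lambda'},\boldsymbol{\mu})$, and in any case we are free to keep the rule fixed; second, under hypothesis $B=2$ the outputs are $\mathsf{Poisson}(\mu_i+d)$ in both problems, so the conditional error under $B=2$ contributed by this fixed rule is exactly the same for $(\boldsymbol{\lambda},\boldsymbol{\mu})$ and $(\boldsymbol{\lambda'},\boldsymbol{\mu})$. Hence the whole claim reduces to showing that the conditional error under $B=1$ does not increase.

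That remaining step is an inequality between tail probabilities of weighted sums of independent Poissons, and I would prove it by an explicit coupling that yields stochastic dominance. Keep $Y'_i=Y_i$ for $i\notin\{j,k\}$. Start with independent $Y_j\sim\mathsf{Poisson}(\lambda_j+d)$, $Y_k\sim\mathsf{Poisson}(\lambda_k+d)$, and split off a ``$\mathsf{Poisson}(\epsilon)$ amount of mass'' from coordinate $k$: conditionally on $Y_k$, let $R\sim\mathsf{Binomial}\!\left(Y_k,\,\epsilon/(\lambda_k+d)\right)$, which is legitimate because $\epsilon\leq\lambda_k$ forces the success probability into $[0,1]$, and set $Y'_k:=Y_k-R$. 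By Poisson splitting, $R\sim\mathsf{Poisson}(\epsilon)$ and $Y'_k\sim\mathsf{Poisson}(\lambda_k-\epsilon+d)=\mathsf{Poisson}(\lambda'_k+d)$ are independent. Now put $Y'_j:=Y_j+R$; then $Y'_j\sim\mathsf{Poisson}(\lambda_j+\epsilon+d)=\mathsf{Poisson}(\lambda'_j+d)$, and since $Y_j$ is independent of the pair $(R,Y'_k)$ we get $Y'_j\perp Y'_k$, so $\{Y'_i\}$ indeed has the correct product-of-Poissons law for $(\boldsymbol{\lambda'},\boldsymbol{\mu})$. Under this coupling,
\begin{align*}
\sum_i a_iY'_i-\sum_i a_iY_i \;=\; a_j(Y'_j-Y_j)+a_k(Y'_k-Y_k)\;=\;(a_j-a_k)\,R\;\geq\;0,
\end{align*}
using $a_j\geq a_k$ and $R\geq 0$. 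Therefore $\sum_i a_iY'_i$ stochastically dominates $\sum_i a_iY_i$, so $\mathbb{P}\!\left(\sum_i a_iY'_i<b\right)\leq\mathbb{P}\!\left(\sum_i a_iY_i<b\right)$, which is precisely the non-increase of the $B=1$ error.

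Putting the two hypotheses together, the fixed rule $(a_i,b)$ applied to $(\boldsymbol{\lambda'},\boldsymbol{\mu})$ has error probability at most $\mathbb{P}_{\e}(\boldsymbol{\lambda},\boldsymbol{\mu})$, and since the ML rule for $(\boldsymbol{\lambda'},\boldsymbol{\mu})$ can only do better, $\mathbb{P}_{\e}(\boldsymbol{\lambda'},\boldsymbol{\mu})\leq\mathbb{P}_{\e}(\boldsymbol{\lambda},\boldsymbol{\mu})$. The main obstacle --- really the only nontrivial ingredient --- is the coupling: one must transport exactly a $\mathsf{Poisson}(\epsilon)$ quantity from the low-weight coordinate $k$ to the high-weight coordinate $j$ while keeping all coordinates independent with the prescribed Poisson marginals, and the hypothesis $\epsilon\leq\lambda_k$ is exactly what legitimizes the thinning. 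The degenerate cases $\epsilon=0$ or $a_j=a_k$ are immediate, since then the two pairs coincide or the two weighted sums are identically equal.
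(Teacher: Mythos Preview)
Your proof is correct and follows essentially the same approach as the paper: fix the ML rule of $(\boldsymbol{\lambda},\boldsymbol{\mu})$, note that the $B=2$ error is unchanged, and compare the $B=1$ errors via a coupling in which a $\mathsf{Poisson}(\epsilon)$ count is transferred between coordinates $j$ and $k$, yielding $\sum_i a_iY'_i-\sum_i a_iY_i=(a_j-a_k)R\geq 0$. The only cosmetic difference is the direction of the coupling---the paper starts from $Y'$ and decomposes $Y'_j=\tilde Y_j+Z_j$ to reconstruct variables distributed as $Y$, whereas you start from $Y$ and thin $Y_k$ to build $Y'$; both are the same Poisson splitting/merging idea.
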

\begin{corollary}\label{corrl1} Let $\boldsymbol{\lambda'}$ be the result of applying an AR operation on indices $j,k$ of $\boldsymbol{\lambda}$, and $\mu_{j}=\mu_{k}$ then $\mathbb{P}_{\e}\left(\boldsymbol{\lambda},\boldsymbol{\mu}\right)\geq \mathbb{P}_{\e}\left(\boldsymbol{\lambda'},\boldsymbol{\mu}\right)$ since the condition  $a_j\geq a_k$ is equivalent with  $\lambda_j\geq \lambda_k$.
\end{corollary}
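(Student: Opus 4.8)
The plan is to prove Lemma~\ref{l3} by reusing the ``frozen decision rule'' device from the proof of Lemma~\ref{l1}; Corollary~\ref{corrl1} then follows at once. Let $a_i$ and $b$ be the ML parameters computed from $(\boldsymbol{\lambda},\boldsymbol{\mu})$ as in \eqref{aid}--\eqref{bd}, and apply this (generally suboptimal) rule to the perturbed pair $(\boldsymbol{\lambda'},\boldsymbol{\mu})$. Since the genuine ML rule for $(\boldsymbol{\lambda'},\boldsymbol{\mu})$ can only do better, it suffices to show that the error of the frozen rule on $(\boldsymbol{\lambda'},\boldsymbol{\mu})$ is at most $\mathbb{P}_{\e}(\boldsymbol{\lambda},\boldsymbol{\mu})$. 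The error decomposes into a $B=1$ half and a $B=2$ half. The $B=2$ half involves only the law $\mathsf{Poisson}(\mu_i+d)$ and the frozen coefficients, neither of which changes, so it is literally identical for the two pairs. Hence everything reduces to the single inequality
\[
\mathbb{P}\Big(\textstyle\sum_i a_i Y'_i < b\Big)\ \le\ \mathbb{P}\Big(\textstyle\sum_i a_i Y_i < b\Big),
\]
where $Y_i\sim\mathsf{Poisson}(\lambda_i+d)$ and $Y'_i\sim\mathsf{Poisson}(\lambda'_i+d)$ are independent over $i$.

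To establish this I would couple the two families. The coordinates $i\notin\{j,k\}$ have the same law on both sides, so condition on them and set $Y'_i=Y_i$ there; the problem becomes comparing $a_jY_j+a_kY_k$ with $a_jY'_j+a_kY'_k$ against a fixed threshold $c=b-\sum_{i\ne j,k}a_iY_i$. The crucial point is the conservation identity $\lambda_j+\lambda_k=\lambda'_j+\lambda'_k$, which makes $Y_j+Y_k$ and $Y'_j+Y'_k$ both $\mathsf{Poisson}(\lambda_j+\lambda_k+2d)$; couple them to a common value $S$. Conditioned on $S=s$, $Y_j$ is binomial with $s$ trials and success probability $\tfrac{\lambda_j+d}{\lambda_j+\lambda_k+2d}$, while $Y'_j$ is binomial with the strictly larger probability $\tfrac{\lambda_j+d+\epsilon}{\lambda_j+\lambda_k+2d}$ (the hypothesis $\epsilon\le\lambda_k$ keeps $\lambda'_k\ge 0$, so these descriptions are valid). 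Hence $Y'_j$ stochastically dominates $Y_j$ and we may couple so that $Y'_j\ge Y_j$, forcing $Y'_k=s-Y'_j\le s-Y_j=Y_k$. Writing $a_jY'_j+a_kY'_k=(a_j-a_k)Y'_j+a_ks$ and invoking $a_j\ge a_k$ gives $a_jY'_j+a_kY'_k\ge a_jY_j+a_kY_k$ on this coupling, so $\{a_jY'_j+a_kY'_k<c\}\subseteq\{a_jY_j+a_kY_k<c\}$ for every $c$; averaging over the independent remainder yields the displayed inequality.

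The one place that needs care is the reduction in the first paragraph: one must check that freezing the rule at its $(\boldsymbol{\lambda},\boldsymbol{\mu})$-optimal value is legitimate (it is, by optimality of ML for $(\boldsymbol{\lambda'},\boldsymbol{\mu})$) and that the $B=2$ term really is untouched; after that, the coupling on the conserved sum $S$ together with the sign condition $a_j\ge a_k$ is the whole story. Finally, Corollary~\ref{corrl1} is immediate: if $\mu_j=\mu_k$ then $a_i=\log\frac{\lambda_i+d}{\mu_i+d}$ makes $a_j\ge a_k$ equivalent to $\lambda_j\ge\lambda_k$, and an AR operation on indices $j,k$ of $\boldsymbol{\lambda}$ raises the larger entry and lowers the smaller one by a common amount bounded by that entry --- precisely the perturbation treated in Lemma~\ref{l3}.
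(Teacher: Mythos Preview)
Your argument is correct and follows the paper's proof of Lemma~\ref{l3} almost verbatim: freeze the ML rule of $(\boldsymbol{\lambda},\boldsymbol{\mu})$, observe that the $B=2$ half is unchanged, and handle the $B=1$ half by a coupling that makes $\sum_i a_i Y'_i \ge \sum_i a_i Y_i$ almost surely; the corollary then drops out exactly as you say. The only difference is in the coupling mechanism. The paper uses Poisson thinning directly: it writes $Y'_j=\tilde Y_j+Z_j$ with $\tilde Y_j\sim\mathsf{Poisson}(\lambda_j+d)$ and $Z_j\sim\mathsf{Poisson}(\epsilon)$ independent, then sets $\tilde Y_k=Y'_k+Z_j\sim\mathsf{Poisson}(\lambda_k+d)$, so that $a_jY'_j+a_kY'_k=a_j\tilde Y_j+a_k\tilde Y_k+(a_j-a_k)Z_j\ge a_j\tilde Y_j+a_k\tilde Y_k$. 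You instead condition on the conserved sum $S=Y_j+Y_k$ and invoke binomial stochastic dominance. Both couplings preserve the sum and yield $Y'_j\ge Y_j$, $Y'_k\le Y_k$; the paper's is marginally shorter since it avoids the detour through binomials, while yours makes the role of the conservation law $\lambda_j+\lambda_k=\lambda'_j+\lambda'_k$ more explicit.
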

\begin{proof} As in the proof of Lemma~\ref{l1}, it suffices to show that if we use the optimal decision rule of $\left(\boldsymbol{\lambda},\boldsymbol{\mu}\right)$ for the pair $\left(\boldsymbol{\lambda'},\boldsymbol{\mu}\right)$, the error probability will be less than or equal to $\mathbb{P}_{\e}\left(\boldsymbol{\lambda},\boldsymbol{\mu}\right)$. More specifically, let $a_i$ and $b$ be defined as in \eqref{aid} and \eqref{bd}.  We use the decision rule given in \eqref{eqn:DR1} for the pair $\left(\boldsymbol{\lambda'},\boldsymbol{\mu}\right)$. With this decision rule, the error probability under the hypothesis $B=2$ is the same for the pairs  $\left(\boldsymbol{\lambda},\boldsymbol{\mu}\right)$ and  $\left(\boldsymbol{\lambda'},\boldsymbol{\mu}\right)$. Therefore, it remains to show that
\begin{align*} &\mathbb{P}\left(
\sum_{i=0}^{N+K-2}a_i Y_i
< b\right)\geq\mathbb{P}\left(
\sum_{i=0}^{N+K-2}a_i Y'_i
< b\right).
\end{align*}
where  $ Y_i\sim \mathsf{Poisson}\left(\lambda_i+d\right)$ and $Y'_i\sim \mathsf{Poisson}\left(\lambda'_i+d\right)$ are mutually independent Poisson random variables.
{ We now use the following fact about Poisson random variables: given an arbitrary random variable $W\sim \mathsf{Poisson}(\alpha_1+\alpha_2)$, we can decompose $W$ as $W=W_1+W_2$ (with probability one) where 
$W_1\sim \mathsf{Poisson}(\alpha_1)$ and $W_2\sim \mathsf{Poisson}(\alpha_2)$ are independent. Random variables $W_1$ and $W_2$ can be constructed from $W$ using the thinning property of the Poisson random variable.
}
Using this fact, we can find mutually independent random variables $\tilde{Y}_j\sim\mathsf{Poisson}\left(\lambda_i+d\right)$ and $Z_j\sim\mathsf{Poisson}\left(\epsilon\right)$ such that $Y'_j=\tilde{Y}_j+Z_j$. Since we assumed that $Y'_i$ for $0\leq i\leq N+K-2$ are mutually independent random variables, the random variables $\tilde{Y}_j$ and $Z_j$ can be also assumed to be independent of $Y'_i$ for all $i\neq j$. 

Next,
\begin{align*}
a_jY'_j+a_kY'_k &=a_j\left(\tilde{Y}_j+Z_j\right)+a_kY'_k \\
&=a_j\tilde{Y}_j+a_k\left(Y'_k+Z_j\right)+\left(a_j-a_k\right)Z_j\\
&\geq a_j\tilde{Y}_j+a_k\left(Y'_k+Z_j\right).
\end{align*}
Let $\tilde{Y}_k=Y'_k+Z_j \sim\mathsf{Poisson}\left(\lambda_k+d\right)$, and $\tilde{Y}_i=Y_i$ for all $i\notin\{j,k\}$. By the above equation, with probability one we have
$$\sum_{i=0}^{N+K-2}a_iY'_i \geq \sum_{i=0}^{N+K-2}a_i\tilde Y_i $$
Therefore,
\begin{align*} &\mathbb{P}\left(
\sum_{i=0}^{N+K-2}a_i \tilde{Y}_i
< b\right)\geq\mathbb{P}\left(
\sum_{i=0}^{N+K-2}a_i Y'_i
< b\right).
\end{align*}
The proof is finished by noting that $\tilde{Y}_i$'s are mutually independent and have the same distribution as $Y_i$'s. 
\end{proof}

\begin{lemma}
\label{l6}
If $\boldsymbol{\lambda} \prec_{\w} \boldsymbol{\pi}$ for two decreasing sequences $\boldsymbol{\lambda}$ and $\boldsymbol{\pi}$ of length $N$ (\emph{i.e.,} $\lambda_i\geq \lambda_j$ for $i\leq j$). Then $\boldsymbol{\lambda} \prec \boldsymbol{\pi'}$ where
 \begin{align*}
 \boldsymbol{\pi'} = [\pi_0,\pi_1,\ldots, \pi_t,\sum_{i=0}^{N-1}\lambda_i-\sum_{j=0}^{t}\pi_j, 0,\ldots,0]
 \end{align*}
for some $t\in\{0,1, \ldots , N-2\}$ satisfying
 \begin{align}
0\leq \sum_{i=0}^{N-1}\lambda_i-\sum_{j=0}^{t}\pi_j \leq \pi_{t+1}.\label{eqn1nn}
 \end{align}
\end{lemma}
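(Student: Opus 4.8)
The plan is to strip everything down to partial-sum inequalities, using the fact that $\boldsymbol{\lambda}$ and $\boldsymbol{\pi}$ are already listed in decreasing order. Write $L_k=\sum_{i=0}^{k}\lambda_i$ and $\Pi_k=\sum_{j=0}^{k}\pi_j$ for the partial sums, and set $S=L_{N-1}=\sum_{i=0}^{N-1}\lambda_i$. For sequences already sorted in descending order, the hypothesis $\boldsymbol{\lambda}\prec_{\w}\boldsymbol{\pi}$ says exactly that $L_k\le \Pi_k$ for every $k\in\{0,1,\ldots,N-1\}$; in particular $S\le \Pi_{N-1}$. Likewise, the conclusion $\boldsymbol{\lambda}\prec\boldsymbol{\pi'}$ will amount to checking that $\boldsymbol{\pi'}$ is decreasing, that $L_k\le \Pi'_k$ for all $k$ where $\Pi'_k=\sum_{j=0}^{k}\pi'_j$, and that $\sum_j\pi'_j=S$.

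Next I would pick the index $t$. Since the entries of $\boldsymbol{\pi}$ are nonnegative (as in every setting where the lemma is invoked), the partial sums $\Pi_0\le \Pi_1\le\cdots\le \Pi_{N-1}$ are nondecreasing, with $\Pi_{N-1}\ge S$. Assuming $\pi_0=\Pi_0\le S$ (the case $\pi_0>S$ is trivial: then $\boldsymbol{\lambda}\prec[S,0,\ldots,0]$ because the nonnegative entries of $\boldsymbol{\lambda}$ sum to $S$), let $t$ be the largest index in $\{0,1,\ldots,N-2\}$ with $\Pi_t\le S$, and put $r=S-\Pi_t$. Then $r\ge 0$ by the choice of $t$, while maximality of $t$ (together with $\Pi_{N-1}\ge S$, which covers the case $t=N-2$) gives $\Pi_{t+1}\ge S$, i.e.\ $r=S-\Pi_t\le \pi_{t+1}$. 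This is precisely the requirement \eqref{eqn1nn}. Define $\boldsymbol{\pi'}=[\pi_0,\ldots,\pi_t,r,0,\ldots,0]$.

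It then remains to verify the three conditions. First, $\boldsymbol{\pi'}$ is decreasing: the $\pi_j$ are decreasing by hypothesis and $r\le \pi_{t+1}\le \pi_t$, so the inserted entry $r$ sits below $\pi_t$, and the trailing $0$'s are $\le r$. Second, $\sum_j\pi'_j=\Pi_t+r=S=\sum_i\lambda_i$, which supplies the equal-sum condition that upgrades weak majorization to majorization. Third, the partial-sum inequalities: for $k\le t$ we have $\Pi'_k=\Pi_k\ge L_k$ straight from the hypothesis; for $k=t+1$ we have $\Pi'_{t+1}=\Pi_t+r=S=L_{N-1}\ge L_{t+1}$ since $\lambda_i\ge 0$; and for $k\ge t+1$ we have $\Pi'_k=S\ge L_{N-1}\ge L_k$. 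Because both $\boldsymbol{\lambda}$ and $\boldsymbol{\pi'}$ are sorted, these comparisons are exactly $\boldsymbol{\lambda}\prec_{\w}\boldsymbol{\pi'}$, and with the equal-sum condition we conclude $\boldsymbol{\lambda}\prec\boldsymbol{\pi'}$.

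The argument is essentially bookkeeping, so there is no deep obstacle; the one place needing care is the selection of $t$ and its boundary behaviour — ensuring $t\in\{0,\ldots,N-2\}$ so that $\pi_{t+1}$ is defined, that the leftover mass $r$ actually lands in $[0,\pi_{t+1}]$ (rather than merely being nonnegative), and that inserting $r$ keeps $\boldsymbol{\pi'}$ in decreasing order, since the clean "compare partial sums" characterization of majorization is only valid for sorted sequences. The degenerate situations $r=0$ (i.e.\ $\Pi_t=S$) and $\pi_0>S$ are the only edge cases, and both are immediate.
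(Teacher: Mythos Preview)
Your proof is correct and follows essentially the same approach as the paper: choose $t$ so that $\Pi_t\le S\le \Pi_{t+1}$ and then check the majorization directly via partial sums. You have simply spelled out the ``direct verification by inspection'' that the paper omits, and in addition flagged the boundary case $\pi_0>S$ (which, strictly speaking, falls outside the lemma's stated range $t\in\{0,\ldots,N-2\}$ but is harmless for how the lemma is used).
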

\begin{proof}Since $0\leq \sum_{i=0}^{N-1}\lambda_i \leq \sum_{j=0}^{N-1}\pi_j$, there is some $t$ satisfying
 \begin{align}
\sum_{j=0}^{t}\pi_j\leq \sum_{i=0}^{N-1}\lambda_i \leq \sum_{j=0}^{t+1}\pi_j.
 \end{align}
Therefore \eqref{eqn1nn} holds. One can directly verify by inspection that $\boldsymbol{\lambda} \prec \boldsymbol{\pi'}$.
\end{proof}

\begin{lemma}
\label{l8}Let $C$ be a code $\left(\boldsymbol{\lambda}, \boldsymbol{\mu}\right)$ with length $N$ as follows:
\begin{align*}
\begin{pmatrix}
\boldsymbol{\lambda}\\
\boldsymbol{\mu}
\end{pmatrix}= \Bigg( \overbrace{ \begin{matrix}
A & A &  \cdots  & A  \\     
 0& 0& \cdots  &0                            
\end{matrix}}^N  \Bigg) .
\end{align*}

The error probability of the ML receiver over a Poisson channel with dark noise $d>0$  is of order $\mathcal{O}\left(e^{-NA}\right)$ for large values of $A$. More precisely, if $A\geq 2d$ we have
\begin{align}
\label{el1}
&\frac{1}{2} e ^ {-N\left(A+d\right)}\nonumber\\ &\ \leq \mathbb{P}_{\e}\left(C\right)\nonumber\\&\ \leq  e^{-N\left(A+d\right)} \left( \small{\frac{e\left(A+d\right)\log\left({\left(A+d\right)}/{d}\right)}{A}} \right)^{{NA}/{\log\left({\left(A+d\right)}/{d}\right)}}.
\end{align}
\end{lemma}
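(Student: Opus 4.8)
The plan is to collapse the ML decoding problem to a one‑dimensional threshold test on the total count $S:=\sum_{i=0}^{N-1}Y_i$ and then control two Poisson tails by elementary Chernoff estimates. Since $K=1$, the received block has length $N$, and under $B=1$ (resp. $B=2$) the coordinates $Y_i$ are i.i.d. $\mathsf{Poisson}(A+d)$ (resp. $\mathsf{Poisson}(d)$). In the decision rule \eqref{eqn:LMu} every coefficient equals $a_i=\log\frac{A+d}{d}>0$ and $b=NA$, so the rule reads: declare $\hat B=1$ iff $S\ge\tau$, where $\tau:=NA/\log((A+d)/d)$. Consequently
\[
\mathbb P_{\e}(C)=\tfrac12\,\mathbb P\!\big(S<\tau\mid S\sim\mathsf{Poisson}(N(A+d))\big)+\tfrac12\,\mathbb P\!\big(S\ge\tau\mid S\sim\mathsf{Poisson}(Nd)\big).
\]
Writing $r=(A+d)/d>1$, the elementary bounds $1-\tfrac1r<\log r<r-1$ translate into $Nd<\tau<N(A+d)$, so the first term is a genuine lower‑tail probability and the second a genuine upper‑tail probability; this separation is all I will need.

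For the lower bound I would simply note that $\tau>0$ forces $\{S=0\}\subseteq\{S<\tau\}$, hence $\mathbb P(S<\tau\mid S\sim\mathsf{Poisson}(N(A+d)))\ge e^{-N(A+d)}$; discarding the nonnegative second term yields $\mathbb P_{\e}(C)\ge\tfrac12 e^{-N(A+d)}$, which is the claimed left inequality of \eqref{el1}.

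For the upper bound I would invoke the standard Chernoff bound for a Poisson variable $S\sim\mathsf{Poisson}(\nu)$: for $t<\nu$, $\mathbb P(S\le t)\le(e\nu/t)^{t}e^{-\nu}$, and for $t>\nu$, $\mathbb P(S\ge t)\le(e\nu/t)^{t}e^{-\nu}$, each obtained by optimizing $e^{-\theta t}\,\mathbb E[e^{\theta S}]=e^{-\theta t+\nu(e^{\theta}-1)}$ over $\theta$. Taking $\nu=N(A+d)$, $t=\tau$ controls the $B=1$ term and taking $\nu=Nd$, $t=\tau$ controls the $B=2$ term; in both cases $e\nu/\tau$ simplifies, namely $eN(A+d)/\tau=e(A+d)\log((A+d)/d)/A$ and $eNd/\tau=ed\log((A+d)/d)/A$. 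The key (exact) observation is that these two bounds are \emph{equal}: their ratio is $\big(\tfrac{d}{A+d}\big)^{\tau}e^{N(A+d)-Nd}$, and since $\big(\tfrac{d}{A+d}\big)^{\tau}=e^{-\tau\log((A+d)/d)}=e^{-NA}$, the ratio is $1$. Averaging therefore gives $\mathbb P_{\e}(C)\le\big(\tfrac{e(A+d)\log((A+d)/d)}{A}\big)^{\tau}e^{-N(A+d)}$, which is precisely the right inequality of \eqref{el1} after substituting $\tau=NA/\log((A+d)/d)$; the $\mathcal O(e^{-NA})$ claim (with the sub‑exponential factor displayed in \eqref{el1}) is then read off directly. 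I do not expect a genuine obstacle here: the only care needed is the two‑sided comparison $Nd<\tau<N(A+d)$ guaranteeing both Chernoff bounds are in force, and tracking the algebraic cancellation that fuses the two tail estimates into one. The stated hypothesis $A\ge 2d$ is merely a convenient sufficient condition ensuring everything above sits comfortably in the intended regime; the argument itself uses nothing beyond $A,d>0$.
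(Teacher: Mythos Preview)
Your proposal is correct and follows essentially the same route as the paper: reduce the ML rule to a threshold test on the total count with threshold $\tau=NA/\log((A+d)/d)$, obtain the lower bound from $\mathbb P(S=0)$, and obtain the upper bound by applying the Poisson Chernoff tail to each of the two error terms and observing that the two Chernoff expressions coincide exactly. Your additional remark that $Nd<\tau<N(A+d)$ already follows from $\log r<r-1$ and $1-\tfrac1r<\log r$ for all $r>1$ (so that the hypothesis $A\ge 2d$ is not actually used) is correct; the paper invokes $A\ge 2d$ only as a sufficient condition for $Nd\le\tau$, whereas your elementary inequality shows this holds for every $A,d>0$.
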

\begin{proof}
The error probability for ML receiver is derived as follows:
\begin{align}\mathbb{P}_{\e}\left(C\right)=\frac12\mathbb{P}\Big(Y^{[N\left(A+d\right)]} \leq M\Big)+\frac12\mathbb{P}\Big(Y^{[Nd]} >M\Big),\label{eqn92}\end{align}
where $Y^{[\alpha] }$ denotes a Poisson random variable with mean $\alpha $ and $M=\frac{NA}{\log\left(\left(A+d\right)/d\right)} \geq 0$. Therefore,
$$ \mathbb{P}_{\e}^{C\left(N\right)}\geq\frac12\mathbb{P}\Big(Y^{[N\left(A+d\right)]}=0\Big)=\frac{1}{2} e ^ {-N\left(A+d\right)}.$$ On the other hand, we can bound the error probability from above by utilizing the Chernoff bound on both terms of the error probability in \eqref{eqn92}. 
Since $N\left(A+d\right) \geq M$, we have (see \cite[Eq. 4.1]{alice})
\begin{align}
\label{e1}
\mathbb{P}\Big(Y^{[N(A+d)]} \leq  M\Big) \leq { \frac{e^{-N\left(A+d\right)}\left(eN (A+d)\right)^M}{M^M}}=u_{1},
\end{align}
where 
\begin{align*}
&u_{1}\\
&\ =e^{-N\left(A+d\right)}  \left( \small{\frac{e\left(A+d\right)\log\left({\left(A+d\right)}/{d}\right)}{A}} \right)^{{NA}/{\log\left({\left(A+d\right)}/{d}\right)}}.
\end{align*}
Next, the condition $A\geq 2d$ implies $Nd \leq M$ and we can also write the following Chernoff bound (see \cite[Eq. 4.4]{alice})
\begin{align}
\label{eee1111}
\mathbb{P}\Big(Y^{[Nd]} >  M \Big) \leq { \frac{e^{-Nd}\left(eNd \right)^M}{M^M}}=u_{2}.
\end{align}
One can directly verify that  \begin{align}
\dfrac{  u_{1}}{u_2}=\dfrac{ \frac{e^{-N\left(A+d\right)}\left(eN (A+d)\right)^M}{M^M}}{ \frac{e^{-N\left(d\right)}\left(eNd \right)^M}{M^M}}=e^{-NA}\left( \dfrac{A+d}{d} \right)^M,
\end{align} where $M=\frac{NA}{\log\left(\left(A+d\right)/d\right)}$. Therefore we have $u_1=u_2$.

Hence these two upper bounds imply the upper bound in \eqref{el1}.
 \end{proof} 

\begin{lemma}
\label{l9} 

Take some arbitrary blocklength $N$ and a code $C$ with the following structure
\begin{align}\left(
\begin{matrix}
\boldsymbol{\lambda} \\
 \boldsymbol{\mu}
\end{matrix}\right)=\left(
\begin{matrix}
  \lambda_0 & \ldots &\lambda_{n-1}&  0&  \ldots & 0 \\
  0 &  \ldots  & 0 & \mu_n &  \ldots  & \mu_{\small N-1}
\end{matrix}\right),
\end{align}
with $\sum_{i=0}^{N-1} \lambda_i  \geq \sum_{i=0}^{N-1} \mu_i$. Define $P \triangleq \sum_{i=0}^{N-1} \lambda_i$. 
Then,
\begin{align*}
 &\frac12 e^{-\left(P+Nd\right)}\\ &\ \leq \mathbb{P}_{\e}\left(C\right)\\& \ < e^{-n\left(\frac {P}{n}+d\right)} \left( \small{\frac{e\left(\frac {P}{n}+d\right)\log\left({\left(\frac {P}{n}+d\right)}/{d}\right)}{\frac {P}{n}}} \right)^{{n}/{\log\left({\left(\frac {P}{n}+d\right)}/{d}\right)}}
 \\&\ =\mathcal{O}\left(e^{-P}\log\left(P\right)^{{n}/{\log\left(P\right)}}\right)
\end{align*}
where $\mathcal{O}(\cdot)$ refers to the case in which $P$ tends to infinity while $n<N$ is fixed.

\end{lemma}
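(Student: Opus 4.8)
The plan is to sandwich $\mathbb{P}_{\e}(C)$ between an explicit lower bound coming from a single decoding-error event and an explicit upper bound coming from a tractable suboptimal decoder, mirroring the proof of Lemma~\ref{l8} (which is the special case $n=N$, $\boldsymbol{\mu}=\boldsymbol{0}$). I would start by recording the shape of the ML rule for this code. Because the two codewords have disjoint supports ($\boldsymbol{\lambda}$ is supported on $\{0,\dots,n-1\}$ and $\boldsymbol{\mu}$ on $\{n,\dots,N-1\}$), the coefficients $a_i=\log\frac{\lambda_i+d}{\mu_i+d}$ of \eqref{eqn:LMu} satisfy $a_i\ge 0$ for $i<n$ and $a_i\le 0$ for $i\ge n$, while the threshold is $b=\sum_{i<n}\lambda_i-\sum_{i\ge n}\mu_i=P-\sum_{i\ge n}\mu_i\ge 0$. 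Thus $\hat B=1$ iff $\sum_{i<n}|a_i|Y_i-\sum_{i\ge n}|a_i|Y_i\ge b$, and $\mathbb{P}_{\e}(C)$ is the average of the two conditional error probabilities, with $Y_i\sim\mathsf{Poisson}(\lambda_i+d)$ under $B=1$ and $Y_i\sim\mathsf{Poisson}(\mu_i+d)$ under $B=2$.

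For the lower bound I would isolate the all-zeros output $\boldsymbol{Y}=\boldsymbol{0}$. On it the left side of the decision rule is $0$, so the decoder declares $\hat B=1$ exactly when $b\le 0$. If $b>0$, this output is decoded as $2$ and hence is an error under $B=1$, and $\mathbb{P}(\boldsymbol{Y}=\boldsymbol{0}\mid B=1)=\prod_{i<n}e^{-(\lambda_i+d)}\prod_{i\ge n}e^{-d}=e^{-(P+Nd)}$. If $b=0$ (so $\sum_{i\ge n}\mu_i=P$), this output is decoded as $1$ and hence is an error under $B=2$, and $\mathbb{P}(\boldsymbol{Y}=\boldsymbol{0}\mid B=2)=e^{-(Nd+\sum_{i\ge n}\mu_i)}=e^{-(P+Nd)}$. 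In either case $\mathbb{P}_{\e}(C)\ge\tfrac12 e^{-(P+Nd)}$.

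For the upper bound I would use the optimality of the ML decoder: $\mathbb{P}_{\e}(C)$ is at most the error probability of the suboptimal decoder that outputs $\hat B=1$ iff $S:=\sum_{i=0}^{n-1}Y_i\ge M$, for a threshold $M$ still to be fixed. Under $B=1$ one has $S\sim\mathsf{Poisson}(P+nd)$ and under $B=2$ one has $S\sim\mathsf{Poisson}(nd)$, so this decoder errs with probability $\tfrac12\mathbb{P}(\mathsf{Poisson}(P+nd)<M)+\tfrac12\mathbb{P}(\mathsf{Poisson}(nd)\ge M)$. I would then apply the one-sided Poisson Chernoff bounds used in the proof of Lemma~\ref{l8} — namely \cite[Eq.~4.1]{alice} to the first tail and \cite[Eq.~4.4]{alice} to the second — and choose $M$ to make these two bounds coincide; the balancing value of $M$ lies strictly between $nd$ and $P+nd$ — precisely the range in which the two Chernoff bounds are legitimate — as one checks via the elementary inequalities $\log(1+x)<x$ and $\log u>1-\tfrac1u$ for $u>1$. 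Since the two bounds then agree, $\mathbb{P}_{\e}(C)$ is at most their common value; rewriting $P+nd=n(\tfrac Pn+d)$ and simplifying puts this value into the closed form in the statement, and the inequality is strict because the Poisson Chernoff bound is never attained.

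The asymptotic claim then follows by letting $P\to\infty$ with $n$ fixed: $\tfrac Pn+d\sim\tfrac Pn$ and $\log\big((\tfrac Pn+d)/d\big)\sim\log P$, so the bracketed base grows like a constant multiple of $\log P$ and the closed form collapses to $\mathcal{O}\big(e^{-P}\log(P)^{n/\log P}\big)$. I expect the upper bound to be the main obstacle: the exact ML error here is the tail of a \emph{weighted} sum of independent Poisson variables and admits no usable closed form, so the essential idea is to give up optimality and compare with the unweighted threshold decoder, whose error reduces to one-dimensional Poisson tails; the only genuinely delicate step is confirming that the balancing threshold $M$ sits inside the window $(nd,\,P+nd)$ on which those one-sided Chernoff estimates are valid.
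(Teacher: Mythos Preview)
Your argument is correct, but the route you take for the upper bound is genuinely different from the paper's. The paper first deletes the last $N-n$ coordinates to pass to a shorter code $C'$ with $\boldsymbol{\lambda}'=(\lambda_0,\dots,\lambda_{n-1})$ and $\boldsymbol{\mu}'=\boldsymbol 0$ (arguing informally that throwing away observations can only raise the ML error), then uses the majorization machinery of Lemma~\ref{l5} and Corollary~\ref{corrl1} to replace $\boldsymbol{\lambda}'$ by the constant vector $(P/n,\dots,P/n)$ via Robin Hood operations, and finally invokes Lemma~\ref{l8} on the resulting symmetric code $C''$. You bypass both reductions: by comparing the ML decoder for $C$ directly with the suboptimal decoder that thresholds $S=\sum_{i<n}Y_i$, and using that a sum of independent Poissons is Poisson, you land immediately on the one-dimensional tail problem of Lemma~\ref{l8}. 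In effect your suboptimal decoder \emph{is} the ML decoder for $C''$, so the two proofs converge at the same Chernoff computation; yours is just more self-contained, needing neither the AR/majorization lemmas nor the ``deleting coordinates increases error'' step. The lower-bound arguments are essentially identical (the all-zeros output), though your explicit handling of the tie case $b=0$ is a bit more careful than the paper's.
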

\begin{proof}The probability of error is at least $1/2$ times the probability of error when the first message ($\boldsymbol{\lambda}$) is transmitted. From \eqref{eqn:LMu} and the assumption that $\sum_{i=0}^{N-1} \lambda_i  \geq \sum_{i=0}^{N-1} \mu_i$, the threshold in the decision rule is nonnegative. Thus,
\begin{align}
\label{lf}
\mathbb{P}_{\e}\left(C\right) &\geq \frac12 \mathbb{P}\left(\sum_{i=0}^{N-1}  a_i Y^{[\lambda_i+d]}_i =0\right)\nonumber \\
&=\frac12 e^{-\sum_{i=0}^{N-1} \left(\lambda_i+d\right) }\nonumber \\
&=\frac12 e^{-\left(P+Nd\right)},
\end{align}
 where $Y^{[\alpha]}$ denotes a Poisson random variable with mean $\alpha$.
 By deleting the $\left(n+1\right)$-th to $N$-th letters of two codewords the error probability will increase. Therefore $\mathbb{P}_{\e}\left(C\right) \leq \mathbb{P}_{\e}\left(C'\right)$ where $C'$ has the following structure.
\begin{align}
\left(
\begin{matrix}
\boldsymbol{\lambda'}\\
 \boldsymbol{\mu'}
\end{matrix}
\right)=\left(
\begin{matrix}
  \lambda_0 &  \ldots  & \lambda_{n-1}\\
 0&  \ldots  & 0 
 \end{matrix}\right)
\end{align}
Let $\boldsymbol{\lambda''}=\left[P/n, P/n,  \ldots , P/n\right]$ be a sequence of length $n$ with total power $P$, and set $ \boldsymbol{\mu''}= \boldsymbol{\mu'}$. Let $C''$ a code with $\left( \boldsymbol{\lambda''},  \boldsymbol{\mu''}\right)$. Since $ \boldsymbol{\lambda''}\prec \boldsymbol{\lambda'}$, from Lemma~\ref{l5}, with the Robin Hood operations on codeword $\boldsymbol{\lambda'}$ we can reach $\boldsymbol{\lambda''}$. Therefore Corollary~\ref{corrl1}  shows that $\mathbb{P}_{\e}\left(C'\right) \leq \mathbb{P}_{\e}\left(C''\right)$. The result then follows from Lemma~\ref{l8} which implies \begin{align*}
&\mathbb{P}_{\e}\left(C''\right)\nonumber\\& \ \leq 
e^{-n\left(\frac {P}{n}+d\right)} \left( \small{\frac{e\left(\frac {P}{n}+d\right)\log\left({\left(\frac {P}{n}+d\right)}/{d}\right)}{\frac {P}{n}}} \right)^{{n}/{\log\left({\left(\frac {P}{n}+d\right)}/{d}\right)}}\nonumber
\\& \ =\mathcal{O}\left(e^{-P}\log\left(P\right)^{{n}/{\log\left(P\right)}}\right).
&\qedhere
\end{align*}
\end{proof}

\section{Necessary Conditions On an Optimal Code}
\label{AppendixB}
Let $\left(\boldsymbol{x_1},  \boldsymbol{x_2}\right)$ be optimal codewords. Then the error probability of the code should not decrease if we perturb $\boldsymbol{x_1}$ as follows: 
\begin{align}
\label{sss}
 \boldsymbol{x_1} \longrightarrow \boldsymbol{x_1}+\epsilon \boldsymbol{s}
\end{align}
where $\epsilon\geq 0$ and $\boldsymbol{s}$ is any arbitrary sequence of real numbers such that $s_i\geq 0$ whenever 
$x_{1i}=0$, and $ \boldsymbol{x_1}+\epsilon \boldsymbol{s}$ satisfies the power constraints (if any). Equivalently,
\begin{align*}
 &\boldsymbol{\lambda}=\boldsymbol{x_1}*\boldsymbol{\pi} \longrightarrow \boldsymbol{\lambda}+\epsilon {\boldsymbol{s}*\boldsymbol{\pi}}= \boldsymbol{\lambda}+\epsilon \boldsymbol{\zeta} .
\end{align*}
where $\boldsymbol{\zeta} \triangleq \boldsymbol{s}*\boldsymbol{\pi} $. 

We use this idea to show the following necessary condition on the optimal code:
\begin{lemma}
\label{l7}
Take an optimal code and let $\sum_{j=0}^{N+K-2} a_j y_j \underset{2}{\overset{1}{\gtreqless}} b$ be the optimal decision rule for this code. Then  there exists real numbers $D_i$ for $i\in \{0,1, \ldots, N+K-2\}$ such that the following holds: 
\begin{itemize}
\item For any $i\in\{0,1, \ldots, N+K-2\}$, we have
$D_i < 0   \iff  \lambda_i > \mu_i$ and $D_i =0   \iff  \lambda_i = \mu_i$.
\item There is some $\nu\leq 0$ such that for any $j\in\{0,1, \ldots, N-1\}$, if $x_{1j}>0$ then $\sum_{i=0}^{K-1} D_{j+i} \pi_i=\nu$; also, if $x_{1j}=0$ then $\sum_{i=0}^{K-1} D_{j+i} \pi_i\geq \nu$.
Furthermore, if the power of the codeword $\boldsymbol{x_1}$ is strictly less than the  total-power  budget $P$, we have $\nu=0$. 
\end{itemize}
\end{lemma}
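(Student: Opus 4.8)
The plan is to derive Lemma~\ref{l7} as the Karush–Kuhn–Tucker (KKT) first-order necessary conditions for the optimization problem of minimizing $\mathbb{P}_{\e}(\boldsymbol{\lambda},\boldsymbol{\mu})$ over codewords $\boldsymbol{x_1}$ (holding $\boldsymbol{x_2}$, hence $\boldsymbol{\mu}$, fixed), subject to $x_{1j}\geq 0$ and possibly $\sum_j x_{1j}\leq P$. The perturbation $\boldsymbol{x_1}\to\boldsymbol{x_1}+\epsilon\boldsymbol{s}$ in \eqref{sss} induces $\boldsymbol{\lambda}\to\boldsymbol{\lambda}+\epsilon\,\boldsymbol{s}*\boldsymbol{\pi}$. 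Optimality means $\frac{d}{d\epsilon}\mathbb{P}_{\e}\big|_{\epsilon=0^+}\geq 0$ for every feasible direction $\boldsymbol{s}$. So the first step is to compute $\partial\mathbb{P}_{\e}/\partial\lambda_i$ and call this quantity $D_i$, i.e. define $D_i \triangleq \frac{\partial}{\partial\lambda_i}\mathbb{P}_{\e}(\boldsymbol{\lambda},\boldsymbol{\mu})$.

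\textbf{Step 1: Differentiability and sign of $D_i$.} The error probability is a sum (over the two equally-likely hypotheses) of probabilities of the form $\mathbb{P}(\sum_i a_i Y_i \lessgtr b)$ where $Y_i\sim\mathsf{Poisson}(\lambda_i+d)$ under $B=1$. Expanding the Poisson pmf, $\mathbb{P}_{\e}$ is an (absolutely convergent) infinite sum of terms $e^{-(\lambda_i+d)}(\lambda_i+d)^{y_i}/y_i!$ times quantities not depending on $\lambda_i$; one can differentiate term by term. The key identity is $\frac{\partial}{\partial\lambda_i}\big[e^{-(\lambda_i+d)}\tfrac{(\lambda_i+d)^{y}}{y!}\big] = e^{-(\lambda_i+d)}\big(\tfrac{(\lambda_i+d)^{y-1}}{(y-1)!}-\tfrac{(\lambda_i+d)^{y}}{y!}\big)$, which telescopes so that $\frac{\partial}{\partial\lambda_i}\mathbb{P}(Y_i\leq m) = -\,\mathbb{P}(Y_i=m)$ (a discrete analogue of the density). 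Using this, $D_i$ becomes (up to the constant $\tfrac12$) a difference of two nonnegative probability-weighted sums. The claimed sign pattern — $D_i<0\iff\lambda_i>\mu_i$, $D_i=0\iff\lambda_i=\mu_i$ — should then follow from the structure of the ML decision rule \eqref{eqn:DR1}: increasing $\lambda_i$ when $\lambda_i>\mu_i$ (so that $a_i>0$, coefficient $|a_i|$ on the $B=1$ side) strictly pushes the statistic in the favorable direction under $B=1$ while $B=2$ statistics are unchanged, strictly decreasing error; the symmetric argument, and the degenerate case $a_i=0$ when $\lambda_i=\mu_i$ giving $D_i=0$, handle the rest. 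This sign/strictness claim is the part requiring the most care — one must rule out cancellation, likely by a coupling/monotonicity argument analogous to the proof of Lemma~\ref{l1} (embedding $Y_i$ into $Y_i+Z$ with $Z\sim\mathsf{Poisson}(\delta)$ and arguing the relevant tail probability strictly changes).

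\textbf{Step 2: KKT / variational inequality.} Having $D_i$, the directional derivative along $\boldsymbol{s}$ is $\sum_i D_i\,(\boldsymbol{s}*\boldsymbol{\pi})_i = \sum_{j=0}^{N-1} s_j \big(\sum_{i=0}^{K-1} D_{j+i}\pi_i\big)$ by rearranging the convolution. Define $g_j \triangleq \sum_{i=0}^{K-1} D_{j+i}\pi_i$. Optimality says $\sum_j s_j g_j\geq 0$ for all feasible $\boldsymbol{s}$, i.e. all $\boldsymbol{s}$ with $s_j\geq 0$ whenever $x_{1j}=0$ and (if the total-power constraint is active) $\sum_j s_j\leq 0$. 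Standard LP-duality / Farkas-type reasoning on this cone then yields: there is $\nu\leq 0$ with $g_j=\nu$ whenever $x_{1j}>0$ and $g_j\geq\nu$ whenever $x_{1j}=0$; and if $\sum_j x_{1j}<P$ (constraint inactive) one may take $\boldsymbol{s}$ free in sign on the support, forcing $\nu=0$. Concretely: for an index $j$ with $x_{1j}>0$, both $\pm e_j$ are feasible directions (small $\epsilon$), giving $g_j = $ a common value; call it $\nu$ (the Lagrange multiplier for the power constraint, or $0$ if inactive). For $j$ with $x_{1j}=0$, only $+e_j$ is feasible, giving $g_j\geq\nu$. The sign $\nu\leq 0$ comes from testing the direction that spends more power on an active coordinate and shows decreasing $\lambda$'s (all $D_i\le 0$ on $\mathcal A$...) — more precisely from combining feasibility of the total-power direction with the fact that $\mathbb{P}_{\e}$ decreases when any $\lambda_i$ with $\lambda_i>\mu_i$ increases, so the multiplier has the stated sign.

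\textbf{Main obstacle.} I expect Step~1 — rigorously establishing the \emph{strict} equivalences for the sign of $D_i$ rather than just $D_i\leq 0$ on $\mathcal A$ — to be the crux, because it requires showing the derivative does not vanish, which is a strict-monotonicity statement about Poisson tails under the weighted linear decision statistic; the telescoping identity $\frac{\partial}{\partial\lambda_i}\mathbb{P}(Y_i\leq m)=-\mathbb{P}(Y_i=m)$ plus a coupling argument in the spirit of Lemma~\ref{l1} is the natural tool. Step~2 is then essentially a finite-dimensional convex-duality bookkeeping exercise over the feasible cone of directions and should be routine once the $D_i$ are in hand.
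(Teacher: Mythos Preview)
Your overall plan coincides with the paper's: perturb $\boldsymbol{x_1}$, compute the first-order variation in $\mathbb{P}_{\e}$, and read off KKT-type conditions on the directional derivatives $g_j=\sum_{i}D_{j+i}\pi_i$; your Step~2 is essentially the paper's case analysis. The one substantive methodological difference is in how the sign equivalences for $D_i$ are established. The paper does not use the telescoping identity or a coupling argument. It first \emph{fixes the decision rule} $(a,b)$ of the unperturbed code when differentiating (so only the $B{=}1$ error term varies; this also sidesteps any differentiability issue for the ML rule that you leave implicit), obtaining the closed form
\[
D_i \;=\; \mathbb{E}\!\left[\Big(\tfrac{Y_i}{\lambda_i+d}-1\Big)\,\mathbf{1}\Big[\textstyle\sum_j a_jY_j<b\Big]\right],\qquad Y_j\sim\mathsf{Poisson}(\lambda_j+d).
\]
Because the first factor has mean zero, $D_i$ is a covariance. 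The paper then conditions on $Y_{\sim i}$ and applies an elementary FKG-type correlation inequality (stated and proved separately as Lemma~\ref{fkg}/Corollary~\ref{fkgc}): for each fixed $y_{\sim i}$, the indicator $\mathbf{1}[a_iY_i<\text{const}]$ is monotone in $Y_i$ with direction governed by $\mathrm{sign}(a_i)$, while $Y_i\mapsto Y_i/(\lambda_i+d)-1$ is increasing, so the covariance has sign exactly $-\mathrm{sign}(a_i)$. This delivers both directions of the iff and the strictness (the obstacle you flagged) simultaneously, since equality in FKG forces one function to be constant, which occurs precisely when $a_i=0$. Your coupling route is viable but would need an ad hoc argument to rule out cancellation; the covariance/FKG reformulation is the cleaner instrument here.
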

\begin{proof}
Let us use the perturbation argument. From the optimality of codewords and using the same decision rule for the perturbed codewords, the error probability should not decrease under the perturbation. That is, $\frac{d}{d\epsilon}\mathbb{P}_{\e}\geq 0$ at $\epsilon=0$. Since we fix the decoding rule, the  probability of making an error when bit $B=2$ is transmitted is unchanged. Thus, we have 
\begin{align*}
&\frac{d}{d\epsilon}\mathbb{P}_{\e}\\
&  =
\frac{d}{d\epsilon}\frac 12 \mathbb{P}\left(\mathsf{error}|B=1\right)\\
& =
\frac{1}{2}\frac{d}{d\epsilon}  \left( \sum_{\boldsymbol{y}:\sum a_j y_j < b} \prod_{i=0}^{N+K-2} e^{-\left(\lambda_i+d+\epsilon \zeta_i\right)} \frac{\left(\lambda_i+d+\epsilon \zeta_i\right)^{y_i}}{y_i !} \right).\\
\end{align*}
Let 
$$F(\epsilon,\boldsymbol{y})= \prod_{i=0}^{N+K-2} e^{-\left(\lambda_i+d+\epsilon \zeta_i\right)} \frac{\left(\lambda_i+d+\epsilon \zeta_i\right)^{y_i}}{y_i !}$$
and for $0 \leq i\leq N+K-2$,
 \begin{align*}
 f_i(\epsilon,\boldsymbol{y}) = e^{-\left(\lambda_i+d+\epsilon \zeta_i\right)} \frac{\left(\lambda_i+d+\epsilon,  \zeta_i\right)^{y_i}}{y_i !}. 
 \end{align*}
 We have
 \begin{align*}
&\frac{\partial }{\partial\epsilon} f_i(\epsilon,\boldsymbol{y})\\
& \  =\frac{\partial }{\partial\epsilon}\left( e^{-(\lambda_i+d+\epsilon \zeta_i)} \frac{(\lambda_i+d+\epsilon \zeta_i)^{y_i}}{y_i !} \right) \\
&\ = -\zeta_i e^{-(\lambda_i+d+\epsilon \zeta_i)} \frac{(\lambda_i+d+\epsilon \zeta_i)^{y_i}}{y_i !}\\
& \ \quad + \zeta_i y_i e^{-(\lambda_i+d+\epsilon \zeta_i)} \frac{(\lambda_i+d+\epsilon \zeta_i)^{y_i -1}}{y_i !}\\
&\ = {e^{-(\lambda_i+d+\epsilon \zeta_i)} \frac{(\lambda_i+d+\epsilon \zeta_i)^{y_i}}{y_i !}} \left(    \frac{\zeta_i y_i}{\lambda_i+d+\epsilon \zeta_i}   -\zeta_i  \right)\\
&\ =f_i(\epsilon,\boldsymbol{y})\left(    \frac{\zeta_i y_i}{\lambda_i+d+\epsilon \zeta_i}   -\zeta_i  \right).
\end{align*}
Since  $F(\epsilon,\boldsymbol{y})= \prod_i f_i(\epsilon,\boldsymbol{y})$, we have $$\frac{\partial }{\partial\epsilon} F(\epsilon,\boldsymbol{y}) = \sum_i \left(\frac{\partial }{\partial\epsilon} f_i(\epsilon,\boldsymbol{y}) \prod_{j \neq i} f_j(\epsilon,\boldsymbol{y}) \right).$$ Therefore we have
 \begin{align*}
&\frac{\partial }{\partial\epsilon} F(\epsilon,\boldsymbol{y})\\
&\quad = \sum_{i=0}^{N+K-2} \left(\frac{\partial }{\partial\epsilon} f_i(\epsilon,\boldsymbol{y}) \prod_{j=0, j \neq i}^{N+K-2} f_j(\epsilon,\boldsymbol{y}) \right) \\
&\quad = \sum_{i=0}^{N+K-2}   \left( \left(    \frac{\zeta_i y_i}{\lambda_i+d+\epsilon \zeta_i}   -\zeta_i  \right) {\prod_{j=0}^{N+K-2} f_j(\epsilon,\boldsymbol{y})}\right)\\
&\quad = F(\epsilon,\boldsymbol{y}) \sum_{i=0}^{N+K-2} \left(    \frac{\zeta_i y_i}{\lambda_i+d+\epsilon \zeta_i}-\zeta_i  \right)\\
& \quad = \left( \prod_{i=0}^{N+K-2} e^{-(\lambda_i+d+\epsilon \zeta_i)} \frac{(\lambda_i+d+\epsilon \zeta_i)^{y_i}}{y_i !} \right)\\
& \qquad \times \left( \sum_{i=0}^{N+K-2} \left(    \frac{\zeta_i y_i}{\lambda_i+d+\epsilon \zeta_i}-\zeta_i  \right) \right).
 \end{align*}
Hence, $\frac{d}{d\epsilon}\mathbb{P}_{\e}$ equals \eqref{c1-eq4}, given at the top of page~\pageref{c1-eq4}. Therefore, 
$\frac{d}{d\epsilon}\mathbb{P}_{\e}\left(0\right)$ equals \eqref{di}, and is a linear combination of $\zeta_i$ where $D_i$ equals \eqref{c1-n5}, given at the top of page~\pageref{c1-n5}. Let $Y_i\sim\mathsf{Poisson}\left(\lambda_i+d\right)$ be the output sequence when $B=1$ is transmitted. Then, for $0 \leq i \leq N+K-2$, we have
\newcounter{storeeqcounter3}
\newcounter{tempeqcounter3}
\setcounter{storeeqcounter3}{\value{equation}}
\addtocounter{equation}{1}
\begin{figure*}[!t]
\normalsize
\setcounter{tempeqcounter3}{\value{equation}}
\begin{align}
\setcounter{equation}{\value{storeeqcounter3}}
\label{c1-eq4}
\frac{d}{d\epsilon}\mathbb{P}_{\e}=
    \frac{1}{2}\sum_{\boldsymbol{y}:\sum a_j y_j < b} \left(  \left( \prod_{i=0}^{N+K-2}   e^{-\left(\lambda_i+d+\epsilon \zeta_i\right)} \frac{\left(\lambda_i+d+\epsilon \zeta_i\right)^{y_i}}{y_i !}   \right)  \left( \sum_{i=0}^{N+K-2} \left( \frac{y_i \zeta_i}{\lambda_i +d+\epsilon \zeta_i}-\zeta_i \right)  \right) \right)
\end{align}
\setcounter{equation}{\value{tempeqcounter3}} 
\hrulefill
\vspace*{4pt}
\end{figure*}
\newcounter{storeeqcounter4}
\newcounter{tempeqcounter4}
\setcounter{storeeqcounter4}{\value{equation}}
\addtocounter{equation}{1}
\begin{figure*}[!t]
\normalsize
\setcounter{tempeqcounter4}{\value{equation}}
\begin{align}
\setcounter{equation}{\value{storeeqcounter4}}
\label{di}
\frac{d}{d\epsilon}\mathbb{P}_{\e}\left(0\right) =\sum_{\boldsymbol{y}:\sum a_j y_j < b}  \left( \prod_{i=0}^{N+K-2} e^{-\left(\lambda_i+d\right)} \frac{{\left(\lambda_i+d\right)}^{y_i}}{y_i !} \right)   \left( \sum_{i=0}^{N+K-2} \zeta_i \frac{y_i -\left(\lambda_i+d\right)}{\left(\lambda_i+d\right) }  \right) =\sum_{i=0}^{N+K-2} D_i \zeta_i
\end{align}
\setcounter{equation}{\value{tempeqcounter4}} 
\hrulefill
\vspace*{4pt}
\end{figure*}
\newcounter{storeeqcountern5}
\newcounter{tempeqcountern5}
\setcounter{storeeqcountern5}{\value{equation}}
\addtocounter{equation}{1}
\begin{figure*}[!t]
\normalsize
\setcounter{tempeqcountern5}{\value{equation}}
\begin{align}
\setcounter{equation}{\value{storeeqcountern5}}
\label{c1-n5}
D_i =  &\sum_{\boldsymbol{y}:\sum a_j y_j < b}  \left( \prod_{j=0}^{N+K-2} e^{-\left(\lambda_j+d\right)} \frac{{\left(\lambda_j+d\right)}^{y_j}}{y_j !}  \right) 
 \left( \frac{y_i -\left(\lambda_i+d\right)}{\left(\lambda_i +d\right)} \right)
\end{align}
\setcounter{equation}{\value{tempeqcountern5}} 
\hrulefill
\vspace*{4pt}
\end{figure*}
 \begin{align}
 &D_i \nonumber\\ 
 &\ = \mathbb{E} \left[ \left(\frac{Y_i}{\lambda_i+d}-1\right) \boldsymbol{1} \left[\sum_{j=0}^{N+K-2} a_j Y_j < b\right] \right]\\
 &\ =\mathbb{P} \left[\sum_{j=0}^{N+K-2} a_j Y_j < b\right] \mathbb{E} \left[\frac{Y_i}{\lambda_i+d}-1  \middle|   \sum_{j=0}^{N+K-2} a_j Y_j < b\right]
\\& \ =\mathbb{P} \left[\sum_{j=0}^{N+K-2} a_j Y_j < b \right] \nonumber\\ 
& \ \quad \times \left( -1+\frac{1}{\lambda_i+d} \sum_{n=0}^{\infty} \mathbb{P} \left[Y_i \geq n  \middle|  \sum_{j=0}^{N+K-2} a_j Y_j < b \right]  \right),\label{le}
\end{align}
where \eqref{le} holds because for $X \geq 0$, ${\mathbb{E}[X] = \sum_n  P[X \geq n ]}$. 
This condition can be also expressed as follows:
\begin{align}
\label{too}
&\frac{d}{d\epsilon}\mathbb{P}_{\e}\left(0\right)\nonumber\\ 
& =\sum_{i=0}^{N+K-2} D_i \zeta_i \nonumber \\
& =  D_0 s_0 \pi_0 +D_1 \left(s_1 \pi_0 +s_0 \pi_1\right) \nonumber\\
&\quad +D_2 \left(s_2 \pi_0 +s_1 \pi_1 + s_0 \pi_2\right) \nonumber \\ \nonumber
&\quad +\ldots + D_{N+K-3} \left( s_{N-1}\pi_{K-2} +s_{N-2}\pi_{K-1}\right)\nonumber\\
&\quad +D_{N+K-2} \left( s_{N-1}\pi_{K-1} \right) \nonumber \\ \nonumber
& =   s_0 \left( D_0 \pi_0 + D_1 \pi_1+ D_2 \pi_2 + \ldots + D_{K-1} \pi_{K-1} \right) \\ \nonumber
&\quad + s_1 \left( D_1 \pi_0 + D_2 \pi_1+ D_3 \pi_2 + \ldots + D_{K} \pi_{K-1} \right)\\ 
&\quad + \ldots + s_{N-1} \left( D_{N-1} \pi_0 + D_{N} \pi_1+ \ldots + D_{N+K-2} \pi_{K-1} \right)
\end{align}
We begin by proving the second part of the lemma first.

\emph{Case 1:} If $x_{1j}=0$ for all $j$: in this case $s_j\geq 0$ for all $j$. Therefore, nonnegativity of the derivative of $\mathbb{P}_{\e}\left(0\right)$ in \eqref{too} implies that $\sum_{i=0}^{K-1} D_{j+i} \pi_i\geq 0$ for all $j$. This implies the second part of the lemma in this case.

\emph{Case 2:} There is a unique $j$ where $x_{1j}>0$:
 let
$s_{j'}=0$ for all $j'\neq j$. Setting $s_j\leq 0$, we obtain that $\sum_{i=0}^{K-1} D_{j+i} \pi_i\leq 0$. Let $\nu=\sum_{i=0}^{K-1} D_{j+i} \pi_i$. Consider an index $\tilde{j}$ where $x_{1\tilde j}=0$. Then, choosing $s_{\tilde j}\geq 0$ and $s_j=-s_{\tilde{j}}$ (and $s_i=0$ for $i\notin\{j,\tilde j\}$) we do not change the power of the codeword, and obtain  $\sum_{i=0}^{K-1} D_{\tilde j+i} \pi_i\geq \nu$ as a necessary condition. Next, if $x_{1j}$ (which is also the power of the codeword $\boldsymbol{x_1}$ in this case) is strictly less than $P$, we can also choose $s_j>0$. This shows that the coefficient $\sum_{i=0}^{K-1} D_{j+i} \pi_i= 0$. Also, similar to Case 1, we can conclude that $\sum_{i=0}^{K-1} D_{j'+i} \pi_i\geq 0$ for all $j'\neq j$. 

\emph{Case 3:} There exists two distinct indices $j$ and $j'$ where $x_{1j}>0$ and $x_{1j'}>0$. Using the choice of $s_j=-s_{j'}$ and $s_i=0$ for $i\notin\{j,j'\}$, we get that the coefficients of $s_j$ and $s_{j'}$ must be equal, \emph{i.e.,} $\sum_{i=0}^{K-1} D_{j+i} \pi_i=\sum_{i=0}^{K-1} D_{j'+i} \pi_i$. Thus, $\sum_{i=0}^{K-1} D_{j+i} \pi_i=\nu$ if  $x_{1j}>0$ for some constant $\nu$. Let us choose $s_j=s_{j'}$ and $s_i=0$ for $i\notin\{j,j'\}$. If the power of the codeword $\boldsymbol{x_1}$ is strictly less than the  total-power  budget $P$, we can choose $s_j=s_{j'}$ to be an arbitrary number; otherwise, we should set $s_j=s_{j'}\leq 0$. The first derivative condition then implies that $\nu\leq 0$ in general and $\nu=0$ if the power of the codeword $\boldsymbol{x_1}$ is strictly less than $P$. Finally, consider an index $\tilde{j}$ where $x_{1\tilde j}=0$. Then, as before choosing $s_{\tilde j}\geq 0$ and $s_j=-s_{\tilde{j}}$ (and $s_i=0$ for $i\notin\{j,\tilde j\}$) we do not change the power of the codeword, and obtain  $\sum_{i=0}^{K-1} D_{\tilde j+i} \pi_i\geq \nu$ as a necessary condition.

It remains to verify the first part of the lemma. 
Observe that
\begin{align*}
&\mathbb{P} \left[\sum_{j=0}^{N+K-2} a_j Y_j < b\right]  \left( -1+\frac{1}{\lambda_i+d} \sum_{n=0}^{\infty} \mathbb{P}\left[Y_i \geq n\right]  \right) \\
&\ = \mathbb{P} \left[\sum_{j=0}^{N+K-2} a_j Y_j < b\right]  \left( -1+\frac{1}{\lambda_i+d} \mathbb{E}[Y_i]  \right) 
\\& \ =0.
\end{align*}
Thus, we have $D_i$ equals \eqref{c1-eq5} that is given at the top of page~\pageref{c1-eq5}.
\newcounter{storeeqcounter5}
\newcounter{tempeqcounter5}
\setcounter{storeeqcounter5}{\value{equation}}
\addtocounter{equation}{1}
\begin{figure*}[!t]
\normalsize
\setcounter{tempeqcounter5}{\value{equation}}
\begin{align}
\setcounter{equation}{\value{storeeqcounter5}}
\label{c1-eq5}
D_i &= \mathbb{P} \left[\sum_{j=0}^{N+K-2} a_j Y_j < b\right]  \left( -1+\frac{1}{\lambda_i+d} \sum_{n=0}^\infty \left(\mathbb{P}\left[Y_i \geq n  \middle|  \sum_{j=0}^{N+K-2} a_j Y_j < b \right] - \mathbb{P}[Y_i \geq n]
\right)
 \right)
\end{align}
\setcounter{equation}{\value{tempeqcounter5}} 
\hrulefill
\vspace*{4pt}
\end{figure*}

We show that for any $n$ we have $$\mathbb{P}[Y_i \geq n \vert \sum_{j=0}^{N+K-2} a_j Y_j < b ] - \mathbb{P}[Y_i \geq n]\leq 0,$$ if and only if  $\lambda_i \geq \mu_i$. 
We have $\lambda_i \geq \mu_i$ if and only if $a_i \geq 0$. Since $Y_i$ are mutually independent, it suffices to show that for any arbitrary values $\boldsymbol{y}_{\sim i}=\left(y_1, y_2,  \ldots , y_{i-1}, y_{i+1},  \ldots , y_n\right)$ we have
$$\mathbb{P}\left[Y_i \geq n  \middle|   \sum_{j=0}^{N+K-2} a_j Y_j < b, \boldsymbol{Y}_{\sim i}=\boldsymbol{y}_{\sim i} \right] - \mathbb{P}\left[Y_i \geq  n\right]  \leq  0$$
if and only if $a_i \geq 0$. In other words, we need to show that the event $Y_i \geq n$ is negatively correlated with $a_iY_i<b- \sum_{j\neq i} a_j y_j$ if and only if $a_i \geq 0$.    This follows from Corollary~\ref{fkgc} for functions $\mathbf{1}[y_i \geq n]$ and $ \mathbf{1}[a_iy_i< b- \sum_{j\neq i} a_j y_j]$. With the same argument we can show that $\lambda_i \leq \mu_i $ if and only if $D_i \geq 0$. Therefore $\lambda_i=\mu_i$ if and only if $D_i=0$.  
\end{proof}

\begin{lemma} 
\label{fkg}
Let $X: \Omega \to \mathbb{Z}$ be a discrete random variable with probability mass function $p\left(x\right)$, and $f, g : \mathbb{Z} \to \mathbb{Z}$ 
be two increasing functions such that $\mathbb{E}[f(X)^2] < \infty$ and $\mathbb{E}[g(X)^2] < \infty$, then we have:
\begin{align}
\mathbb{E}[f(X)g(X)] \geq \mathbb{E}[f(X)]\mathbb{E}[g(X)].
\end{align} 
\end{lemma}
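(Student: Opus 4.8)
The plan is to use the classical ``two independent copies'' (tensorization) trick that underlies the FKG and Chebyshev sum inequalities. Let $X'$ be an independent copy of $X$, i.e.\ a random variable with the same probability mass function $p(\cdot)$ as $X$ and independent of it. The key pointwise observation is that, since both $f$ and $g$ are increasing, for any integers $x,x'$ the two differences $f(x)-f(x')$ and $g(x)-g(x')$ have the same sign (both $\geq 0$ when $x\geq x'$ and both $\leq 0$ when $x\leq x'$), so their product is always nonnegative. Applying this to $X$ and $X'$ yields the almost-sure inequality
\begin{align*}
\big(f(X)-f(X')\big)\big(g(X)-g(X')\big)\geq 0.
\end{align*}

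Next I would take expectations of this inequality and expand the product into its four terms $f(X)g(X)$, $f(X)g(X')$, $f(X')g(X)$, $f(X')g(X')$. Before doing so I would record that this expansion is legitimate: by the Cauchy--Schwarz inequality and independence, $\mathbb{E}\big[|f(X)g(X')|\big]\leq \sqrt{\mathbb{E}[f(X)^2]\,\mathbb{E}[g(X')^2]}<\infty$ using the hypotheses $\mathbb{E}[f(X)^2]<\infty$ and $\mathbb{E}[g(X)^2]<\infty$, and similarly for the other three products; hence each term is integrable and the expectation of the sum splits. This gives
\begin{align*}
\mathbb{E}[f(X)g(X)]-\mathbb{E}[f(X)g(X')]-\mathbb{E}[f(X')g(X)]+\mathbb{E}[f(X')g(X')]\geq 0.
\end{align*}

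Finally I would simplify using independence and equality in distribution. Since $X'$ has the same law as $X$, $\mathbb{E}[f(X')g(X')]=\mathbb{E}[f(X)g(X)]$; and since $X,X'$ are independent, $\mathbb{E}[f(X)g(X')]=\mathbb{E}[f(X)]\,\mathbb{E}[g(X')]=\mathbb{E}[f(X)]\,\mathbb{E}[g(X)]$ and symmetrically $\mathbb{E}[f(X')g(X)]=\mathbb{E}[f(X)]\,\mathbb{E}[g(X)]$. Substituting yields $2\mathbb{E}[f(X)g(X)]-2\mathbb{E}[f(X)]\,\mathbb{E}[g(X)]\geq 0$, which is exactly the claimed inequality after dividing by $2$.

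The argument is essentially routine; the only step that needs a moment of care is the integrability bookkeeping that justifies splitting the expectation of the product into four separate expectations, and this is precisely where the square-integrability hypotheses on $f(X)$ and $g(X)$ enter. An equivalent, fully elementary route that sidesteps introducing $X'$ is to write $\mathbb{E}[f(X)g(X)]-\mathbb{E}[f(X)]\,\mathbb{E}[g(X)]=\tfrac12\sum_{x}\sum_{x'}p(x)p(x')\big(f(x)-f(x')\big)\big(g(x)-g(x')\big)$, observe that every summand is nonnegative, and note that the hypotheses guarantee absolute convergence of the double sum; I would mention this as the quick alternative.
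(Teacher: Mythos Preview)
Your proposal is correct and follows essentially the same approach as the paper: introduce an independent copy of $X$, use monotonicity of $f$ and $g$ to get the pointwise inequality $\big(f(X)-f(X')\big)\big(g(X)-g(X')\big)\geq 0$, invoke Cauchy--Schwarz for integrability, take expectations, expand, and simplify using independence and identical distribution. Your write-up is in fact a bit more careful about the integrability bookkeeping than the paper's, and the double-sum alternative you mention is a nice addition.
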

\begin{proof} The proof follows a technique that is used in the proof of the FKG inequality, e.g., see \cite[Sec. 2.2]{Grimmett}.
Suppose that $X_1$ and $X_2$ are two independent random variables having the same distribution as $X$, then we have:
\begin{align}
\label{trick}
\left(f(X_1)-f(X_2)\right)\left(g(X_1)-g(X_2)\right)\geq 0.
\end{align}
Since $f, g $ have finite second moments, the expectation of the left hand side of the above inequality exists (using the Cauchy–Schwarz inequality), therefore we have:
\begin{align}
&\mathbb{E}_{X_2}\mathbb{E}_{X_1}[\left(f(X_1)-f(X_2)\right)\left(g(X_1)-g(X_2)\right)] \geq 0.
\end{align}
After expanding the above inequality, we have:
\begin{align}
\label{e1111}
&\mathbb{E}[f(X_1)g(X_1)] + \mathbb{E}[f(X_2)g(X_2)] \nonumber\\
& \ - \mathbb{E}[f(X_2)]\mathbb{E}[g(X_1)] - \mathbb{E}[f(X_1)]\mathbb{E}[g(X_2)] \geq 0.
\end{align}
Because $X_1$ and $X_2$ have the same distribution as $X$, \eqref{e1111} results in:
\begin{align*}
\mathbb{E}[f(X)g(X)] \geq \mathbb{E}[f(X)]\mathbb{E}[g(X)].
&\qedhere
\end{align*}
\end{proof}
\begin{corollary}
\label{fkgc}
The same inequality is true for two decreasing functions $f,g$ (have finite second moments) by applying Lemma~\ref{fkg} to $-f, -g$. Also, if one of them (for example $f$) is increasing and the other one ($g$) is decreasing by applying Lemma~\ref{fkg} to $f$ and $-g$ we have:
   \begin{align}
   \mathbb{E}[f(X)g(X)] \leq \mathbb{E}[f(X)]\mathbb{E}[g(X)].
\end{align}
\end{corollary}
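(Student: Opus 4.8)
The plan is to obtain both assertions of Corollary~\ref{fkgc} directly from Lemma~\ref{fkg} by composing with sign changes, using two elementary facts: negating an increasing function yields a decreasing function (and conversely), and negation does not affect finiteness of the second moment, since $\mathbb{E}[(-h(X))^2]=\mathbb{E}[h(X)^2]$.

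First, for the case of two decreasing functions $f,g$, I would note that $-f$ and $-g$ are increasing and have finite second moments, so Lemma~\ref{fkg} applies to the pair $(-f,-g)$ and gives $\mathbb{E}[(-f(X))(-g(X))]\geq \mathbb{E}[-f(X)]\,\mathbb{E}[-g(X)]$. Since $(-f(X))(-g(X))=f(X)g(X)$ and $\mathbb{E}[-f(X)]\,\mathbb{E}[-g(X)]=\mathbb{E}[f(X)]\,\mathbb{E}[g(X)]$, this is exactly the claimed inequality $\mathbb{E}[f(X)g(X)]\geq \mathbb{E}[f(X)]\,\mathbb{E}[g(X)]$.

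Second, for $f$ increasing and $g$ decreasing, the function $-g$ is increasing with finite second moment, so Lemma~\ref{fkg} applied to $(f,-g)$ yields $\mathbb{E}[f(X)(-g(X))]\geq \mathbb{E}[f(X)]\,\mathbb{E}[-g(X)]$, i.e. $-\mathbb{E}[f(X)g(X)]\geq -\mathbb{E}[f(X)]\,\mathbb{E}[g(X)]$. Multiplying through by $-1$ reverses the inequality and produces $\mathbb{E}[f(X)g(X)]\leq \mathbb{E}[f(X)]\,\mathbb{E}[g(X)]$, as desired. There is essentially no genuine obstacle here; the only point to verify is the stability of the integrability hypotheses of Lemma~\ref{fkg} under negation, which is immediate, so the corollary is a purely bookkeeping consequence of the lemma.
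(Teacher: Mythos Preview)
Your proof is correct and follows exactly the approach indicated in the paper: the corollary's proof is embedded in its statement, namely applying Lemma~\ref{fkg} to $(-f,-g)$ and to $(f,-g)$, and you have carried out precisely these substitutions together with the trivial check that negation preserves the finite second-moment hypothesis.
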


\bibliographystyle{IEEEtran}
\bibliography{IEEEabrv,bibtex}

\begin{thebibliography}{10}
\providecommand{\url}[1]{#1}
\csname url@samestyle\endcsname
\providecommand{\newblock}{\relax}
\providecommand{\bibinfo}[2]{#2}
\providecommand{\BIBentrySTDinterwordspacing}{\spaceskip=0pt\relax}
\providecommand{\BIBentryALTinterwordstretchfactor}{4}
\providecommand{\BIBentryALTinterwordspacing}{\spaceskip=\fontdimen2\font plus
\BIBentryALTinterwordstretchfactor\fontdimen3\font minus
  \fontdimen4\font\relax}
\providecommand{\BIBforeignlanguage}[2]{{%
\expandafter\ifx\csname l@#1\endcsname\relax
\typeout{** WARNING: IEEEtran.bst: No hyphenation pattern has been}%
\typeout{** loaded for the language `#1'. Using the pattern for}%
\typeout{** the default language instead.}%
\else
\language=\csname l@#1\endcsname
\fi
#2}}
\providecommand{\BIBdecl}{\relax}
\BIBdecl

\bibitem{ref27}
H.~Arjmandi, A.~Gohari, M.~N. Kenari, and F.~Bateni, ``Diffusion-based
  nanonetworking: A new modulation technique and performance analysis,''
  \emph{IEEE Communications Letters}, vol.~17, no.~4, pp. 645--648, 2013.

\bibitem{ref28}
M.~Pierobon and I.~F. Akyildiz, ``Diffusion-based noise analysis for molecular
  communication in nanonetworks,'' \emph{IEEE Transactions on Signal
  Processing}, vol.~59, no.~6, pp. 2532--2547, 2011.

\bibitem{ref30}
A.~Noel, K.~C. Cheung, and R.~Schober, ``Improving receiver performance of
  diffusive molecular communication with enzymes,'' \emph{IEEE Transactions on
  NanoBioscience}, vol.~13, no.~1, pp. 31--43, 2014.

\bibitem{amini}
G.~Aminian, H.~Arjmandi, A.~Gohari, M.~Nasiri-Kenari, and U.~Mitra, ``Capacity
  of diffusion-based molecular communication networks over {LTI}-{P}oisson
  channels,'' \emph{IEEE Transactions on Molecular, Biological and Multi-Scale
  Communications}, vol.~1, no.~2, pp. 188--201, 2015.

\bibitem{rev22}
J.~{Bowen}, ``On the capacity of a noiseless photon channel,'' \emph{IEEE
  Transactions on Information Theory}, vol.~13, no.~2, pp. 230--236, April
  1967.

\bibitem{rev21}
J.~{Pierce}, ``Optical channels: Practical limits with photon counting,''
  \emph{IEEE Transactions on Communications}, vol.~26, no.~12, pp. 1819--1821,
  December 1978.

\bibitem{lapidoth2009}
A.~{Lapidoth} and S.~M. {Moser}, ``On the capacity of the discrete-time
  {P}oisson channel,'' \emph{IEEE Transactions on Information Theory}, vol.~55,
  no.~1, pp. 303--322, Jan 2009.

\bibitem{lapidoth2011}
A.~{Lapidoth}, J.~H. {Shapiro}, V.~{Venkatesan}, and L.~{Wang}, ``The
  discrete-time {P}oisson channel at low input powers,'' \emph{IEEE
  Transactions on Information Theory}, vol.~57, no.~6, pp. 3260--3272, June
  2011.

\bibitem{cao}
J.~{Cao}, S.~{Hranilovic}, and J.~{Chen}, ``Capacity-achieving distributions
  for the discrete-time {P}oisson channel—part i: General properties and
  numerical techniques,'' \emph{IEEE Transactions on Communications}, vol.~62,
  no.~1, pp. 194--202, January 2014.

\bibitem{cao2}
------, ``Capacity-achieving distributions for the discrete-time {P}oisson
  channel—part ii: Binary inputs,'' \emph{IEEE Transactions on
  Communications}, vol.~62, no.~1, pp. 203--213, January 2014.

\bibitem{Bressloff}
P.~C. Bressloff, \emph{Stochastic Processes in Cell Biology}.\hskip 1em plus
  0.5em minus 0.4em\relax Springer, 2014, vol.~41.

\bibitem{arjmandi1}
H.~Arjmandi, A.~Ahmadzadeh, R.~Schober, and M.~N. Kenari, ``Ion channel based
  bio-synthetic modulator for diffusive molecular communication,'' \emph{IEEE
  Transactions on Nanobioscience}, vol.~15, no.~5, pp. 418--432, 2016.

\bibitem{mosayebi}
R.~Mosayebi, H.~Arjmandi, A.~Gohari, M.~Nasiri-Kenari, and U.~Mitra,
  ``Receivers for diffusion-based molecular communication: Exploiting memory
  and sampling rate,'' \emph{IEEE Journal on Selected Areas in Communications},
  vol.~32, no.~12, pp. 2368--2380, 2014.

\bibitem{jamali}
V.~{Jamali}, A.~{Ahmadzadeh}, N.~{Farsad}, and R.~{Schober},
  ``Constant-composition codes for maximum likelihood detection without csi in
  diffusive molecular communications,'' \emph{IEEE Transactions on
  Communications}, vol.~66, no.~5, pp. 1981--1995, 2018.

\bibitem{mmm1}
N.~{Farsad}, H.~B. {Yilmaz}, A.~{Eckford}, C.~{Chae}, and W.~{Guo}, ``A
  comprehensive survey of recent advancements in molecular communication,''
  \emph{IEEE Communications Surveys Tutorials}, vol.~18, no.~3, pp. 1887--1919,
  2016.

\bibitem{mmm2}
M.~U. {Mahfuz}, D.~{Makrakis}, and H.~T. {Mouftah}, ``A comprehensive study of
  sampling-based optimum signal detection in concentration-encoded molecular
  communication,'' \emph{IEEE Transactions on NanoBioscience}, vol.~13, no.~3,
  pp. 208--222, 2014.

\bibitem{mmm3}
M.~U. Mahfuz, D.~Makrakis, and H.~T. Mouftah, ``On the characterization of
  binary concentration-encoded molecular communication in nanonetworks,''
  \emph{Nano Communication Networks}, vol.~1, no.~4, pp. 289--300, 2010.

\bibitem{felice}
L.~Felicetti, M.~Femminella, G.~Reali, and P.~Li{\`o}, ``Applications of
  molecular communications to medicine: A survey,'' \emph{Nano Communication
  Networks}, vol.~7, pp. 27--45, 2016.

\bibitem{tcplike}
L.~Felicetti, M.~Femminella, G.~Reali, T.~Nakano, and A.~V. Vasilakos,
  ``Tcp-like molecular communications,'' \emph{IEEE Journal on Selected Areas
  in Communications}, vol.~32, no.~12, pp. 2354--2367, 2014.

\bibitem{4ref4}
C.-L. Wu, P.-N. Chen, Y.~S. Han, and Y.-X. Zheng, ``On the coding scheme for
  joint channel estimation and error correction over block fading channels,''
  in \emph{2009 IEEE 20th International Symposium on Personal, Indoor and
  Mobile Radio Communications}.\hskip 1em plus 0.5em minus 0.4em\relax IEEE,
  2009, pp. 1272--1276.

\bibitem{ultrasmall}
P.-N. Chen, H.-Y. Lin, and S.~M. Moser, ``Optimal ultrasmall block-codes for
  binary discrete memoryless channels,'' \emph{IEEE Transactions on Information
  Theory}, vol.~59, no.~11, pp. 7346--7378, 2013.

\bibitem{taherzade}
M.~Taherzadeh and A.~K. Khandani, ``Single-sample robust joint source--channel
  coding: Achieving asymptotically optimum scaling of sdr versus snr,''
  \emph{IEEE Transactions on Information Theory}, vol.~58, no.~3, pp.
  1565--1577, 2012.

\bibitem{flip}
P.-N. Chen, H.-Y. Lin, and S.~M. Moser, ``Weak flip codes and applications to
  optimal code design on the binary erasure channel,'' in \emph{2012 50th
  Annual Allerton Conference on Communication, Control, and Computing
  (Allerton)}.\hskip 1em plus 0.5em minus 0.4em\relax IEEE, 2012, pp. 160--167.

\bibitem{edc}
------, ``Equidistant codes meeting the plotkin bound are not optimal on the
  binary symmetric channel,'' in \emph{2013 IEEE International Symposium on
  Information Theory}.\hskip 1em plus 0.5em minus 0.4em\relax IEEE, 2013, pp.
  3015--3019.

\bibitem{gallager}
R.~G. Gallager, \emph{Information Theory and Reliable Communication}.\hskip 1em
  plus 0.5em minus 0.4em\relax Springer, 1968, vol. 588.

\bibitem{gohari2016information}
A.~{Gohari}, M.~{Mirmohseni}, and M.~{Nasiri-Kenari}, ``Information theory of
  molecular communication: directions and challenges,'' \emph{IEEE Transactions
  on Molecular, Biological and Multi-Scale Communications}, vol.~2, no.~2, pp.
  120--142, Dec 2016.

\bibitem{arnold}
\BIBentryALTinterwordspacing
B.~Arnold, \emph{Majorization and the Lorenz order: a brief introduction}, ser.
  Lecture notes in statistics.\hskip 1em plus 0.5em minus 0.4em\relax
  Springer-Verlag, 1987. [Online]. Available:
  \url{https://books.google.com/books?id=WpI\_AQAAIAAJ}
\BIBentrySTDinterwordspacing

\bibitem{alice}
E.~Upfal and M.~Mitzenmacher, \emph{Probability and Computing: Randomized
  Algorithms and Probabilistic Analysis}.\hskip 1em plus 0.5em minus
  0.4em\relax Cambridge university press, 2005, vol. 160.

\bibitem{Grimmett}
\BIBentryALTinterwordspacing
G.~Grimmett, \emph{Percolation}, ser. Grundlehren der mathematischen
  Wissenschaften.\hskip 1em plus 0.5em minus 0.4em\relax Springer Berlin
  Heidelberg, 2013. [Online]. Available:
  \url{https://books.google.com/books?id=Vnz1CAAAQBAJ}
\BIBentrySTDinterwordspacing

\end{thebibliography}

\begin{IEEEbiographynophoto}{Niloufar Ahmadypour}
 received the B.Sc. degree in Electrical Engineering and Pure Mathematics and the M.Sc. degree in Electrical Engineering all from the Isfahan University of Technology, Isfahan, Iran, in 2013, 2014 and 2015 respectively. She is currently pursuing the Ph.D. degree in Electrical Engineering at the Sharif University of Technology,
Tehran, Iran. She received the Third Prize and The Second Prize from the 20th and 21st International Mathematical Competition
for University Students (IMC 2013 and IMC 2014). Her research interests include various topics in information theory, theoretical computer science, and high-dimensional statistics.
\end{IEEEbiographynophoto}

\begin{IEEEbiographynophoto}{Amin Gohari}
 received the B.Sc. degree from Sharif University of Technology,
Tehran, Iran, in 2004 and his Ph.D. degree in
Electrical Engineering from the University
of California, Berkeley in 2010. From 2010-
2011, he was a postdoc at the Chinese University of Hong Kong, Institute of Network
Coding. From 2011-2020 he was with the
Electrical Engineering department of Sharif
University of Technology. He joined the Tehran Institute for Advanced Studies in 2020. Dr. Gohari received the 2010 Eli Jury
Award from UC Berkeley, Department of Electrical Engineering
and Computer Sciences, for outstanding achievement in the area
of communication networks, and the 2009-2010 Bernard Friedman
Memorial Prize in Applied Mathematics from UC Berkeley, Department of Mathematics, for demonstrated ability to do research
in applied mathematics. He also received the Gold Medal from
the 41st International Mathematical Olympiad (IMO 2000) and the
First Prize from the 9th International Mathematical Competition
for University Students (IMC 2002). He was a finalist for the
best student paper award at IEEE International Symposium on
Information Theory (ISIT) in three consecutive years, 2008, 2009
and 2010. He was also a co-author of a paper that won the ISIT
2013 Jack Wolf student paper award, and two that were finalists
in 2012 and 2014. He was selected as an exemplary reviewer for
Transactions on Communications in 2016. Dr. Gohari is currently
serving as an Associate Editor for the IEEE Transactions on
Information Theory.
\end{IEEEbiographynophoto}
\end{document}